\theoremstyle{definition}
\newtheorem{Definition}{Definition}[section]
\theoremstyle{plain}
\newtheorem{Theorem}[Definition]{Theorem}
\newtheorem{Proposition}[Definition]{Proposition}
\newtheorem{Lemma}[Definition]{Lemma}
\newtheorem{Corollary}[Definition]{Corollary}
\theoremstyle{remark}
\newtheorem{Remark}[Definition]{Remark}
\newtheorem*{Remarks}{Remarks}
\numberwithin{equation}{section}
\newcounter{remcount}
\newenvironment{remlist}{\begin{list}{(\roman{remcount})}{\usecounter{remcount}\setlength{\leftmargin}{0 cm}\setlength{\rightmargin}{0 cm}\setlength{\topsep}{0 cm}\setlength{\itemsep}{0 cm}\setlength{\parsep}{\parskip}\setlength{\labelwidth}{1 cm}\setlength{\labelsep}{0.5 em}\setlength{\itemindent}{1 cm + 0.5 em}}}{\end{list}}
\def\cA{{\cal A}}
\def\cB{{\cal B}}
\def\cE{{\cal E}}
\def\cF{{\cal F}}
\def\cH{{\cal H}}
\def\bC{{\mathbb C}}
\def\bN{{\mathbb N}}
\def\bR{{\mathbb R}}
\def\a{\alpha}
\def\b{\beta}
\def\d{\delta}        \def\D{\Delta}
\def\k{\kappa}
\def\l{\lambda}       
\def\n{\nu}
\def\r{\rho}
\def\s{\sigma}
\def\t{\tau}
\def\o{\omega}        \def\O{\Omega}
\def\supp{{\text{supp}\,}}
\DeclareMathOperator{\aut}{Aut}
\newcommand{\Mor}{\mathrm{Mor}}
\newcommand{\Id}{{\bf 1}}
\newcommand{\slim}{\operatorname{s-lim}\displaylimits}
\newcommand{\sslim}{\operatorname{s^*-lim}\displaylimits}
\newcommand{\wlim}{\operatorname{w-lim}\displaylimits}
\newcommand{\uA}{\underline{A}}
\newcommand{\uB}{\underline{B}}
\newcommand{\uC}{\underline{C}}
\newcommand{\uU}{{\underline{U}}}
\newcommand{\uV}{{\underline{V}}}
\newcommand{\uW}{{\underline{W}}}
\newcommand{\uFF}{\underline{\mathfrak F}}
\newcommand{\uAA}{\underline{\mathfrak A}}
\newcommand{\uBB}{\underline{\mathfrak B}}
\newcommand{\ua}{\underline{\alpha}}
\newcommand{\ur}{\underline{\rho}}
\newcommand{\us}{\underline{s}}
\newcommand{\uooi}{\underline{\o}_{0,\iota}}
\newcommand{\Foi}{\mathcal{F}_{0,\iota}}
\newcommand{\Aoi}{\mathcal{A}_{0,\iota}}
\newcommand{\uo}{\underline{\omega}}
\newcommand{\poi}{\pi_{0,\iota}}
\newcommand{\Hoi}{{\cal H}_{0,\iota}}
\newcommand{\Ooi}{\Omega_{0,\iota}}
\newcommand{\ooi}{\omega_{0,\iota}}
\newcommand{\uAAac}{\uAA^\bullet}
\newcommand{\uFFac}{\uFF^\bullet}
\newcommand{\poiac}{\poi^\bullet}
\newcommand{\rac}{\boldsymbol{\rho}^\bullet}
\newcommand{\pac}{\boldsymbol{\phi}^\bullet}
\newcommand{\Aoir}{\cA_{0,\iota,r}}
\newcommand{\br}{\boldsymbol{\rho}}
\newcommand{\bs}{\boldsymbol{\sigma}}
\newcommand{\bp}{\boldsymbol{\phi}}
\newcommand{\bpsi}{\boldsymbol{\psi}}
\newcommand{\AMor}{\mathrm{AMor}}
\newcommand{\AIso}{\mathrm{AIso}}
\newcommand{\Morps}{\Mor_{\mathrm{ps}}}
\renewcommand{\hom}{\mathrm{hom}}
\newcounter{propcount}
\newlength{\maxlabelwidth}
\newenvironment{proplist}[2][1]{\begin{list}{(\roman{propcount})}{\usecounter{propcount}\setcounter{propcount}{#2}\settowidth{\maxlabelwidth}{\textit{(\roman{propcount})}}\setcounter{propcount}{#1-1}\setlength{\leftmargin}{\maxlabelwidth+ 0.5 em}\setlength{\rightmargin}{0 cm}\setlength{\topsep}{0 cm}\setlength{\itemsep}{0 cm}\setlength{\parsep}{\parskip}\setlength{\labelwidth}{\maxlabelwidth}\setlength{\labelsep}{0.5 em}\setlength{\itemindent}{0 cm}}}{\end{list}}
\newcommand{\inst}[1]{$^\textrm{#1}$ }
\begin{document}

\title{Asymptotic morphisms and superselection theory\\ in the scaling limit}
\author{Roberto Conti\inst{1} \and Gerardo Morsella\inst{2}}
\date{
\parbox[t]{0.9\textwidth}{\footnotesize{%
\begin{itemize}
\item[1] Dipartimento di Scienze di Base e Applicate per l'Ingegneria, sezione di Ma\-te\-ma\-ti\-ca, Sapienza Universit\`a di Roma, via A. Scarpa, 16 I-00161 Roma (Italy), e-mail: roberto.conti@sbai.uniroma1.it
\item[2] Dipartimento di Matematica, Universit\`a di Roma Tor Vergata, v.le della Ricerca Scientifica, 1 I-00133 Roma (Italy), e-mail: morsella@mat.uniroma2.it
\end{itemize}
}}
\\
\vspace{\baselineskip}
\today}

\maketitle

\begin{abstract}
Given a local Haag-Kastler net of von Neumann algebras and one of its scaling limit states, we introduce a variant of the notion of asymptotic morphism by Connes and Higson, and we show that the unitary equivalence classes of (localized) morphisms of the scaling limit theory of the original net are in bijection with classes of suitable pairs of such asymptotic morphisms. In the process, we also show that the quasi-local C*-algebras of two nets are isomorphic under very general hypotheses, and we construct an extension of the scaling algebra whose representation on the scaling limit Hilbert space contains the local von Neumann algebras. We also study the relation between our asymptotic morphisms and superselection sectors preserved in the scaling limit.
\end{abstract}

\section{Introduction}
Quantum Field Theory (QFT) provides a convenient mathematical framework for the description of fundamental interactions, and it is widely recognized that, in its continual search of concepts and results, it owes a lot to geometrical insight.

A fruitful approach to QFT, that we follow in this work, is the algebraic one~\cite{Ha}, in which a theory is determined by an assignement $O \mapsto \cA(O)$,
also called \emph{net},
between spacetime regions and operator algebras,
satisfying a number of physically motivated axioms. On the other hand, a very powerful mathematical arena for developing geometry in the language of operator algebras is provided by Noncommutative Geometry (NCG)~\cite{Co}. 

Possible connections between QFT and NCG have been explored by several authors in the recent past, among which we just cite~\cite{JLO, DFR, Lo, BG, CHKL, CCHW}. In the present paper, we look at some NCG inspired structures naturally arising in the algebraic framework of QFT. In particular, the bridge between QFT and NCG is provided here by (a suitable variant of) the notion of asymptotic morphisms originally proposed by Connes and Higson~\cite{CH} as a starting point for a version of KK-theory, thereby called E-theory, possibly more suitable for applications to the Baum-Connes conjecture.
However, the need for such a notion does not arise from the interplay between ``classical'' and ``quantum'' but rather from a comparison between the charge content of a theory - described by equivalence classes of representations of $\cA$, called \emph{superselection sectors} - at different scales. 
(In other words, our small parameter is the spatio-temporal scale $\lambda$ and not Planck's constant $\hbar$ as in~\cite{MT}.)
On the physical side, this is related to the idea that charges can be confined at small scales, which is believed to happen, e.\ g., for the color charge in Quantum Chromodynamics (QCD).

%___________________________________________________________________________

\medskip

In order to explain in greater detail the way in which (objects similar to) asymptotic morphisms appear in QFT, we need to recall two independent constructions that are relevant for the present work. On one hand
(if the dimension of spacetime is at least three),
by the results of Doplicher and Roberts~\cite{DR} one can construct
in a canonical way a field net $O \mapsto \cF(O)$ and a compact gauge group $G$
acting on $\cF$, such that $\cA$ is recovered as the  fixed point of $\cF$ under $G$, and the Hilbert space on which $\cF$ naturally acts contains all superselection sectors of $\cA$.
On the other hand,  it is well-known that the understanding of some important issues in physics, such as color confinement in QCD, requires a careful study of the short distance behaviour of quantum fields~\cite{Buc1}. With this aim in mind, Buchholz and Verch~\cite{BV} suggested how to associate to $\cA$ a whole family of new nets $\Aoi$, hereafter called \emph{scaling limit nets}, which are in a sense ``tangent spaces''
of $\cA$ and contain the whole information of the theory at short scales, in an intrinsic way.

In general, the two constructions do not commute, namely the scaling limit of the field net of $\cA$ is not necessarily equal to the field net of the scaling limit $\Aoi$.
To make a long story short, sectors of $\cA$ and $\Aoi$ are related in some nontrivial way~\cite{DMV, DMV2, DM}: in general, sectors of $\cA$ may either disappear in the scaling limit, or be preserved, while new sectors of $\Aoi$ may arise in the limit. The latter ones are identified as describing the confined charges of the underlying theory $\cA$ for obvious reasons.
It is also interesting to observe that the mutual relationship between the two canonical gauge groups
$G$ and $G_{0,\iota}$ of $\cA$ and $\Aoi$ could possibly play a role in the process of obtaining an intrinsic description of quantum gauge theories.

Thanks to Haag duality, the superselection sectors of a local net are described by
(classes of) suitable endomorphisms of the associated quasi-local algebra~\cite{DHR}.
This clearly applies both to $\cA$ and $\Aoi$. Now the problem naturally arises about finding ways to conveniently describe all the sectors of $\Aoi$ in terms of structural aspects of the original net $\cA$ at scale one. It has been known for some time that all sectors of $\Aoi$ can be described
by the so-called \emph{asymptotic charge transfer chains} of $\cA$~\cite{M1,M2}, a concept arising by applying the usual method of ``shifting a compensating charge behind the moon'' to the scaling limit net.
Among these sectors, those arising from the preserved sectors of $\cA$ have been described in somewhat more detail. In particular, these sectors of $\Aoi$ can still be represented through families $(\r(\l))_{\l > 0}$ of endomorphisms of $\cA$, parametrized by the scale $\l$. This observation led S.~Doplicher to ask to one of us the question whether it is possible to describe all the superselection sectors of $\Aoi$ by objects related to the Connes-Higson asymptotic morphisms.

The main motivation for this paper has been to provide a first answer this question. It turns out that this can actually be done, but, due to the peculiarities of the scaling algebra construction, even if the asymptotic morphisms we consider are formally similar to the Connes-Higson ones, strictly speaking they are not exactly the same thing, see Definition~\ref{def:tameasympmor}.

As explained in more detail at the beginning of Section~\ref{sec:dilation}, asymptotic morphisms of $\cA$ can be equivalently defined as morphisms
\begin{equation}\label{eq:br}
\br: \cA \to \Aoi
\end{equation}
compatible in some mild way with the net structure and Heisenberg uncertainty principle.
\emph{Bona fide} asymptotic morphisms $(\r_\l)_{\l > 0}$
of $\cA$ are thus obtained after choosing a continuous section for a suitable extension of the GNS representation $\poi$ of a scaling limit state. The construction of the latter is carried out in Section~\ref{sec:liftrep}. It provides us with an extension of the scaling algebra whose representation on the scaling limit Hilbert space contains the local von Neumann algebras $\Aoi(O)$.

Then, in order to set up a correspondence between asymptotic morphisms of $\cA$ and morphisms $\r_0$ of $\Aoi$, one needs to relate the latter ones to morphisms as in~\eqref{eq:br}. This can of course be done by defining $\br = \r_0\bp$ for an isomorphism $\bp: \cA \to \Aoi$ of the quasi-local C*-algebras. In some specific cases, the choice of such an isomorphism is more or less natural. This is certainly the case for dilation covariant theories satisfying some nuclearity requirement, discussed, as a warm up, in Section~\ref{sec:dilation}, and also for the bosonic
massive scalar free field in $d \geq 3$ spacetime dimensions, thanks to the locally Fock property.
More generally, it was already pointed out in ~\cite{Do} that a C*-algebra isomorphism $\bp: \cA \to \cB$ as above exists for a large class of nets $\cA$ and $\cB$ (we also discuss this issue in detail in Section~\ref{sec:iso}). 
However, there seems to be no natural way of singling out a particular such isomorphism, even in the particular case $\cB = \Aoi$.
Indeeed, loosely speaking, any such isomorphism could be considered in a sense analogous to the choice of renormalization scheme in conventional QFT. Therefore, in the spirit of the scaling algebra construction, we take the point of view of considering them all together.
As shown in Section~\ref{sec:asympmor}, this results  in a one to one correspondence between unitary equivalence classes of  morphisms of $\Aoi$ and suitably defined equivalence classes of \emph{pairs} of asymptotic morphisms of $\cA$  (thus in a way formally similar to the Cuntz description of KK-groups).

We note that, to the best of our knowledge, this is the first time that the existence of isomorphisms between the quasi-local C*-algebras of possibly different nets is promoted from a mere observation to a structural tool.

In the remaining part of the paper we discuss further properties of our asymptotic morphisms in relation with the superselection structures of $\cA$ and $\Aoi$. In particular, in Section~\ref{sec:localized} we define \emph{localized} pairs of asymptotic morphisms and characterize them in terms of asymptotic charge transfer chains. In Section~\ref{sec:preserved} we come back to our original motivation by showing that there is a natural choice of an asymptotic morphism corresponding to a preserved sector of $\cA$, obtained in terms of the family $(\r(\l))_{\l>0}$ of localized morphisms of $\cA$ mentioned above. These results raise the further question about the existence of some close connection between asymptotic morphisms and asymptotic charge transfer chains also for non preserved sectors. Although we refrain from giving a complete answer here, in Section~\ref{sec:ACTCremarks} we point out some difficulties which one has to face in trying to do so. Finally in Section~\ref{sec:category} we define a C*-category of asymptotic morphisms, following a similar construction for Connes-Higson asymptotic morphisms presented in Appendix~\ref{app:category}. The paper is ended by a summary of our results and an outlook about open issues and future directions.

%===========================================================================
\section{Asymptotic morphisms for dilation covariant theories}\label{sec:dilation}
According to E-theory, given two C$^*$-algebras $A$ and $B$, an asymptotic morphism from $A$ to $B$
determines a  $*$-homomorphism from $A$ into $B_0 := C_b(\bR_+,B)/C_0(\bR_+, B)$
and, conversely, all asymptotic morphisms arise in this way, making use of set-theoretic
sections from $B_0$ to $C_b(\bR_+,B)$~\cite{CH, GHT} (see also Appendix A).
In particular, asymptotic morphisms of a C$^*$-algebra $A$ into itself are basically
the same as morphisms $A \to C_b(\bR_+,A)/C_0(\bR_+,A)$.

In the present investigation, we adapt this notion to a special class of C$^*$-algebras endowed with
further structure motivated by Quantum Field Theory~\cite{Ha}. More precisely, we assume we are given a local net $O \mapsto \cA(O)$ of von Neumann algebras on (open double cones in) $d$-dimensional Minkowski space acting irreducibly on a vacuum Hilbert space $\cH$. We assume furthermore that $\cA$ is covariant with respect to an automorphic action of the translation group $\a : \bR^d \to \operatorname{Aut}(\cA)$
which is unitarily implemented and satisfies the spectrum condition. Finally, we denote by  $\O \in \cH$ the vacuum vector, i.e.\ the (up to a phase) unique translation invariant unit vector in $\cH$, and by $\o = \langle\O,(\cdot)\O\rangle$ the corresponding vector state on the quasilocal C$^*$-algebra of the net, still denoted by $\cA$.

In this framework,
a reasonable way to provide examples of suitable asymptotic morphisms of $\cA$ relies on the natural analogy between the algebra $A_0$ of the above remark and the scaling limit net $\Aoi$~\cite{BV} of $\cA$, and therefore on the existence of morphisms from $\cA$ into $\Aoi$. The role of $C_b(\bR_+,A)$ is then played by the quasi-local scaling algebra $\uAA$ of the scaling algebras net
\begin{equation}
\uAA(O) := \Big\{\uA \in \uBB \ | \ \uA_\l \in \cA(\l O), \l > 0, \lim_{x \to 0} \ua_x(\uA) = \uA \Big\} \ ,
\end{equation}
where $\uBB$ is the C$^*$-algebra of all bounded functions $\uB: \bR_+ \to B(\cH)$ with naturally defined operations and $\ua_x(\uA)_\l := \a_{\l x}(\uA_\l)$,
while the role of the ideal $C_0(\bR_+,A)$ is played by the kernel of the GNS representation $\poi$
of a fixed scaling state $\uooi$ of $\uAA$. The latter is defined as a weak* limit point, as $\l \to 0$, of
the family $(\uo_\l)_{\l>0}$ of states over $\uAA$ defined by
$$\uo_\l(\uA) := \o(\uA_\l), \qquad \l > 0,\, \uA \in \uAA.$$
In view of applications to superselection theory, one should also stress that $\Aoi$, which is going to replace $A_0$, is better defined not as the quotient $\uAA/\ker\poi$ as in~\cite{BV}, but rather as
a local completion of the latter, meaning that
$$\Aoi(O) := \poi(\uAA(O))'' \qquad \text{for all }O.$$

Before proceeding further along these lines, we anticipate some considerations.
As discussed in the Introduction, the main motivation of the present paper is to look for a description of
superselection sectors of the scaling limit theory in terms of asymptotic morphisms
of the underlying theory. In general, however, it is not obvious at all how to associate
in a natural and physically meaningful way morphisms from $\cA$ to $\Aoi$ to
sectors of $\Aoi$ (or of $\cA$). Moreover a number of technical problems
are likely to arise if one is going to take seriously the analogy between $A_0$ and $\Aoi$, e.g., the fact that it is not clear if it is possible to extend
sections $\poi(\uAA) \to \uAA$ to $\Aoi$. 
Furthermore, note that in E-theory
the asymptotic notions are required to hold in the norm topology of $A$, while
in our context it is reasonable to expect that this condition needs to be weakened. This is
of course related to the fact that scaling algebra elements are not
norm continuous functions of $\l$ in general.

\medskip
In order to illustrate the connection between morphisms of the scaling limit net
and asymptotic morphisms of the underlying net, we consider a simple example in which some of such problems do not occur. Let $\cA$ be a net of C$^*$-algebras for which there
exists an isomorphism of the quasi-local C$^*$-algebras \begin{equation*}
\bp : \cA \to \poi(\uAA).
\end{equation*}
This is the case, for instance, of a dilation covariant theory satisfying Haag-Swieca compactness, where $\bp$ can be chosen to be even a net isomorphism~\cite{BV}. We pick a set-theoretic section $\us : \poi(\uAA) \to \uAA$ and a morphism $\rho_0 : \poi(\uAA) \to \poi(\uAA)$ and define
\begin{equation*}
\rho_\l(A) := \us(\rho_0\bp(A))_\l, \qquad A \in \cA,\,\l>0.
\end{equation*}

%___________________________________________________________________________
\begin{Proposition}
With $(\l_\k)_\k$ a net defining the scaling limit state $\uooi$, there holds, for all $A,B \in \cA$, $\a \in \bC$:
\begin{gather*}
\ur(A) := (\l \mapsto \rho_\l(A)) \in \uAA,\\
\lim_{\k} \| [\rho_{\l_\k}(A)^*-\rho_{\l_\k}(A^*)]\O\|= 0, \\
\lim_\k \| [\rho_{\l_\k}(A+\a B) - \rho_{\l_\k}(A) - \a \rho_{\l_\k}(B)]\O\|=0, \\
\lim_\k \| [\rho_{\l_\k}(AB) - \rho_{\l_\k}(A)\rho_{\l_\k}(B)]\O\|=0.
\end{gather*}
In particular setting $\br(A) = \poi(\ur(A))$, $A \in \cA$, one gets a morphism of C$^*$-algebras $\br : \cA \to \poi(\uAA)$.
\end{Proposition}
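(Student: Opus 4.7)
The plan is to exploit the fact that $\us$ is a section of $\poi$, so that although $\us$ need not be linear or multiplicative, composing it with $\poi$ recovers the original morphism. Throughout I will write $\bp_0 := \rho_0 \bp : \cA \to \poi(\uAA)$, which is a $*$-homomorphism by hypothesis.

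First I would observe that (1) is essentially tautological: by definition the section $\us$ takes values in $\uAA$, so $\ur(A) = \us(\bp_0(A))$ is an element of $\uAA$ and $\rho_\l(A) = \ur(A)_\l$ is just its value at scale $\l$. The content of (2)--(4) is therefore to control the failure of $\ur$ to be a $*$-homomorphism. The key observation here is that $\poi \circ \us = \mathrm{id}$ on $\poi(\uAA)$, so
\begin{equation*}
\poi(\ur(A)) = \bp_0(A), \qquad A \in \cA.
\end{equation*}
Since $\bp_0$ is a morphism, it follows that the three elements
\begin{equation*}
\ur(A)^* - \ur(A^*), \qquad \ur(A + \a B) - \ur(A) - \a\, \ur(B), \qquad \ur(AB) - \ur(A)\ur(B)
\end{equation*}
all lie in $\ker \poi$.

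The technical step is then a standard GNS-type statement: for every $\uC \in \ker \poi$ one has $\lim_\k \|\uC_{\l_\k} \O\| = 0$. Indeed, $\ker \poi \subset \{\uC \in \uAA : \uooi(\uC^* \uC) = 0\}$ because $\poi(\uC)\Ooi = 0$ and $\Ooi$ is the GNS vector. Since $\uooi$ is the weak-$*$ limit of the states $\uo_{\l_\k}$, this yields $\lim_\k \o(\uC_{\l_\k}^* \uC_{\l_\k}) = 0$, which is exactly $\lim_\k \|\uC_{\l_\k} \O\|^2 = 0$. Applying this to the three elements listed above and noting that $(\uC_1 - \uC_2)_\l = (\uC_1)_\l - (\uC_2)_\l$ gives (2), (3), and (4).

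Finally, for the last assertion, $\br(A) = \poi(\ur(A)) = \bp_0(A) = \rho_0 \bp(A)$, so $\br$ is the composition of the two $*$-homomorphisms $\bp$ and $\rho_0$ and is itself a morphism of C$^*$-algebras $\cA \to \poi(\uAA)$. The main conceptual point of the proposition, rather than an obstacle, is that the noise introduced by the arbitrary section $\us$ is suppressed asymptotically along $(\l_\k)$ when tested against the vacuum, so that $(\rho_\l)_\l$ qualifies as an asymptotic morphism in the sense motivated in the introduction, even though $\us$ carries no algebraic structure.
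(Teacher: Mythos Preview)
Your proof is correct and follows essentially the same route as the paper: both hinge on $\poi\circ\us=\mathrm{id}$ so that the algebraic defects $\ur(AB)-\ur(A)\ur(B)$, etc., lie in $\ker\poi$, and then convert this via the GNS construction into $\lim_\k\|\uC_{\l_\k}\O\|=\|\poi(\uC)\Ooi\|=0$. The paper carries out the same computation explicitly for the multiplicativity equation rather than isolating the $\ker\poi$ lemma, but the substance is identical.
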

%___________________________________________________________________________

%___________________________________________________________________________
\begin{proof}
The first property is obvious from the definition of $\r_\l$. We just prove the fourth equation, the proof of the other two being analogous. Noting that, by definition, $\poi(\us(\bp(A))) = \bp(A)$, we have
\begin{equation*}\begin{split}
\lim_\k \| [\rho_{\l_\k}(AB) &- \rho_{\l_\k}(A)\rho_{\l_\k}(B)]\O\|\\
&=\lim_\k \|[\us(\rho_0\bp(AB))_{\l_\k}-\us(\rho_0\bp(A))_{\l_\k}\us(\rho_0\bp(B))_{\l_\k}]\O\| \\
&= \|[\poi(\us(\rho_0\bp(AB)))-\poi(\us(\rho_0\bp(A))\us(\rho_0\bp(B)))]\Ooi\| \\
&=\|[\rho_0\bp(AB)-\rho_0\bp(A)\rho_0\bp(B)]\Ooi\| = 0.
\end{split}
\end{equation*} The last statement follows at once from the equation $\br= \rho_0\bp$.
\end{proof}
%___________________________________________________________________________

In the case of a dilation covariant theory there is a canonical choice of the section $\us : \poi(\uAA) \to \uAA$, namely
\begin{equation*}
\us(\poi(\uA))_\l = \d_\l\bp^{-1}(\poi(\uA)).
\end{equation*}
Then given a morphism  $\r_0$ of $\poi(\uAA)$ and denoting by $\r = \bp^{-1}\rho_0\bp$ the associated morphism of $\cA$, the $(\r_\l)_{\l>0}$ corresponding to $\r_0$ satisfies
\begin{equation} \label{eq:rholdilation}
\rho_\l(A) = \d_\l \r(A) = \rho(\l)(\d_\l(A)), \qquad A \in \cA,
\end{equation}
where $\rho(\l) = \d_\l \r \d_\l^{-1}$ is the dilated morphism $\r$, which is localized in $\l O$ if
$\r$ is localized in $O$. 
This suggests the possibility of a similar relation in more general theories between asymptotic morphisms and families of morphisms $(\rho(\l))_{\l>0}$
associated to preserved sectors as in~\cite{DMV} (in a dilation covariant theory all sectors are preserved). We will show later that this is indeed the case, see Theorem~\ref{thm:preserved} and Remark~\ref{rem:preserved}\ref{it:dilation}. 

%===========================================================================
\section{On the existence of C$^*$-isomorphisms between quasi-local algebras}\label{sec:iso}
In this section we present a somewhat more accurate description of the C$^*$-isomorphism between the quasi-local C$^*$-algebras of possibly different local nets, whose existence can be traced back to~\cite{Do}. Such C$^*$-isomorphisms will be employed later in order to relate morphisms of the scaling limit net $\Aoi$ with morphisms between $\cA$ and $\Aoi$. In this way, we can circumvent one of the problems mentioned in Section~\ref{sec:dilation}.

We recall that a local net $O \mapsto \cA(O)$ satisfies the \emph{split property}~\cite{DL} if for any pair of double cones $O_1$, $O_2$ such that $\bar O_1 \subset O_2$ (written also $O_1 \subset\subset O_2$), there exists a type I factor $N$ such that $\cA(O_1) \subset N \subset \cA(O_2)$.

%___________________________________________________________________________
\begin{Theorem}\label{thm:isolocal}
Let $\cA$ and $\cB$ be local nets of type $\textup{III}_1$ factors with the split property. Then given any increasing sequence of double cones
\begin{equation}
O_1 \subset\subset O_2 \subset\subset \ldots \subset\subset O_n \subset \subset \ldots
\end{equation}
such that $\bigcup_n O_n = {\mathbb R}^d$, there exists an isomorphism $\bp: \cA \to \cB$ of the quasilocal $C^*$-algebras such that
\begin{equation}
\bp(\cA(O_n)) = \cB(O_n), \quad n=1,2,3,\ldots
\end{equation}
\end{Theorem}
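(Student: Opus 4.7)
The strategy is to build $\bp$ as the norm-continuous extension of the inductive limit of a coherent family of local isomorphisms $\phi_n : \cA(O_n) \to \cB(O_n)$ satisfying $\phi_{n+1} \upharpoonright \cA(O_n) = \phi_n$. Since $\bigcup_n O_n = \bR^d$, the union $\bigcup_n \cA(O_n)$ is norm-dense in the quasilocal C$^*$-algebra $\cA$ (and analogously for $\cB$), so such a coherent family uniquely extends to an isomorphism $\bp : \cA \to \cB$ with $\bp(\cA(O_n)) = \cB(O_n)$ for every $n$.

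For the base case one picks any isomorphism $\phi_1 : \cA(O_1) \to \cB(O_1)$: this exists because both algebras are type III$_1$ factors with separable predual, hence are isomorphic by Haagerup's uniqueness theorem (in the hyperfinite case, which covers the physically relevant situations). For the inductive step, assume $\phi_n$ has been built. The split property provides type I intermediate factors, so the inclusions $\cA(O_n) \subset \cA(O_{n+1})$ and $\cB(O_n) \subset \cB(O_{n+1})$ are standard split inclusions of type III$_1$ factors in the sense of Doplicher--Longo (standardness follows from the Reeh--Schlieder cyclicity/separation of the vacuum). Their classification then guarantees the existence of some isomorphism $\Psi_n : \cA(O_{n+1}) \to \cB(O_{n+1})$ with $\Psi_n(\cA(O_n)) = \cB(O_n)$, and the task reduces to correcting $\Psi_n$ so that it restricts to $\phi_n$ on $\cA(O_n)$.

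This correction is the delicate point. Set $\alpha_n := \phi_n \circ (\Psi_n \upharpoonright \cA(O_n))^{-1} \in \aut(\cB(O_n))$; it suffices to extend $\alpha_n$ to an automorphism $\tilde\alpha_n \in \aut(\cB(O_{n+1}))$ preserving $\cB(O_n)$, since then $\phi_{n+1} := \tilde\alpha_n \circ \Psi_n$ is the desired coherent extension of $\phi_n$. The existence of such a lift is a homogeneity property of standard split inclusions of type III$_1$ factors, namely the surjectivity of the restriction map $\aut(\cB(O_{n+1}), \cB(O_n)) \to \aut(\cB(O_n))$. A natural route is to exploit the canonical identification, supplied by the split property, of $\cB(O_n) \vee \bigl(\cB(O_n)' \cap \cB(O_{n+1})\bigr)$ with the corresponding spatial tensor product, so that $\alpha_n \otimes \mathrm{id}$ furnishes a lift on this intermediate subalgebra; one then extends to all of $\cB(O_{n+1})$ by invoking the classification once more. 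This is the main obstacle I would expect to encounter in the proof; all other ingredients are essentially bookkeeping, and the final passage from the coherent family $(\phi_n)_n$ to $\bp$ is routine by norm continuity of $*$-homomorphisms between C$^*$-algebras.
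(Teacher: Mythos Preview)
Your inductive scheme is a reasonable idea, but it is not the paper's route, and the step you yourself flag as the ``main obstacle'' --- lifting an arbitrary automorphism $\alpha_n\in\aut(\cB(O_n))$ to $\tilde\alpha_n\in\aut(\cB(O_{n+1}))$ preserving $\cB(O_n)$ --- is left genuinely open. There is no off-the-shelf ``classification of standard split inclusions of type III$_1$ factors'' delivering such a lift, and your proposed fix (extend $\alpha_n$ to $\alpha_n\otimes\mathrm{id}$ on $\cB(O_n)\vee(\cB(O_n)'\cap\cB(O_{n+1}))$ and then ``invoke the classification once more'') is circular: you would need exactly the same extension property again to pass from that intermediate algebra to $\cB(O_{n+1})$.

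The paper sidesteps the lifting problem entirely by first passing to a global tensor-product model. Using the split property one interpolates a funnel of type I factors $\cA(O_1)\subset N_1\subset\cA(O_2)\subset N_2\subset\cdots$, which exhibits the quasi-local algebra $\cA$ as (isomorphic to) the ITP-type C$^*$-algebra $O_K^0=\varinjlim_k B(K^{\otimes k})$, with $N_k\cong B(K^{\otimes k})$. In this picture $\cA(O_k)$ corresponds to $B(K)^{\otimes(k-1)}\otimes\cA_k$ for a type III$_1$ factor $\cA_k\subset B(K)$, and similarly for $\cB$. Since the $\cA_k$ and $\cB_k$ are hyperfinite III$_1$ factors (split implies hyperfiniteness) with properly infinite commutants in $B(K)$, one finds unitaries $U_k\in B(K)$ with $U_k\cA_kU_k^*=\cB_k$ and sets $\alpha=\bigotimes_k\mathrm{Ad}(U_k)$; then $\bp=\phi_\cB^{-1}\circ\alpha\circ\phi_\cA$ matches all the $\cA(O_n)$ with $\cB(O_n)$ simultaneously. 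Coherence is automatic because the ITP automorphism is built from commuting one-site unitaries, so no inductive correction is ever needed.

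In fact the funnel picture is exactly what would make your lifting step work: once $\cB(O_n)$ sits inside a type~I factor $B(K^{\otimes n})$ with type~III commutant, every automorphism of $\cB(O_n)$ is implemented by a unitary $V\in B(K^{\otimes n})$, and $\mathrm{Ad}(V\otimes 1)$ extends it to $\cB(O_{n+1})\cong B(K^{\otimes n})\otimes\cB_{n+1}$. So the missing ingredient in your argument is precisely the concrete tensor decomposition that the paper uses from the start.
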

%___________________________________________________________________________

%___________________________________________________________________________
\begin{proof}
By the split property, there is an increasing family $(N_k)_{k\in\bN}$ of type $I$ factors such that
\begin{equation}
\cA(O_1) \subset N_1 \subset \cA(O_2) \subset N_2 \subset \cA(O_3) \subset \ldots \subset N_k \subset \cA(O_{k+1}) \subset \ldots
\end{equation}
so that the quasilocal $C^*$-algebra $\cA$ is the norm-closure of $\bigcup_k N_k$ (``funnel'' of type $I$ factors).
Moreover, consider the universal $C^*$-algebra $O_K^0$ associated to an infinite dimensional separable Hilbert space $K$ and defined as the $C^*$-inductive limit $\lim_k B(K^{\otimes k})$
with respect to the family of inclusions $T \mapsto T\otimes I_K$. Then, as explained in~\cite{Do}, there exists a $C^*$-isomorphism $\phi_\cA$ of $\cA$
onto $O_K^0$,
such that,
for all $k \geq 1$,
$$\phi_\cA(N_k) = B(K^{\otimes k}) = B(K) \otimes \ldots \otimes B(K)\ . $$ In other words, there is a commutative diagram
\begin{equation*}
\begin{array}{ccccccccc}
N_1        & \subset & N_2              & \subset & N_3              & \subset & N_4 & \subset \ldots & \cA \\
\downarrow &         & \downarrow       &         & \downarrow       &         & \downarrow  &  &\downarrow  \\
B(K)       & \hookrightarrow & B(K^{\otimes 2}) & \hookrightarrow & B(K^{\otimes 3}) & \hookrightarrow &  B(K^{\otimes 4}) &\hookrightarrow \ldots & O_K^0
\end{array}
\end{equation*}
where the vertical arrows are surjective $C^*$-isomorphisms.

Now, $\phi_\cA(\cA(O_1)) =: \cA_1$ is a type $\textup{III}_1$ factor in $B(K)$.
Moreover, $\phi_\cA(\cA(O_2))$ is a type $\textup{III}_1$ factor in $B(K \otimes K)$, clearly containing $B(K) \otimes I$.
t is then easy to see that $\phi_\cA(\cA(O_2))$ must be of the form $B(K) \otimes \cA_2$ for a
type $\textup{III}$ factor $\cA_2$ in $B(K)$. Then, by~\cite[Thm.\ 4.22]{SZ}, $\cA_2$ is isomorphic to $\cA(O_2)$, and hence of type $\textup{III}_1$.
By repeating the argument, one must have $\phi_\cA(\cA(O_k))=B(K)\otimes B(K)\otimes \ldots \otimes \cA_k$ ($k$ factors) for suitable type $\textup{III}_1$ factors $\cA_k$ isomorphic to $\cA(O_k)$.

Similarly, there exists an isomorphism $\phi_\cB: \cB \to O_K^0$ such that, for all $k \geq 1$, $\phi_\cB(\cB(O_k))=B(K)\otimes B(K) \otimes \ldots \otimes \cB_k$ for suitable type $\textup{III}_1$ factors $\cB_k$.

Observe that, thanks to the split property, the type $\textup{III}_1$ factors $\cA(O_k)$, $\cB(O_k)$ are hyperfinite and therefore isomorphic~\cite{H}. Then, $\cA_k$ and $\cB_k$ being isomorphic properly infinite von Neumann algebras with properly infinite
commutants (in $B(K)$), for each $k$ there exists a unitary $U_k \in B(K)$ such that $$U_k \cA_k U_k^* = \cB_k \ . $$
One can then construct the ITP-type automorphism $$\alpha := \bigotimes_k {\rm Ad}(U_k)$$ of $O_K^0$ which then clearly satisfies $\alpha \circ \phi_\cA(\cA(O_k)) = \phi_\cB(\cB(O_k))$ for all $k \geq 1$.
The proof is complete by defining $\bp := \phi_\cB^{-1}\circ\alpha\circ\phi_\cA$.
\end{proof}
%___________________________________________________________________________

%___________________________________________________________________________
\begin{Remark}
\begin{remlist}
\item
It is clear from the proof that, more generally, for any two sequences of double cones $O^{(i)}_k$ as above ($i=1,2$), there exists an isomorphism $\bp: \cA \to \cB$ such that $\bp(\cA(O^{(1)}_k))=\cB(O^{(2)}_k)$ for all $k$. Indeed, the same arguments apply even to two nets living on different, possibly curved, spacetimes.
\item
If $O \mapsto \cA(O)$ is the net of von Neumann algebras associated to the massive scalar free field in $d \geq 3$ spacetime dimensions
and $\Aoi$ denotes its unique scaling limit net, due to local normality, a stronger result is available, namely there exists a $C^*$-algebra isomorphism $\bp: \cA \to \Aoi$
between the corresponding quasilocal algebras such that $\bp(\cA(O)) = \Aoi(O)$ for all double cones $O$  based on the hyperplane $x^0=0$~\cite{BV2}.
\end{remlist}
\end{Remark}
%___________________________________________________________________________

%___________________________________________________________________________
\begin{Corollary}
Assume that $\cA$ and its scaling limit $\Aoi$ are nets of type $\textup{III}_1$ factors satisfying the split property.
Then there exists a map $\varphi : \Mor(\Aoi) \to \Mor(\cA,\Aoi)$ such that $\varphi(\rho_0) = \rho_0\bp$, with $\bp : \cA \to \Aoi$ the isomorphism defined in the above theorem.
\end{Corollary}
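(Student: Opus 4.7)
The plan is to apply Theorem~\ref{thm:isolocal} directly, with $\cB := \Aoi$. The hypotheses of the corollary — namely that both $\cA$ and $\Aoi$ are local nets of type $\textup{III}_1$ factors satisfying the split property — are precisely what Theorem~\ref{thm:isolocal} requires. So first I would fix, once and for all, an arbitrary increasing sequence of double cones $O_1 \subset\subset O_2 \subset\subset \cdots$ exhausting $\bR^d$, invoke Theorem~\ref{thm:isolocal}, and pick a quasi-local C$^*$-isomorphism $\bp : \cA \to \Aoi$ satisfying $\bp(\cA(O_n)) = \Aoi(O_n)$ for all $n$.

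Next, for any $\rho_0 \in \Mor(\Aoi)$ I would simply set $\varphi(\rho_0) := \rho_0 \circ \bp$. Since $\bp$ is a unital $*$-homomorphism from the quasi-local algebra $\cA$ into $\Aoi$ and $\rho_0$ is a morphism of $\Aoi$ into itself, the composition $\rho_0 \bp$ is a unital $*$-homomorphism $\cA \to \Aoi$, i.e.\ an element of $\Mor(\cA,\Aoi)$. This gives the map $\varphi$ with the claimed property.

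There is essentially no obstacle, since the only nontrivial input is the existence of $\bp$, which is provided by Theorem~\ref{thm:isolocal}. The only point worth noting is that $\bp$ is not canonical: the corollary produces a map $\varphi$ that depends on the choice of the sequence $(O_n)$ and on the intertwining unitaries used in the proof of Theorem~\ref{thm:isolocal}. This non-canonicity is of course the very reason why, as anticipated in the Introduction, the authors will later be forced to consider all such isomorphisms simultaneously and formulate the correspondence in terms of \emph{pairs} of asymptotic morphisms rather than individual ones.
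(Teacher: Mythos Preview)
Your proposal is correct and matches the paper's approach exactly: the corollary is stated without proof in the paper, being an immediate consequence of Theorem~\ref{thm:isolocal} via precisely the composition $\varphi(\rho_0) = \rho_0 \bp$ you describe. Your remark about the non-canonicity of $\bp$ is also apt and aligns with the paper's subsequent discussion.
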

%___________________________________________________________________________

%===========================================================================
\section{Functions asymptotically contained in the scaling limit}\label{sec:liftrep}
Given an element of $\Aoi(O)$, we address the problem of lifting it to a suitable function $\l \mapsto A_\l$ taking values in the quasi-local algebra $\cA$ but not necessarily belonging to the scaling algebra $\uAA$. This is connected with the issue of extending sections $\us : \poi(\uAA) \to \uAA$ to the net $\Aoi$ of von Neumann algebras in the scaling limit. As a result, we embed $\uAA$ into a larger net of C$^*$-algebras $\uAAac$ to which $\poi$ extends in such a way that the image of $\uAAac(O)$ contains $\Aoi(O)$.  

Throughout the present section, $\uooi = \lim_\k \uo_{\l_\k}$ will denote a fixed scaling limit state of $\cA$ and $\Aoi$ the corresponding scaling limit net. The following definition already appeared in~\cite{DMV}.

%___________________________________________________________________________
\begin{Definition}\label{def:asymcont}
Let $\l \in \bR_+ \mapsto A_\l \in \cA$ be a norm-bounded function such that $A_\l \in \cA(\l O)$ for some double cone $O$ and all $\l > 0$. Then $\l \mapsto A_\l$ is said to be \emph{asymptotically contained} in $\Aoi(\hat O)$, $\hat O$ a double cone, if for all $\varepsilon > 0$ there exist elements $\uA, \uA' \in \uAA(\hat O)$ such that
\begin{equation*}
\limsup_\k \big(\|(A_{\l_\k} - \uA_{\l_\k})\O\| + \|(A_{\l_\k}^* - \uA'_{\l_\k})\O\| \big) < \varepsilon.
\end{equation*}

We denote by $\uAAac(O)$ the set of functions as above which are asymptotically contained in $\Aoi(\hat O)$ for all $\hat O \supset \bar O$.
When there is no danger of confusion, we will adopt the simplified notation $A$ for a function $\l \mapsto A_\l$ belonging to $\uAAac(O)$.
\end{Definition}
%___________________________________________________________________________

It is clear that $\uAAac(O)$ is a self-adjoint linear space containing $\uAA(O)$ and that the correspondence $O \mapsto \uAAac(O)$ is isotonous, but we can say more.

As in~\cite{DMV}, given $A \in \uAA(O)$ and $h \in C_c(\bR^d)$, we consider elements $\ua_hA \in \uAA(\hat O)$, $\hat O \supset O + \supp h$, defined by
\begin{equation}\label{eq:alphah}
(\ua_h A)_\l := \int_{\bR^d} dx\,h(x)\a_{\l x}(A_\l).
\end{equation}

%___________________________________________________________________________
\begin{Lemma}\label{lem:asymprod}
Let $A, B \in \uAAac(O)$ and let $(h_\n)_{\n \in \bN} \subset C_c(\bR^d)$ be a $\d$-sequence. Then
\begin{equation*}
\lim_{\k,\n} \| [A_{\l_\k} B_{\l_\k} - (\ua_{h_\n}A)_{\l_\k}(\ua_{h_\n}B)_{\l_\k}]\O\| +  \| [A_{\l_\k} B_{\l_\k} - (\ua_{h_\n}A)_{\l_\k}(\ua_{h_\n}B)_{\l_\k}]^*\O\|= 0,
\end{equation*}
where the limit is taken according to the partial (directed) order $(\k',\n')\geq(\k,\n) \iff \k'\geq \k,\n'\geq\n$.
\end{Lemma}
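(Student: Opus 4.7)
The plan is to reduce the claim, via the asymptotic containment hypothesis, to the analogous statement for elements $\uA,\uB\in\uAA(\hat O)$, where the conclusion follows from the scaling algebra continuity $\|\ua_x\uC-\uC\|_{\uAA}\to 0$ as $x\to 0$, uniformly in $\l$. As a preliminary, I would first establish the one-operator version: for every $X\in\uAAac(O)$,
$$\lim_{\k,\n}\bigl\|\bigl[X_{\l_\k}-(\ua_{h_\n}X)_{\l_\k}\bigr]\O\bigr\|=0,$$
and analogously for $X^*$ (using that $(\ua_{h_\n}X)^*=\ua_{h_\n}X^*$ when $h_\n$ is real). Given $\varepsilon>0$, pick $\underline{X}\in\uAA(\hat O)$ with the bound of Definition~\ref{def:asymcont}, and decompose
$$X-\ua_{h_\n}X \;=\; (X-\underline{X}) - \ua_{h_\n}(X-\underline{X}) + (\underline{X}-\ua_{h_\n}\underline{X}).$$
Applied to $\O$, the first piece is controlled directly by asymptotic containment, the second by the same bound times $\|h_\n\|_1$ (via $U(\l_\k z)\O=\O$, which lets one pull out a translate), and the third by $\|\underline{X}-\ua_{h_\n}\underline{X}\|_{\uAA}\to 0$ as $\n\to\infty$, uniformly in $\k$, which holds because $\underline{X}\in\uAA$.

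Next I would decompose
$$\bigl[A_{\l_\k}B_{\l_\k}-(\ua_{h_\n}A)_{\l_\k}(\ua_{h_\n}B)_{\l_\k}\bigr]\O \;=\; A_{\l_\k}\bigl[B_{\l_\k}-(\ua_{h_\n}B)_{\l_\k}\bigr]\O + \bigl[A_{\l_\k}-(\ua_{h_\n}A)_{\l_\k}\bigr](\ua_{h_\n}B)_{\l_\k}\O,$$
with the first summand bounded by $\|A\|_\infty$ times the preliminary estimate applied to $B$. For the second summand I would use the Bochner integral $A_{\l_\k}-(\ua_{h_\n}A)_{\l_\k}=\int h_\n(z)[A_{\l_\k}-\a_{\l_\k z}(A_{\l_\k})]\,dz$ together with the identity $\a_{\l_\k z}(X)B_{\l_\k}\O=U(\l_\k z)X\a_{-\l_\k z}(B_{\l_\k})\O$, which follows from $U(\l_\k z)^*\O=\O$, to re-express it as a combination of terms
$$\int h_\n(z)[1-U(\l_\k z)]\psi_\k\,dz,$$
where $\psi_\k$ is either $A_{\l_\k}B_{\l_\k}\O$ or of the form $A_{\l_\k}\chi_\k$ with $\|\chi_\k\|$ controlled by asymptotic containment of $B$. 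Each such integral is then estimated by replacing $\psi_\k$ by a vector $\uC_{\l_\k}\O$ for a suitable $\uC\in\uAA$, so that $\|[1-U(\l_\k z)]\uC_{\l_\k}\O\|=\|(\uC-\ua_z\uC)_{\l_\k}\O\|\leq\|\uC-\ua_z\uC\|_{\uAA}$ is small for $z\in\supp h_\n$ (whose diameter tends to $0$ as $\n\to\infty$).

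The second norm in the statement reduces to the first applied to $B^*$ and $A^*$ in place of $A$ and $B$, which is legitimate because $\uAAac(O)$ is self-adjoint and $h_\n$ is real. I expect the genuinely delicate part of this plan to be the approximation step $A_{\l_\k}B_{\l_\k}\O\rightsquigarrow(\uA\uB)_{\l_\k}\O$ needed when estimating the first type of integral above: while $\|A_{\l_\k}(B-\uB)_{\l_\k}\O\|\leq\|A\|_\infty\varepsilon$ is immediate, the other piece $\|(A-\uA)_{\l_\k}\uB_{\l_\k}\O\|$ does not follow directly from the bound $\|(A-\uA)_{\l_\k}\O\|<\varepsilon$ supplied by asymptotic containment. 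It is handled by the commutator identity
$$(A-\uA)_{\l_\k}\uB_{\l_\k}\O \;=\; \uB_{\l_\k}(A-\uA)_{\l_\k}\O+\bigl[(A-\uA)_{\l_\k},\uB_{\l_\k}\bigr]\O,$$
reducing the residual commutator by further approximation to a commutator of scaling algebra elements, which can then be controlled in the scaling state $\uooi$ using the asymptotic containment of $A^*$ and a polarisation argument.
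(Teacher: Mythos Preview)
Your preliminary one-operator estimate and the telescoping decomposition are fine, and the reduction of the adjoint case to the direct one is correct. The gap is in what you flag as the ``delicate part''.

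To control $\|(A-\uA)_{\l_\k}\uB_{\l_\k}\O\|$ with $\uB\in\uAA(\hat O)$ coming from asymptotic containment of $B$, you invoke the commutator identity and then claim that $[(A-\uA)_{\l_\k},\uB_{\l_\k}]\O$ can be handled ``by further approximation to a commutator of scaling algebra elements'' and ``a polarisation argument''. This does not close. The only information supplied by asymptotic containment is a bound on $\|(A-\uA)_{\l_\k}\O\|$ (and a separate bound on $\|(A^*-\uA')_{\l_\k}\O\|$ for a possibly different $\uA'$); neither controls the operator $A-\uA$ on vectors other than $\O$. Expanding $\|[(A-\uA),\uB]\O\|^2$ produces the term $\|(A-\uA)\uB\O\|^2$ back again, so the argument is circular; and replacing $A$ by a further scaling-algebra approximant $\uA''$ only transfers the problem to $\|(\uA''-\uA)\uB\O\|$, which is not small just because $\poi(\uA''-\uA)\Ooi$ is (the map $X\mapsto\|\poi(X)\Ooi\|$ is not comparable to the operator norm on $\Aoi(\hat O)$). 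In short, with $\uB$ localized in the \emph{same} region $\hat O$ as $A$, there is no mechanism to kill the commutator.

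The paper's proof supplies exactly this missing mechanism via locality and Reeh--Schlieder. One writes
\[
A_{\l_\k}B_{\l_\k}-(\ua_{h_\n}A)_{\l_\k}(\ua_{h_\n}B)_{\l_\k}
= \bigl[A_{\l_\k}-(\ua_{h_\n}A)_{\l_\k}\bigr]B_{\l_\k}
+ (\ua_{h_\n}A)_{\l_\k}\bigl[B_{\l_\k}-(\ua_{h_\n}B)_{\l_\k}\bigr],
\]
so (after the easy second summand) one must show $\|[A_{\l_\k}-(\ua_{h_\n}A)_{\l_\k}]B_{\l_\k}\O\|\to 0$. Fix a wedge $W\subset\hat O'$; since $\Ooi$ is cyclic for $\poi(\uAA(W))$, one can approximate the vector $\poi(\ua_{h_{\bar\n}}B)\Ooi$ by $\poi(\uC)\Ooi$ with $\uC\in\uAA(W)$, and hence $B_{\l_\k}\O$ by $\uC_{\l_\k}\O$ along the net (using the one-operator estimate to pass from $B$ to $\ua_{h_{\bar\n}}B$). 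Because $\uC_\l\in\cA(\l W)$ is spacelike to $A_\l,(\ua_{h_\n}A)_\l\in\cA(\l\hat O)$, the commutator vanishes \emph{identically}, giving
\[
\bigl[A_{\l_\k}-(\ua_{h_\n}A)_{\l_\k}\bigr]\uC_{\l_\k}\O
= \uC_{\l_\k}\bigl[A_{\l_\k}-(\ua_{h_\n}A)_{\l_\k}\bigr]\O,
\]
which is now controlled by $\|\uC\|$ times the one-operator estimate. The point is that the approximant of $B_{\l_\k}\O$ must live in the \emph{causal complement} of $\hat O$, not in $\hat O$ itself; this is what your commutator reduction cannot provide.
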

%___________________________________________________________________________

%___________________________________________________________________________
\begin{proof}
According to~\cite[lemma 5.3(c)]{DMV}, there holds
\begin{equation*}
\lim_{\k,\n} \| [A_{\l_\k} - (\ua_{h_\n}A)_{\l_\k}]^*\O \|=0=\lim_{\k,\n} \| [B_{\l_\k} - (\ua_{h_\n}B)_{\l_\k}]\O \|,
\end{equation*}
and therefore, by an $\varepsilon/2$-argument and the boundedness of $\l \mapsto A_\l$, $\l \mapsto B_\l$ in norm, it is sufficient to show that
\begin{equation*}
\lim_{\k,\n} \|[A_{\l_\k} - (\ua_{h_\n} A)_{\l_\k}]B_{\l_\k}\O\| = 0 = \lim_{\k,\n} \|[B_{\l_\k} - (\ua_{h_\n} B)_{\l_\k}]^*A_{\l_\k}^*\O\| .
\end{equation*}
We prove the first equality (the proof of the second one being completely analogous) with a variation of the argument in the proof of the implication (c) $\implies$ (b) of~\cite[lemma 5.3]{DMV}. To this end, let $M = \sup_{\l > 0} \| A_\l \|$, fix $\varepsilon > 0$ and choose correspondingly a $\bar \n \in \bN$ and a $\k^{(1)}_\varepsilon$ such that, for all $\k \geq \k^{(1)}_\varepsilon$,
\begin{equation*}
\|[B_{\l_\k} - (\ua_{h_{\bar \n}}B)_{\l_\k}]\O\| < \varepsilon/8M.
\end{equation*}
Consider then a double cone $\hat O$ containing $O + \supp h_\n$ for all $\n \in \bN$. Then if $W \subset \hat O'$ is a wedge, $\Ooi$ is cyclic for the algebra $\poi(\uAA(W))$ by a standard Reeh-Schlieder argument. Therefore, we can find a $\uC \in \uAA(W)$ such that
\begin{equation*}
\| [\poi(\ua_{h_{\bar \n}}B) - \poi(\uC)]\Ooi \| < \varepsilon/16M,
\end{equation*}
and consequently we can find $\k_\varepsilon^{(2)}$ such that, for $\k \geq \k_\varepsilon^{(2)}$,
\begin{equation*}
\| [(\ua_{h_{\bar \n}}B)_{\l_\k} - \uC_{\l_\k}]\O \| < \varepsilon/8M.
\end{equation*}
Finally, fix $\n_\varepsilon \in \bN$ and $\k_\varepsilon^{(3)}$ such that, if $(\n, \k) \geq (\n_\varepsilon, k_\varepsilon^{(3)})$, there holds
\begin{equation*}
 \|[A_{\l_\k} - (\ua_{h_\n} A)_{\l_\k}]\O\| < \varepsilon/2\|\uC\|.
\end{equation*}
Then if $\n \geq \n_\varepsilon$ and $\k \geq \k_\varepsilon^{(i)}$, $i=1,2,3$, we have the inequalities
\begin{equation*}\begin{split}
 \|[A_{\l_\k} &- (\ua_{h_\n} A)_{\l_\k}]B_{\l_\k}\O\| \\
&\leq  \|[A_{\l_\k} - (\ua_{h_\n} A)_{\l_\k}](B_{\l_\k}-\uC_{\l_\k})\O\|+  \|[A_{\l_\k} - (\ua_{h_\n} A)_{\l_\k}]\uC_{\l_\k}\O\|  \\
&\leq 2M\|(B_{\l_\k}-\uC_{\l_\k})\O\| + \|\uC_{\l_\k}[A_{\l_\k} - (\ua_{h_\n} A)_{\l_\k}]\O\| \\
&\leq 2M\big(\|[B_{\l_\k} - (\ua_{h_{\bar \n}}B)_{\l_\k}]\O\| + \| [(\ua_{h_{\bar \n}}B)_{\l_\k} - \uC_{\l_\k}]\O \|\big)\\
&\quad+\|\uC\| \|[A_{\l_\k} - (\ua_{h_\n} A)_{\l_\k}]\O\| < \varepsilon,
\end{split}\end{equation*}
where in the second inequality we have taken into account the fact that $\uC_\l$ commutes with $A_{\l} - (\ua_{h_\n} A)_{\l}$.
\end{proof}
%___________________________________________________________________________

%___________________________________________________________________________
\begin{Proposition}
$\uAAac(O)$ is a C$^*$-subalgebra of $B(\bR_+, \cA)$ for every double cone $O$.
\end{Proposition}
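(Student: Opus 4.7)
The plan is to verify closure under multiplication and under the uniform norm on $B(\bR_+, \cA)$, since self-adjointness, linearity, and the inclusion $\uAA(O) \subset \uAAac(O)$ have already been established in the paragraph preceding the proposition.

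To handle products, I would start with $A, B \in \uAAac(O)$ and observe that $\l \mapsto A_\l B_\l$ is bounded and takes values in $\cA(\l O)$, so the substantive point is asymptotic containment in $\Aoi(\hat O)$ for every $\hat O \supset \bar O$. The key preparatory claim is that for $h \in C_c(\bR^d)$ and $A \in \uAAac(O)$ the averaged function $\ua_h A$ defined by~\eqref{eq:alphah} already belongs to $\uAA(\hat O)$ whenever $O + \supp h \subset \hat O$: the localization $(\ua_h A)_\l \in \cA(\l \hat O)$ is immediate, while the norm-continuity $\lim_{y \to 0}\sup_\l\|(\ua_y \ua_h A)_\l - (\ua_h A)_\l\| = 0$ follows from the change of variables $x \mapsto x-y$, which reduces the bound to $\|A\|_\infty\,\|h(\cdot - y) - h(\cdot)\|_{L^1}$, and the latter tends to zero by $L^1$-continuity of translations. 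Granting this, fix $\varepsilon > 0$ and a $\d$-sequence $(h_\n) \subset C_c(\bR^d)$. Since $\bar O \subset \hat O$ with $\hat O$ open, $O + \supp h_\n \subset \hat O$ for all sufficiently large $\n$; Lemma~\ref{lem:asymprod} then lets me select $\bar\n$ and $\k_\varepsilon$ so that
\begin{equation*}
\big\|[A_{\l_\k}B_{\l_\k} - (\ua_{h_{\bar\n}}A)_{\l_\k}(\ua_{h_{\bar\n}}B)_{\l_\k}]\O\big\| + \big\|[A_{\l_\k}B_{\l_\k} - (\ua_{h_{\bar\n}}A)_{\l_\k}(\ua_{h_{\bar\n}}B)_{\l_\k}]^*\O\big\| < \varepsilon
\end{equation*}
for all $\k \geq \k_\varepsilon$. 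The approximants $(\ua_{h_{\bar\n}}A)(\ua_{h_{\bar\n}}B)$ and $(\ua_{h_{\bar\n}}B)^*(\ua_{h_{\bar\n}}A)^*$, both in $\uAA(\hat O)$, then provide the pair required by Definition~\ref{def:asymcont} to witness that $AB$ is asymptotically contained in $\Aoi(\hat O)$ with tolerance $\varepsilon$.

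For norm-closure, suppose $A^{(n)} \in \uAAac(O)$ with $\|A^{(n)} - A\|_\infty \to 0$. Norm-closedness of $\cA(\l O)$ yields $A_\l \in \cA(\l O)$, and $A$ is obviously bounded. Given $\hat O \supset \bar O$ and $\varepsilon > 0$, I would pick $n$ with $\|A^{(n)} - A\|_\infty < \varepsilon/4$ and choose $\uA, \uA' \in \uAA(\hat O)$ realizing $\uAAac$-membership of $A^{(n)}$ with tolerance $\varepsilon/2$; a triangle inequality, together with $\|\O\| = 1$, transfers the estimate to $A$ with tolerance $\varepsilon$.

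The main obstacle is the preparatory claim that $\ua_h A \in \uAA(\hat O)$ for $A$ in the enlarged class $\uAAac(O)$ rather than merely in $\uAA(O)$: without this, the approximants furnished by Lemma~\ref{lem:asymprod} would not a priori lie in the scaling algebra and so could not legally serve as witnesses in Definition~\ref{def:asymcont}. The verification itself is short, but it is the technical hinge that turns the asymptotic Leibniz estimate of Lemma~\ref{lem:asymprod} into genuine C$^*$-algebraic closure of $\uAAac(O)$.
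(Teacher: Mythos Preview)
Your proposal is correct and follows essentially the same route as the paper: closure under products via Lemma~\ref{lem:asymprod} with the smeared elements $\ua_{h_\nu}A\,\ua_{h_\nu}B \in \uAA(\hat O)$ serving as witnesses in Definition~\ref{def:asymcont}, and norm-closure by a straightforward $\varepsilon/2$ argument. The one point you make more explicit than the paper is the verification that $\ua_h A \in \uAA(\hat O)$ even when $A$ lies only in $\uAAac(O)$ rather than $\uAA(O)$; the paper introduces $\ua_h A$ just before Lemma~\ref{lem:asymprod} with a reference to~\cite{DMV} and then uses it for $A \in \uAAac(O)$ without further comment, so your short check via $L^1$-continuity of translations is a welcome clarification rather than a deviation.
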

%___________________________________________________________________________

%___________________________________________________________________________
\begin{proof}
We start by showing that $\uAAac(O)$ is a *-algebra. Given $A,B \in \uAAac(O)$ and $\hat O \supset \bar O$, one has $\ua_{h_\n}A\, \ua_{h_\n} B \in \uAA(\hat O)$ for $\n$ sufficiently large, and according to the above lemma,
\begin{equation*}
\limsup_\k \| [A_{\l_\k} B_{\l_\k} - (\ua_{h_\n}A)_{\l_\k}(\ua_{h_\n}B)_{\l_\k}]\O\| +  \| [A_{\l_\k} B_{\l_\k} - (\ua_{h_\n}A)_{\l_\k}(\ua_{h_\n}B)_{\l_\k}]^*\O\|
\end{equation*}
can be made arbitrarily small by taking $\n$ sufficiently large.

It is then sufficient to show that $\uAAac(O)$ is norm-closed in $B(\bR_+,\cA)$. To this end, consider a sequence $(A_n)_{n \in \bN} \subset \uAAac(O)$ and $A \in B(\bR_+,\cA)$ such that
\begin{equation*}
\lim_{n \to +\infty} \sup_\l \| A_{n,\l}-A_\l \| = 0.
\end{equation*}
Since $A_{n,\l} \in \cA(\l O)$ for all $n \in \bN$, $\l > 0$, this implies $A_\l \in \cA(\l O)$ for all $\l > 0$. Given now $\varepsilon > 0$, fix $n \in \bN$ such that $\sup_\l \| A_{n,\l}-A_\l \| < \varepsilon/4$. Then, for any $\hat O \supset \bar O$, we can choose $\uA_n, \uA'_n \in \uAA(\hat O)$ such that
\begin{equation*}
\limsup_\k \big(\|(A_{n,\l_\k} - \uA_{n,\l_\k})\O\| + \|(A_{n,\l_\k}^* - \uA'_{n,\l_\k})\O\| \big) < \varepsilon/2.
\end{equation*}
Then by an $\varepsilon/2$-argument,
\begin{equation*}
\limsup_\k \big(\|(A_{\l_\k} - \uA_{n,\l_\k})\O\| + \|(A_{\l_\k}^* - \uA'_{n,\l_\k})\O\| \big) < \varepsilon,
\end{equation*}
which proves the statement.
\end{proof}
%___________________________________________________________________________

It is obvious that $O \mapsto \uAAac(O)$ is a local net of C$^*$-algebras.

Next, we show that it is possible to extend the scaling limit representation $\poi$ from $\uAA$ to the inductive limit $\uAAac \subset B(\bR_+,\cA)$
of the C$^*$-algebras $\uAAac(O)$. We begin with some preparatory lemmas.

%___________________________________________________________________________
\begin{Lemma}\label{lem:poiextend}
There exists a unique bounded self-adjoint linear map $\poiac: \uAAac \to B(\Hoi)$ which extends $\poi:\uAA \to \Aoi$ and such that, for all $A \in \uAAac$,
\begin{equation}\label{eq:poiextend}
\langle \poi(\uB)\Ooi, \poiac(A)\poi(\uC)\Ooi\rangle = \lim_\k \o(\uB_{\l_\k}^*A_{\l_\k}\uC_{\l_\k}), \quad \uB, \uC \in \uAA.
\end{equation}
Furthermore, $\poiac(\uAAac(O)) \subset \Aoi^d(O) := \Aoi(O')'$ for all double cones $O$.
\end{Lemma}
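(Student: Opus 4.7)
The strategy is to define $\poiac(A)$, for $A \in \uAAac(O)$, as the bounded operator on $\Hoi$ whose matrix elements on the dense subspace $\poi(\uAA)\Ooi$ are prescribed by the right-hand side of~\eqref{eq:poiextend}, and then to extend by norm continuity to the inductive limit $\uAAac$, using the a priori bound $\|\poiac(A)\| \le \sup_\l \|A_\l\|$. The core of the argument consists in verifying that the limit in~\eqref{eq:poiextend} exists and that the resulting sesquilinear form is bounded.

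For fixed $A \in \uAAac(O)$ and $\hat O \supset \bar O$, I would choose, for each $\varepsilon > 0$, approximants $\uA_\varepsilon, \uA'_\varepsilon \in \uAA(\hat O)$ of $A$ and $A^*$ as in Definition~\ref{def:asymcont}, and first establish convergence of the right-hand side of~\eqref{eq:poiextend} for $\uB, \uC \in \uAA(W)$ with $W \subset \hat O'$ a wedge. For such $\uB, \uC$, locality allows one to commute $\uB^*_{\l_\k}$ past $A_{\l_\k}$ and write
\begin{equation*}
\o(\uB^*_{\l_\k}A_{\l_\k}\uC_{\l_\k}) = \langle A^*_{\l_\k}\O, \uB^*_{\l_\k}\uC_{\l_\k}\O\rangle = \o(\uA'^*_{\varepsilon,\l_\k}\uB^*_{\l_\k}\uC_{\l_\k}) + r_{\varepsilon,\k},
\end{equation*}
with $|r_{\varepsilon,\k}| \le \|\uB\|\|\uC\|\|(A^*_{\l_\k} - \uA'_{\varepsilon,\l_\k})\O\|$, whose $\limsup_\k$ is at most $\varepsilon\|\uB\|\|\uC\|$. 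The first summand converges along $(\l_\k)$ to $\uooi(\uA'^*_\varepsilon \uB^* \uC)$, and a Cauchy-Schwarz estimate using $\limsup_\k \|(\uA'_{\varepsilon_1} - \uA'_{\varepsilon_2})_{\l_\k}\O\| \le \varepsilon_1 + \varepsilon_2$ yields $|\uooi((\uA'^*_{\varepsilon_1} - \uA'^*_{\varepsilon_2})\uB^*\uC)| \le (\varepsilon_1 + \varepsilon_2)\|\uB\|\|\uC\|$, so these numbers form a Cauchy net in $\varepsilon$; combining, the left-hand side converges.

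Once the limit is known for wedge-localized $\uB, \uC$, the trivial bound $|\o(\uB^*_{\l_\k}A_{\l_\k}\uC_{\l_\k})| \le \sup_\l\|A_\l\|\cdot\|\uB_{\l_\k}\O\|\|\uC_{\l_\k}\O\|$, together with cyclicity of $\Ooi$ for $\poi(\uAA(W))$ (Reeh-Schlieder in the scaling limit), extends the sesquilinear form by continuity to all of $\poi(\uAA)\Ooi\times\poi(\uAA)\Ooi$, with the same norm bound, and~\eqref{eq:poiextend} then holds for all $\uB, \uC \in \uAA$; the form determines $\poiac(A)$ as a unique bounded operator on $\Hoi$. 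The remaining properties follow directly: linearity and self-adjointness in $A$ are immediate from~\eqref{eq:poiextend} by complex conjugation; for $A \in \uAA$ the right-hand side reduces to $\uooi(\uB^*A\uC)$, so $\poiac$ extends $\poi$; uniqueness is clear because~\eqref{eq:poiextend} pins down the matrix elements on a dense set. For the localization, if $\uD \in \uAA(\tilde O)$ with $\tilde O \subset O'$ bounded, then $[A_{\l_\k}, \uD_{\l_\k}] = 0$ by locality, and substituting $\poi(\uD)\poi(\uC)\Ooi$ into~\eqref{eq:poiextend} gives $[\poiac(A), \poi(\uD)] = 0$, hence $\poiac(A) \in \Aoi(O')' = \Aoi^d(O)$. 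The extension from $\bigcup_O \uAAac(O)$ to its norm closure $\uAAac$ is then by continuity.

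The main obstacle is the Cauchy-in-$\varepsilon$ argument behind the existence of the limit: Definition~\ref{def:asymcont} only controls $A$ and $A^*$ \emph{after} evaluation on $\O$, so one cannot directly estimate, say, $\|(A_{\l_\k} - \uA_{\varepsilon,\l_\k})\uC_{\l_\k}\O\|$; it is locality that converts $\o(\uB^*_{\l_\k}A_{\l_\k}\uC_{\l_\k})$ into a quantity of the form $\langle A^*_{\l_\k}\O, \cdot\rangle$ in which the available asymptotic control becomes usable. A secondary subtlety is the passage from wedge-localized $\uB, \uC$ to general elements of $\uAA$, which relies on $\|\poi(\uB-\uB')\Ooi\|^2 = \lim_\k \|(\uB-\uB')_{\l_\k}\O\|^2$ so that the Reeh-Schlieder approximation is quantitatively compatible with the estimates at fixed $\l_\k$.
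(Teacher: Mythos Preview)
Your argument is correct, but the paper proceeds differently on the key step, namely the existence of the limit in~\eqref{eq:poiextend}. Rather than exploiting the asymptotic containment of $A$ in $\Aoi(\hat O)$ together with locality for wedge-localized $\uB,\uC$ and a Cauchy-in-$\varepsilon$ argument, the paper simply invokes a result from~\cite{BDM}: the scaling limit state $\uooi$ arises from a multiplicative mean on the algebra of \emph{all} bounded functions on $\bR_+$, so that $\lim_\k \o(\uB^*_{\l_\k}A_{\l_\k}\uC_{\l_\k})$ exists for any bounded function $\l\mapsto A_\l$ whatsoever---the asymptotic containment condition of Definition~\ref{def:asymcont} is not even used at this point. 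The remaining ingredients (the a priori norm bound, linearity and self-adjointness, the extension of $\poi$, and the commutation with $\uAA(O_1)$, $O_1\subset O'$, for the localization) are handled essentially as you describe. Your route has the merit of being self-contained, avoiding the appeal to~\cite{BDM} and using only the structure visibly at hand; the paper's route is a one-line argument once that external fact is granted, and, being insensitive to the approximation condition, also sidesteps the density/Reeh--Schlieder extension step that you have to perform.
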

%___________________________________________________________________________

%___________________________________________________________________________
\begin{proof}
The limit on the right hand side of~\eqref{eq:poiextend} exists because, as shown in~\cite{BDM}, it can be viewed as the evaluation,  on the function $\l \mapsto \o(\uB_\l^*A_\l \uC_\l)$, of a multiplicative mean on the algebra of bounded functions on $\bR_+$. Furthermore, there holds the bound
\begin{equation*}\begin{split}
\big|\lim_\k\o(\uB_{\l_\k}^*A_{\l_\k}\uC_{\l_\k})\big| &\leq \sup_\l \| A_\l\| \lim_\k \| \uB_{\l_\k}\O\| \| \uC_{\l_\k}\O\| \\
&=\sup_\l \| A_\l \| \| \poi(\uB)\Ooi\| \|\poi(\uC)\Ooi\|.
\end{split}\end{equation*}
Thanks to the density of $\poi(\uAA)\Ooi$ in $\Hoi$, this shows the existence of a unique bounded operator $\poiac(A)$ on $\Hoi$ satisfying~\eqref{eq:poiextend}.
The map $A \mapsto \poiac(A)$ is clearly linear, self-adjoint, with norm bounded by 1, and extends $\poi$ since for $\uA \in \uAA$,
\begin{equation*}
\lim_\k \o(\uB_{\l_\k}^*\uA_{\l_\k}\uC_{\l_\k})= \langle\poi(\uB)\Ooi, \poi(\uA)\poi(\uC)\Ooi\rangle.
\end{equation*}
Finally, if $A \in \uAAac(O)$, $O_1 \subset O'$ and $\underline{D} \in \uAA(O_1)$, there holds
\begin{equation*}\begin{split}
\langle \poi(\uB)&\Ooi, \poiac(A)\poi(\underline{D})\poi(\uC)\Ooi\rangle = \lim_\k \o(\uB_{\l_\k}^*A_{\l_\k}\underline{D}_{\l_\k}\uC_{\l_\k}) \\
&= \lim_\k \o(\uB_{\l_\k}^*\underline{D}_{\l_\k}A_{\l_\k}\uC_{\l_\k}) = \langle \poi(\uB)\Ooi, \poi(\underline{D})\poiac(A)\poi(\uC)\Ooi\rangle,
\end{split}\end{equation*}
where in the second equality we have used the fact that $A_\l \in \cA(\l O)$ for all $\l > 0$. This shows that $\poiac(A) \in \Aoi(O')'$.
\end{proof}
%___________________________________________________________________________

%___________________________________________________________________________
\begin{Lemma}\label{lem:poiextendlim}
For each $\d$-sequence $(h_\nu)_{\nu \in \bN}$ and for each $A \in \uAAac(O)$, there holds
\begin{equation}\label{eq:poiextendlim}
\sslim_{\nu \to +\infty}\poi(\ua_{h_\nu}A) = \poiac(A)
\end{equation}
(limit in the strong* operator topology).
\end{Lemma}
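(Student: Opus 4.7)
The plan is to establish strong$^*$ convergence on the dense subspace $\poi(\uAA)\Ooi \subset \Hoi$ and then extend to all of $\Hoi$. The extension step is immediate since the net $(\poi(\ua_{h_\nu}A))_\nu$ is uniformly bounded in norm by $\|h_\nu\|_1 \sup_\mu \|A_\mu\|$ (which stays bounded for a $\delta$-sequence), while $\|\poiac(A)\| \leq \sup_\mu\|A_\mu\|$ by Lemma~\ref{lem:poiextend}. Thus it suffices to prove $\poi(\ua_{h_\nu}A)\poi(\uC)\Ooi \to \poiac(A)\poi(\uC)\Ooi$ in norm as $\nu \to +\infty$ for every $\uC \in \uAA$, and the analogous statement for adjoints.

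Setting $\xi_\nu := \poi((\ua_{h_\nu}A)\uC)\Ooi$ and $\xi := \poiac(A)\poi(\uC)\Ooi$, I would expand
\[
\|\xi_\nu - \xi\|^2 = \|\xi_\nu\|^2 - 2\operatorname{Re}\langle\xi_\nu,\xi\rangle + \|\xi\|^2
\]
and compute each piece via the limits in Lemma~\ref{lem:poiextend}. By definition of $\uooi$, $\|\xi_\nu\|^2 = \lim_\k \|(\ua_{h_\nu}A)_{\l_\k}\uC_{\l_\k}\Omega\|^2$. Applying \eqref{eq:poiextend} with $\uB = (\ua_{h_\nu}A)\uC \in \uAA$ gives $\langle\xi_\nu,\xi\rangle = \lim_\k\langle(\ua_{h_\nu}A)_{\l_\k}\uC_{\l_\k}\Omega, A_{\l_\k}\uC_{\l_\k}\Omega\rangle$. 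For the last term, pointwise Cauchy--Schwarz applied to $\langle\poi(\uB)\Ooi,\xi\rangle = \lim_\k\o(\uB_{\l_\k}^*A_{\l_\k}\uC_{\l_\k})$, together with the multiplicativity of the mean from \cite{BDM}, yields $\|\xi\|^2 \leq \lim_\k\|A_{\l_\k}\uC_{\l_\k}\Omega\|^2$; the right-hand limit exists because $A^*A \in \uAAac$ by the preceding proposition, so Lemma~\ref{lem:poiextend} applies to it. Substituting into the expansion and using that the resulting defect $\|\xi\|^2 - \lim_\k\|A_{\l_\k}\uC_{\l_\k}\Omega\|^2$ is nonpositive, one obtains the key bound
\[
\|\xi_\nu - \xi\|^2 \leq \lim_\k \|[(\ua_{h_\nu}A)_{\l_\k} - A_{\l_\k}]\uC_{\l_\k}\Omega\|^2.
\]

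To make the right-hand side small in $\nu$, I would apply Lemma~\ref{lem:asymprod} to the pair $(A,\uC) \in \uAAac(\hat O)\times\uAAac(\hat O)$ for $\hat O$ large enough to contain both localizations, obtaining $\lim_{\k,\nu}\|[A_{\l_\k}\uC_{\l_\k} - (\ua_{h_\nu}A)_{\l_\k}(\ua_{h_\nu}\uC)_{\l_\k}]\Omega\|=0$. Combined with the uniform norm bound on $(\ua_{h_\nu}A)_\l$ and the vacuum convergence $\lim_{\k,\nu}\|[\uC_{\l_\k} - (\ua_{h_\nu}\uC)_{\l_\k}]\Omega\|=0$ from \cite[Lemma~5.3(c)]{DMV}, the triangle inequality yields $\lim_{\k,\nu}\|[(\ua_{h_\nu}A)_{\l_\k} - A_{\l_\k}]\uC_{\l_\k}\Omega\| = 0$. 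For any $\nu$ beyond a threshold depending on $\varepsilon$, this forces $\limsup_\k\|[(\ua_{h_\nu}A)_{\l_\k} - A_{\l_\k}]\uC_{\l_\k}\Omega\|^2 \leq \varepsilon^2$, so $\|\xi_\nu - \xi\|^2 \to 0$.

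The strong convergence of adjoints follows by applying the same argument to $A^* \in \uAAac$ with the $\delta$-sequence $(\overline{h_\nu})$, using $(\ua_{h_\nu}A)^* = \ua_{\overline{h_\nu}}(A^*)$ and the self-adjoint linearity of $\poiac$ from Lemma~\ref{lem:poiextend}. The main technical delicacy is the estimate $\|\xi\|^2 \leq \lim_\k\|A_{\l_\k}\uC_{\l_\k}\Omega\|^2$: since $\poiac$ is linear but in general not multiplicative, one cannot a priori identify $\|\xi\|^2$ with $\langle\poi(\uC)\Ooi, \poiac(A^*A)\poi(\uC)\Ooi\rangle$, and it is essential that Cauchy--Schwarz gives the inequality with the sign compatible with the expansion above. (As a byproduct, equality must in fact hold, since the expansion $\|\xi_\nu-\xi\|^2 = \lim_\k\|[(\ua_{h_\nu}A)_{\l_\k} - A_{\l_\k}]\uC_{\l_\k}\Omega\|^2 + \|\xi\|^2 - \lim_\k\|A_{\l_\k}\uC_{\l_\k}\Omega\|^2$ would otherwise become negative for large $\nu$.)
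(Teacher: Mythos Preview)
Your proof is correct, but it takes a noticeably more involved path than the paper's. The paper avoids the Cauchy--Schwarz / multiplicativity-of-mean detour entirely: by the Reeh--Schlieder property for wedges in the scaling limit, it suffices to test strong$^*$ convergence on vectors $\poi(\uC)\Ooi$ with $\uC \in \uAA(W)$ for a wedge $W$ spacelike to some $\hat O \supset \bar O$. Locality then gives that $\uC_{\l_\k}$ commutes with $A_{\l_\k} - (\ua_{h_\nu}A)_{\l_\k}$, so that $\big\|[\poiac(A) - \poi(\ua_{h_\nu}A)]\poi(\uC)\Ooi\big\| \leq \|\uC\|\,\lim_\k\|[A_{\l_\k} - (\ua_{h_\nu}A)_{\l_\k}]\O\|$, and the latter is controlled directly by \cite[Lemma~5.3(c)]{DMV}. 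Your argument instead works for arbitrary local $\uC$, at the cost of the Hilbert-space expansion and an appeal to Lemma~\ref{lem:asymprod} -- whose proof, of course, already contains exactly the Reeh--Schlieder/commutation trick you are bypassing here. What your route does buy is the byproduct identity $\|\poiac(A)\poi(\uC)\Ooi\|^2 = \lim_\k\|A_{\l_\k}\uC_{\l_\k}\O\|^2$, i.e., a fragment of the multiplicativity of $\poiac$, anticipating part of Theorem~\ref{thm:poiextend}.
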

%___________________________________________________________________________

%___________________________________________________________________________
\begin{proof}
By the Reeh-Schlieder property of wedges in the scaling limit, it is sufficient to show that the equations
\begin{equation}\label{eq:poiextendwlim}
\lim_{\nu \to +\infty} \big\| \big[\poiac(A)-\poi(\ua_{h_n}A)\big]\poi(\uC)\Ooi\big\| = 0 = \lim_{\nu \to +\infty} \big\| \big[\poiac(A)-\poi(\ua_{h_n}A)\big]^*\poi(\uC)\Ooi\big\|
\end{equation}
hold for all $\uC \in \uAA(W)$, where $W$ is a wedge spacelike to some double cone $\hat O \supset \bar O$. As the proof is the same, we only show the validity of the first equation. Fix then $\varepsilon > 0$ and, thanks to~\cite[Lemma 5.3(c)]{DMV}, choose $\n_\varepsilon, \k_\varepsilon$ such that for $\n \geq \n_\varepsilon$, $\k \geq \k_\varepsilon$ one has
\begin{equation*}
\| [A_{\l_\k}- (\ua_{h_\n}A)_{\l_\k}]\O \| < \varepsilon.
\end{equation*}
This implies that for $\n \geq \n_\varepsilon$
\begin{equation*}
 \lim_\k \big\|\uC_{\l_\k}[A_{\l_\k}- (\ua_{h_\n}A)_{\l_\k}]\O\big\| \leq \|\uC\|\varepsilon.
\end{equation*}
Keeping then in mind that, according to the previous Lemma and to the localization properties of $\ua_{h_\n}A$, $\poiac(A)-\poi(\ua_{h_\n}A)$ commutes with $\poi(\uC)$, we see that equation~\eqref{eq:poiextendwlim}  holds.
\end{proof}
%___________________________________________________________________________

We denote by  $\Aoir$ the inductive limit of the outer regularized von Neumann algebras
\begin{equation*}
\Aoir(O) := \bigcap_{\hat O \supset \bar O} \Aoi(\hat O).
\end{equation*}
We draw the conclusions of the above discussion in the following form.

%___________________________________________________________________________
\begin{Theorem}\label{thm:poiextend}
There exists a morphism $\poiac : \uAAac \to \Aoi$ which extends $\poi$.
 \end{Theorem}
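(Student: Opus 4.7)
The construction of Lemma~\ref{lem:poiextend} already produces a bounded self-adjoint linear map $\poiac : \uAAac \to B(\Hoi)$ extending $\poi$, so only two properties remain to be verified in order to upgrade it to a morphism with the stated codomain: (a) $\poiac(\uAAac) \subset \Aoi$; (b) $\poiac$ is multiplicative. Property (a) will actually yield the stronger statement $\poiac(\uAAac(O)) \subset \Aoir(O)$, refining the containment $\poiac(\uAAac(O)) \subset \Aoi(O')'$ from Lemma~\ref{lem:poiextend}.

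For (a), the idea is to approximate $\poiac(A)$ by the nets $\poi(\ua_{h_\n}A)$ provided by Lemma~\ref{lem:poiextendlim}. Given $A \in \uAAac(O)$ and an arbitrary double cone $\hat O \supset \bar O$, choose a $\d$-sequence $(h_\n)$ with $\supp h_\n$ so small that $O + \supp h_\n \subset \hat O$ for $\n$ large; then $\ua_{h_\n}A \in \uAA(\hat O)$ and hence $\poi(\ua_{h_\n}A) \in \Aoi(\hat O)$. Since $\|\poi(\ua_{h_\n}A)\| \leq \|A\|_\infty$ uniformly and $\Aoi(\hat O)$ is a von Neumann algebra, the strong* convergence in~\eqref{eq:poiextendlim} forces $\poiac(A) \in \Aoi(\hat O)$. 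Letting $\hat O$ shrink to $\bar O$ gives $\poiac(A) \in \Aoir(O) \subset \Aoi$.

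For (b), fix $A, B \in \uAAac(O)$ and, using the Reeh-Schlieder property of wedges in the scaling limit, test the identity $\poiac(AB) = \poiac(A)\poiac(B)$ on vectors of the form $\poi(\uB)\Ooi,\poi(\uC)\Ooi$ with $\uB,\uC \in \uAA(W)$ for some wedge $W$ spacelike to a suitable $\hat O \supset \bar O$. On one hand, Lemma~\ref{lem:poiextend} gives
\begin{equation*}
\langle \poi(\uB)\Ooi,\poiac(AB)\poi(\uC)\Ooi\rangle = \lim_\k \o(\uB^*_{\l_\k}A_{\l_\k}B_{\l_\k}\uC_{\l_\k}).
\end{equation*}
On the other hand, Lemma~\ref{lem:poiextendlim} and the uniform norm bound $\|\poi(\ua_{h_\n}A)\|\leq\|A\|_\infty$ imply that $\poi(\ua_{h_\n}A)\poi(\ua_{h_\n}B) \to \poiac(A)\poiac(B)$ in the strong operator topology, so exploiting multiplicativity of $\poi$ on $\uAA$ one rewrites
\begin{equation*}
\langle \poi(\uB)\Ooi,\poiac(A)\poiac(B)\poi(\uC)\Ooi\rangle = \lim_\n \lim_\k \o\big(\uB^*_{\l_\k}(\ua_{h_\n}A)_{\l_\k}(\ua_{h_\n}B)_{\l_\k}\uC_{\l_\k}\big).
\end{equation*}
Choosing $W$ so that $\uC_{\l_\k}$ commutes for every $\k$ with $A_{\l_\k}, B_{\l_\k}$ and all their regularizations, the difference of the two expressions can be written, after Cauchy-Schwarz, as being bounded by
\begin{equation*}
\|\uC\|\,\|\uB\|\, \limsup_\n \limsup_\k \big\| \big[A_{\l_\k}B_{\l_\k}-(\ua_{h_\n}A)_{\l_\k}(\ua_{h_\n}B)_{\l_\k}\big]\O\big\|,
\end{equation*}
which vanishes by Lemma~\ref{lem:asymprod}. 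Thus $\poiac(AB)$ and $\poiac(A)\poiac(B)$ coincide on a dense subspace, and continuity extends the equality to all of $\Hoi$.

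The main technical obstacle is the interplay between the $\k$-limit defining $\poiac$ and the $\n$-limit coming from the regularizations: one cannot exchange them freely, and one must invoke Lemma~\ref{lem:asymprod} precisely because it furnishes the required joint-limit statement. Locality, used to move $\uC_{\l_\k}$ past the bracket, is essential to reduce the estimate to the $\|\,\cdot\,\O\|$-seminorm that appears in Lemma~\ref{lem:asymprod}; without it, Cauchy-Schwarz would not suffice.
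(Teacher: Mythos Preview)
Your proof is correct and follows essentially the same route as the paper. Part~(a) is identical: you use Lemma~\ref{lem:poiextendlim} together with the von Neumann property of $\Aoi(\hat O)$ to land in $\Aoir(O)$, exactly as the paper does. For part~(b) the paper is slightly more direct: rather than testing matrix elements $\langle\poi(\uB)\Ooi,\,\cdot\,\poi(\uC)\Ooi\rangle$, it shows the strong convergence
\[
\lim_{\n}\big\|\big[\poiac(AB)-\poi(\ua_{h_\n}A)\poi(\ua_{h_\n}B)\big]\poi(\uC)\Ooi\big\|=0
\]
by commuting $\poi(\uC)$ past the bracket at the level of $\Hoi$ (using $\poiac(AB)\in\Aoi^d(O)$ from Lemma~\ref{lem:poiextend}) and then invoking Lemma~\ref{lem:asymprod}; your version performs the same commutation at finite $\l$ and reaches the same estimate via Cauchy--Schwarz. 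The two arguments use the same ingredients in the same logical order, and the iterated limit $\limsup_\n\limsup_\k$ you need is indeed controlled by the joint limit in Lemma~\ref{lem:asymprod}.
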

%___________________________________________________________________________

%___________________________________________________________________________
\begin{proof}
It follows from Lemmas~\ref{lem:poiextend}, \ref{lem:poiextendlim} that for all double cones $O$, $\poiac(\uAAac(O)) \subset \Aoir(O) \cap \Aoi^d(O)$. Then, since the quasi-local algebras $\Aoir$ and $\Aoi \subset \Aoi^d$ coincide, one obtains, by norm continuity, a bounded self-adjoint linear map $\poiac :\uAAac \to \Aoi$.

In order to show that $\poiac$ is multiplicative we can argue as in the proof of Lemma~\ref{lem:poiextendlim}, using Lemma~\ref{lem:asymprod} to show that
\begin{equation*}
\lim_{\n \to +\infty} \big\| \big[\poiac(AB)-\poi(\ua_{h_\n}A)\poi(\ua_{h_\n}B)\big]\poi(\uC)\Ooi\big\|=0
\end{equation*}
holds for all $A, B \in \uAAac(O)$ and all $\uC \in \uAA(W)$ with $W$ a wedge spacelike to some $\hat O \supset \bar O$.
\end{proof}
%___________________________________________________________________________

In the remaining part of this section, we show that, under fairly general conditions, the local von Neumann
algebras $\Aoi(O)$ of the scaling limit theory are in the image of $\poiac$.

%___________________________________________________________________________
\begin{Lemma}\label{lem:step}
Let $(\uA_n)_{n \in \bN} \subset \uAA(O)$ be a norm-bounded sequence and $(\l_n)_{n \in \bN} \subset \bR_+$ be such that $\l_n \searrow 0$. Assume in addition that
\begin{equation}\label{eq:stepinequal}
\| (\uA_{n+1,\l}-\uA_{n,\l})\O \| < \frac{1}{2^n}, \quad \| (\uA_{n+1,\l}-\uA_{n,\l})^*\O \| < \frac{1}{2^n},\quad \l \leq \l_n, n \in \bN.
\end{equation}
Then the function $\l \mapsto A_\l$ defined by $A_\l = \uA_{n,\l}$ for $\l \in (\l_{n+1},\l_n]$ is asymptotically contained in $\Aoi(O)$.
\end{Lemma}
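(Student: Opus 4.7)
The strategy is to exploit the telescoping of the differences $\uA_{n+1}-\uA_{n}$, using that the norms of these differences applied to $\O$ (and their adjoints) are geometrically summable for $\l$ small enough. The natural candidate for the approximant in $\uAA(O)$ is simply some $\uA_N$ with $N$ sufficiently large, and one estimates $\|(A_\l - \uA_{N,\l})\O\|$ by telescoping once $\l$ is small.

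More concretely, given $\varepsilon > 0$, I would pick $N \in \bN$ so that $\sum_{k \geq N} 2^{-k} < \varepsilon/2$ and take $\uA = \uA' = \uA_N$, which lies in $\uAA(O)$ by hypothesis. For any $\l \leq \l_{N+1}$ there is a unique $n \geq N+1$ with $\l \in (\l_{n+1}, \l_n]$, so $A_\l = \uA_{n,\l}$. Writing
\begin{equation*}
\uA_{n,\l} - \uA_{N,\l} = \sum_{k=N}^{n-1} (\uA_{k+1,\l}-\uA_{k,\l})
\end{equation*}
and noting that the monotonicity $\l_n \searrow 0$ gives $\l \leq \l_n \leq \l_k$ for every $k$ in the sum, hypothesis~\eqref{eq:stepinequal} applies to each term, yielding
\begin{equation*}
\|(A_\l - \uA_{N,\l})\O\| \leq \sum_{k=N}^{n-1} 2^{-k} < 2^{-(N-1)},
\end{equation*}
and analogously for the adjoint. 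The case $n = N$ (empty sum) is trivial. Since $\l_\k \to 0$, the condition $\l_\k \leq \l_{N+1}$ holds for all sufficiently large $\k$, and hence
\begin{equation*}
\limsup_\k \bigl( \| (A_{\l_\k}-\uA_{N,\l_\k})\O\| + \|(A_{\l_\k}^* - \uA_{N,\l_\k}^*)\O\| \bigr) \leq 2^{-(N-2)} < \varepsilon,
\end{equation*}
after possibly enlarging $N$. Norm-boundedness of $\l \mapsto A_\l$ and localization $A_\l \in \cA(\l O)$ are both inherited directly from the $\uA_n$'s.

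There is essentially no serious obstacle: the only point requiring a little care is the bookkeeping to ensure that, for $\l$ in the relevant range, the telescoping inequality is applicable to every index $k$ involved (which follows from $\l \leq \l_n \leq \l_k$ for $k \leq n$), and that the approximating element $\uA_N$ can indeed be chosen in $\uAA(O)$ rather than a larger region — this is automatic from the assumption $\uA_n \in \uAA(O)$ and ensures asymptotic containment in $\Aoi(O)$, not merely in some $\Aoi(\hat O)$ with $\hat O \supset \bar O$.
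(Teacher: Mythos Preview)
Your argument is correct and follows essentially the same route as the paper: fix a large index $N$, telescope $A_\l - \uA_{N,\l}$ for $\l$ below $\l_{N+1}$, and use the geometric bound~\eqref{eq:stepinequal} term by term to get the $2^{-(N-1)}$ estimate, then pass to the limsup over $\k$. The additional remarks you make about localization in $\cA(\l O)$, norm-boundedness, and the fact that $\uA_N$ already lies in $\uAA(O)$ (so that containment is in $\Aoi(O)$ itself) are all accurate and match the paper's implicit use of these facts.
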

%___________________________________________________________________________

%___________________________________________________________________________
\begin{proof}
If $\l < \l_n$ there is an $m \geq n$ such that $\l \in (\l_{m+1}, \l_m]$ and therefore
\begin{equation*}\begin{split}
\| (A_\l - \uA_{n,\l})\O\| &= \| (\uA_{m,\l}-\uA_{n,\l})\O\| \\
&\leq \sum_{k=n}^{m-1}\| (\uA_{k+1,\l}-\uA_{k,\l})\O \| < \frac{1}{2^{n-1}},
\end{split}\end{equation*}
since $\l \leq \l_m < \l_{m-1} < \dots < \l_n$. Similarly $\sup_{\l < \l_n}\| (A_\l - \uA_{n,\l})^*\O\| < 1/2^{n-1}$, and then
\begin{equation*}
\limsup_{\l \to 0} \big(\| (A_\l - \uA_{n,\l})\O\| +\| (A_\l - \uA_{n,\l})^*\O\| \big) < \frac{1}{2^{n-2}},
\end{equation*}
which shows that $\l \mapsto A_\l$ is asymptotically contained in $\Aoi(O)$, keeping in mind the fact that for all $\l > 0$ there exists a $\kappa$ such that $\kappa' \geq \kappa$ implies $\l_{\kappa'} \leq \l$.
\end{proof}
%___________________________________________________________________________

In the following theorem we use the notion of convergent scaling limit as in~\cite[Def. 4.4]{BDM2}. We recall that this means that there is a C$^*$-subalgebra $\uAA_\text{conv} \subset \uAA$ such that $\poi(\uAA_\text{conv}\cap\uAA(O))'' = \Aoi(O)$ for every $O$ and, for all $\uA \in \uAA_\text{conv}$, there exists $\lim_{\l \to 0} \o(\uA_\l) = \uooi(\uA)$. This condition is expected to hold in many physically interesting models and can be explicitly verified, e.g., in dilation covariant theories satisfying Haag-Swieca compactness and in the theory of the Klein-Gordon field in $d =3, 4$ spacetime dimensions~\cite[Thms.~7.1, 7.5]{BDM2}.

%___________________________________________________________________________
\begin{Theorem}\label{thm:steplimit}
Assume that $\cA$ has a convergent scaling limit. Then $$\Aoi(O) \subset \poiac(\uAAac(O))$$ for all double cones $O$.
\end{Theorem}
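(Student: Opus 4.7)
The plan is to build $A \in \uAAac(O)$ with $\poiac(A)=A_0$ by patching together a sequence of approximants to $A_0$ provided by Kaplansky density, using Lemma~\ref{lem:step} as the gluing device.

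First I would invoke the convergent scaling limit hypothesis $\poi(\uAA_\text{conv}\cap\uAA(O))'' = \Aoi(O)$ together with Kaplansky density to produce a norm-bounded sequence $(\uA_n) \subset \uAA_\text{conv}\cap\uAA(O)$ with $\|[\poi(\uA_n)-A_0]\Ooi\|+\|[\poi(\uA_n)-A_0]^*\Ooi\| \to 0$. Sequentiality of the extraction is automatic as these are convergences in the metric space $\Hoi$, while the norm control on the lifts uses the isometric quotient property for the C$^*$-algebra $\uAA_\text{conv}\cap\uAA(O)$. After passing to a subsequence, I arrange $\|[\poi(\uA_{n+1})-\poi(\uA_n)]\Ooi\|<1/2^{n+1}$ and the analogous bound with adjoints. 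The crucial input now is that $\uAA_\text{conv}$, being a $C^*$-subalgebra, contains the products $(\uA_{n+1}-\uA_n)^*(\uA_{n+1}-\uA_n)$ and $(\uA_{n+1}-\uA_n)(\uA_{n+1}-\uA_n)^*$, so the defining property of convergent scaling limit yields
\begin{equation*}
\lim_{\l\to 0}\|(\uA_{n+1,\l}-\uA_{n,\l})\O\|^2=\|[\poi(\uA_{n+1})-\poi(\uA_n)]\Ooi\|^2 < 1/2^{2n+2}
\end{equation*}
and similarly for adjoints. Choosing $\l_n\searrow 0$ so that the hypotheses~\eqref{eq:stepinequal} of Lemma~\ref{lem:step} are met, I obtain $A \in \uAAac(O)$ defined by $A_\l := \uA_{n,\l}$ for $\l \in (\l_{n+1},\l_n]$.

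For the identification $\poiac(A) = A_0$, I would pick $\uB\in\uAA$ and $\uC \in \uAA(W)$ with $W$ a wedge spacelike to some double cone $\hat O \supset \bar O$. Localization of $A_\l - \uA_{n,\l}$ in $\cA(\l O)$ together with Cauchy-Schwarz yields
\begin{equation*}
|\o(\uB_\l^*(A_\l-\uA_{n,\l})\uC_\l)| = |\o(\uB_\l^*\uC_\l(A_\l-\uA_{n,\l}))| \leq \|\uC_\l^*\uB_\l\O\|\,\|(A_\l-\uA_{n,\l})\O\|,
\end{equation*}
and the telescoping estimate underlying Lemma~\ref{lem:step} bounds $\limsup_{\l\to 0}\|(A_\l-\uA_{n,\l})\O\| \leq 1/2^{n-1}$. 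Passing to the limit along $(\l_\k)$ via the defining formula~\eqref{eq:poiextend} for $\poiac$, and using the spacelike commutation $\poi(\uA_n)\poi(\uC)\Ooi = \poi(\uC)\poi(\uA_n)\Ooi$ together with the norm convergence $\poi(\uA_n)\Ooi\to A_0\Ooi$ to handle the limit $n\to\infty$, one obtains the matrix element identity $\langle\poi(\uB)\Ooi,\poiac(A)\poi(\uC)\Ooi\rangle=\langle\poi(\uB)\Ooi,A_0\poi(\uC)\Ooi\rangle$ for all such $\uB,\uC$. The Reeh-Schlieder property of $\Ooi$ for $\poi(\uAA(W))$ in the scaling limit, as invoked in Lemma~\ref{lem:asymprod}, then upgrades this to $\poiac(A)=A_0$.

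I expect the most delicate step to be the opening one, combining Kaplansky density, the isometric character of C$^*$-quotients, and sequential extraction to produce a norm-bounded approximating sequence entirely inside the intersection $\uAA_\text{conv}\cap\uAA(O)$; the remaining pieces amount to routine bookkeeping with Lemma~\ref{lem:step} and the defining formula for $\poiac$.
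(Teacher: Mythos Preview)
Your construction of $A$ is the same as the paper's: Kaplansky density in $\poi(\uAA_\text{conv}\cap\uAA(O))$, norm-bounded lifts via the C$^*$-quotient isometry, passage to a geometric subsequence, conversion of the $\Ooi$-estimates into finite-scale estimates using $\uAA_\text{conv}$, and then Lemma~\ref{lem:step}. This part matches the paper essentially line by line.

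The identification $\poiac(A)=A_0$ is where you diverge. The paper proves it by showing $\sslim_{\nu}\poi(\ua_{h_\nu}A)=A_0$ for every $\delta$-sequence and then invoking Lemma~\ref{lem:poiextendlim}; the argument thus passes through the smearing machinery. You instead work directly with the defining matrix elements~\eqref{eq:poiextend}: commute $\uC_\l\in\cA(\l W)$ past $A_\l-\uA_{n,\l}\in\cA(\l O)$, bound the remainder by the telescoping estimate from Lemma~\ref{lem:step}, and let $n\to\infty$ using $\poi(\uA_n)\poi(\uC)\Ooi=\poi(\uC)\poi(\uA_n)\Ooi\to\poi(\uC)A_0\Ooi=A_0\poi(\uC)\Ooi$. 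Reeh--Schlieder for $\poi(\uAA(W))$ together with cyclicity of $\Ooi$ for $\poi(\uAA)$ then gives $\poiac(A)=A_0$. This is correct and a bit more economical, since it avoids the $\ua_{h_\nu}$ detour entirely; the paper's route, on the other hand, makes explicit that $A_0$ is recovered as a strong$^*$ limit of smeared lifts, which is conceptually tied to how $\poiac$ was built.
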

%___________________________________________________________________________

%___________________________________________________________________________
\begin{proof}
It is sufficient to show that, given $A_0 \in \Aoi(O)$, we can choose an element $A$ asymptotically
contained in $\Aoi(O)$ such that 
\begin{equation}\label{eq:steplimit}
\lim_{\n \to +\infty}\poi(\ua_{h_\n}A) = A_0
\end{equation}
in the strong* operator topology, for every $\d$-sequence $(h_\n)_{\n \in \bN}$. In fact, if this is true,
the conclusion is obtained by observing that an element asymptotically contained in $\Aoi(O)$ is also asymptotically contained in $\Aoi(\hat O)$ for all $\hat O \supset \bar O$, and therefore we have $A \in \uAAac(O)$ and $\poiac(A) = A_0$ according to Lemma~\ref{lem:poiextendlim}.

Consider then $A_0 \in \Aoi(O)$. Since $\poi(\uAA_\text{conv}\cap\uAA(O))$ is strongly* dense in $\Aoi(O)$, thanks to Kaplanski's density theorem and to the fact that $\Ooi$ is separating for $\Aoi(O)$ there are elements $\uA_n \in \uAA_\text{conv}\cap\uAA(O)$ such that $\| \uA_n \| \leq \| A_0\|$ and
\begin{equation*}
\lim_{n \to +\infty}\poi(\uA_n) = A_0
\end{equation*}
in the strong* operator topology. Passing to a subsequence, if necessary, we can also assume that
\begin{equation*}
\| (\poi(\uA_{n+1}) - \poi(\uA_n))\Ooi \| < \frac{1}{2^n}, \quad\| (\poi(\uA_{n+1}) - \poi(\uA_n))^*\Ooi \| < \frac{1}{2^n}.
\end{equation*}
Now, since for $\uB \in \uAA_\text{conv}$ one has $\| \poi(\uB)\Ooi \| = \lim_{\l \to 0} \| \uB_\l \O \|$, we see that we can find a sequence $(\l_n)_{n \in \bN} \subset \bR_+$ such that $\l_n \searrow 0$ and \eqref{eq:stepinequal} holds. Therefore the function $\l \mapsto A_\l$ defined as in Lemma~\ref{lem:step} is asymptotically contained in $\Aoi(O)$ and satisfies $\sup_\l \| A_\l \| \leq \| A_0 \|$, so we only need to show that~\eqref{eq:steplimit} holds.

To this end, given $\varepsilon > 0$, we fix $n \in \bN$ such that
\begin{gather*}
\| (\poi(\uA_n) - A_0)\Ooi\| < \varepsilon, \\
\limsup_{\l \to 0} \big(\| (A_\l - \uA_{n,\l})\O\| +\| (A_\l - \uA_{n,\l})^*\O\| \big) <\varepsilon.
\end{gather*}
Correspondingly, we can find $\n_\varepsilon$ such that if $\n > \n_\varepsilon$
\begin{equation*}
\int_{\bR^d} dx\,h_\n(x) \| \ua_x(\uA_n)-\uA_n \| < \varepsilon.
\end{equation*}
The two latter inequalities imply that for sufficiently small $\l$ and $\n > \n_\varepsilon$
\begin{equation*}\begin{split}
\big\| \big((\ua_{h_\n}A)_\l - \uA_{n,\l}\big)\O \big\|  &\leq \left\| \left(\int_{\bR^d} dx\,h_\n(x)\a_{\l x}(A_\l) - \int_{\bR^d} dx\,h_\n(x)\a_{\l x}(\uA_{n,\l})\right)\O\right\| \\
&\quad+ \left\|\int_{\bR^d} dx\,h_\n(x)\big(\a_{\l x}(\uA_{n,\l})-\uA_{n,\l}\big)\O\right\| \\
&\leq \|(A_\l - \uA_{n,\l})\O\| + \int_{\bR^d} dx\,h_\n(x) \| \ua_x(\uA_n)-\uA_n \| < 2\varepsilon,
\end{split}\end{equation*}
and therefore, for $\n > \n_\varepsilon$,
\begin{equation*}
\| (\poi(\ua_{h_\n}A) - \poi(\uA_n))\Ooi\| = \lim_\k \big\| \big((\ua_{h_\n}A)_{\l_\k} - \uA_{n,{\l_\k}}\big)\O \big\| < 2\varepsilon.
\end{equation*} Finally, the inequality
\begin{equation*}\begin{split}
\|(\poi(\ua_{h_\n}A) &- A_0)\Ooi\| \leq \|(\poi(\ua_{h_\n}A) - \poi(\uA_n))\Ooi\|+ \| (\poi(\uA_n) - A_0)\Ooi\| < 3\varepsilon
\end{split}
\end{equation*}
and a similar argument for $\|(\poi(\ua_{h_\n}A) - A_0)^*\Ooi\|$ show that~\eqref{eq:steplimit} holds. 
\end{proof}
%___________________________________________________________________________

It seems of interest to decide whether the above argument can be generalized by dropping the assumption of convergent scaling limit for $\cA$.

%===========================================================================
\section{Asymptotic morphisms and scaling limit morphisms}\label{sec:asympmor}

Thanks to the results of the previous section, we are now ready to introduce a natural notion of asymptotic morphism adapted to the situation at hand, following the analogy discussed in Sec.~\ref{sec:dilation}. The present section will be devoted to the study of the relations
between these objects and the morphisms from $\cA$ or $\Aoi$ into $\Aoi$.

%___________________________________________________________________________
\begin{Definition}\label{def:tameasympmor}
By an \emph{asymptotic morphism} of the net $\cA$ relative to the scaling limit state $\uooi = \lim_\k \uo_{\l_\k}$ we mean a family $(\r_\l)_{\l > 0}$ of maps $\rho_\l : \cA \to \cA$ such that, for all $A,B \in \cA$, $\a \in \bC$:
\begin{align}
&\lim_{\k} \| [\rho_{\l_\k}(A)^*-\rho_{\l_\k}(A^*)]\O\|= 0, \label{eq:asympadjoint}\\
&\lim_\k \| [\rho_{\l_\k}(A+\a B) - \rho_{\l_\k}(A) - \a \rho_{\l_\k}(B)]\O\|=0, \label{eq:asymplinear}\\
&\lim_\k \| [\rho_{\l_\k}(AB) - \rho_{\l_\k}(A)\rho_{\l_\k}(B)]\O\|=0. \label{eq:asympmult}
\end{align}
The asymptotic morphism $(\r_\l)_{\l>0}$ will be called \emph{tame} if the following properties hold:
\smallskip
\begin{proplist}{3}
\item for each $A \in \cA$ the map $\rac(A) : \bR_+ \to \cA$ defined by $\l \mapsto \rho_\l(A)$ is an element of $\uAAac$;
\item the map $A \in \cA \mapsto \rac(A) \in \uAAac$ is norm-continuous;
\item $\poiac(\rac(\bigcup_O \cA(O))) \subset \bigcup_O \Aoi(O)$.
\end{proplist}
\end{Definition}
%___________________________________________________________________________

Note that a morphism $\r : \cA \to \cA$ naturally defines an asymptotic morphism according to the above definition, setting $\r_\l := \r$ for all $\l > 0$; however $(\r_\l)$ thus defined will not be tame in general.

We will denote by $\AMor_\iota(\cA)$ the set of tame asymptotic morphisms of $\cA$ relative to the scaling
limit state $\uooi$. When there is no possibility of confusion, we will however often drop the subscript $\iota$ in order to have a lighter notation.

We will also say that a morphism $\br: \cA \to \cB$, with $\cA, \cB$ nets of C$^*$-algebras, is \emph{properly supported} if $\br(\bigcup_O \cA(O)) \subset \bigcup_O \cB(O)$. The set of properly supported morphisms from $\cA$ to $\cB$ will be denoted by $\Morps(\cA,\cB)$. If $\bp:\cA \to \cB$ is an isomorphism, we will call it properly supported, by a slight abuse of terminology, if it is properly supported in the above sense along with its inverse or, equivalently, if $\bp\big(\bigcup_O \cA(O)\big) = \bigcup_O \cB(O)$.
\medskip

A way to produce asymptotic morphisms is to ``lift'' morphisms $\cA \to \Aoi$ as explained in the
following result.

%___________________________________________________________________________
\begin{Theorem}\label{thm:asympmor}
Assume that $\cA$ has a convergent scaling limit, and let $\br :  \cA \to \Aoi$ be a properly supported morphism. Then there exists a tame asymptotic morphism $(\rho_\l)_{\l>0}$ of $\cA$ such that 
\begin{equation}\label{eq:asympmor}
\poiac(\rac(A)) = \br(A), \qquad A \in \cA.
\end{equation}
\end{Theorem}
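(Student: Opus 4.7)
The plan is to construct the desired asymptotic morphism by lifting $\br$ along a continuous section of the extended scaling limit representation $\poiac$. Concretely, I would like to produce a norm-continuous (not necessarily linear) map $s:\Aoi\to\uAAac$ with $\poiac\circ s=\mathrm{id}_{\Aoi}$, set $\rac:=s\circ\br$, and define $\r_\l(A):=\rac(A)_\l$. Then~\eqref{eq:asympmor} holds by construction, and both the tameness clauses of Definition~\ref{def:tameasympmor} and the asymptotic identities should follow from standard properties of $\br$ together with a description of $\ker\poiac$.

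The first real step is to show that $\poiac:\uAAac\to\Aoi$ is surjective. The convergent scaling limit hypothesis enables Theorem~\ref{thm:steplimit}, which gives $\Aoi(O)\subset\poiac(\uAAac(O))$ for every double cone $O$; hence the range of $\poiac$ contains the norm-dense $*$-subalgebra $\bigcup_O\Aoi(O)$ of $\Aoi$. Since the image of a $C^*$-morphism is a $C^*$-subalgebra (and therefore norm closed), this range must coincide with $\Aoi$. The Bartle--Graves selection theorem then provides a continuous (in general nonlinear) section $s:\Aoi\to\uAAac$. Setting $\rac(A):=s(\br(A))$, the tameness properties are immediate: (i) $\rac(A)\in\uAAac$ by construction; (ii) $A\mapsto\rac(A)$ is norm continuous as a composition of norm-continuous maps (recall that $\br$, being a $*$-homomorphism of $C^*$-algebras, is automatically continuous); (iii) $\poiac(\rac(A))=\br(A)\in\bigcup_O\Aoi(O)$ for $A\in\bigcup_O\cA(O)$, by the proper support of $\br$.

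For~\eqref{eq:asympadjoint}--\eqref{eq:asympmult} I would record the following observation: if $X\in\uAAac$ satisfies $\poiac(X)=0$, then $\lim_\k\|X_{\l_\k}\O\|=0=\lim_\k\|X_{\l_\k}^*\O\|$. Indeed, $\poiac$ being a $*$-homomorphism gives $\poiac(X^*X)=0$, and~\eqref{eq:poiextend} applied with $\uB=\uC=\Id$ yields
\begin{equation*}
\lim_\k\|X_{\l_\k}\O\|^2=\lim_\k\omega(X_{\l_\k}^*X_{\l_\k})=\langle\Ooi,\poiac(X^*X)\Ooi\rangle=0,
\end{equation*}
and similarly for $X^*$ using $\poiac(XX^*)=0$. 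Since $\br$ is a $*$-homomorphism, the three elements $\rac(A^*)-\rac(A)^*$, $\rac(A+\a B)-\rac(A)-\a\rac(B)$, and $\rac(AB)-\rac(A)\rac(B)$ all lie in $\ker\poiac$, and applying the observation to each of them gives~\eqref{eq:asympadjoint}--\eqref{eq:asympmult}. The conceptually crucial ingredient is thus the production of a norm-continuous section via Bartle--Graves, which genuinely relies on the surjectivity statement, and hence on the convergent scaling limit assumption through Theorem~\ref{thm:steplimit}; once such a section is available, the remainder of the argument is essentially formal.
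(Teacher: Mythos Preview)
Your proof is correct and follows essentially the same route as the paper: invoke Theorem~\ref{thm:steplimit} for surjectivity of $\poiac$, apply the Bartle--Graves selection theorem to obtain a continuous section $s$, set $\rac:=s\circ\br$, and deduce the asymptotic identities from the fact that the relevant ``defect'' elements lie in $\ker\poiac$ via~\eqref{eq:poiextend} with $\uB=\uC=\Id$. Your kernel observation is exactly the computation the paper carries out for $Z=\rac(AB)-\rac(A)\rac(B)$, just stated once in abstract form rather than repeated for each identity.
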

%___________________________________________________________________________

%___________________________________________________________________________
\begin{proof}
Since $\Aoi = \poiac(\uAAac)$, Thm.~\ref{thm:steplimit}, thanks to the Bartle-Graves Selection Theorem~\cite[TVS II.35, Prop.12]{Bourbaki?} we can choose a continuous section of $\poiac: \uAAac \to \Aoi$, i.e., a norm-continuous map
$s : \Aoi \to \uAAac$ such that $\poiac\circ s = \Id_{\Aoi}$. We define then $\rho_\l(A) := s(\br(A))_\l$ for all $\l > 0$, $A \in \cA$. Properties (i)-(iii) of Definition~\ref{def:tameasympmor} and Eq.~\eqref{eq:asympmor} follow at once. We show that
\begin{equation*}
\lim_\k \big\|\big[\rho_{\l_\k}(AB) -\rho_{\l_\k}(A)\rho_{\l_\k}(B)\big]\O\big\|  = 0, \qquad A, B \in \cA.
\end{equation*}
Since $\uAAac$ is a C$^*$-algebra, setting
\begin{equation*}
Z_\l := s(\br(AB))_\l - s(\br(A))_\l s(\br(B))_\l, \qquad \l > 0,
\end{equation*}
we have $Z \in \uAAac$, and therefore, using~\eqref{eq:poiextend} with $\uB = \uC = \Id$, and the fact that, by Theorem~\ref{thm:poiextend}, $\poiac$ is a morphism,
\begin{equation*}\begin{split}
\lim_\k \big\|\big[\rho_{\l_\k}(AB) &-\rho_{\l_\k}(A)\rho_{\l_\k}(B)\big]\O\big\|^2  = \lim_\k \o(Z_{\l_\k}^*Z_{\l_\k}) \\
&= \ooi(\poiac(Z^*Z)) \\
&= \big\| \big[\poiac(s(\br(AB)))-\poiac(s(\br(A)))\poiac(s(\br(B)))\big]\Ooi\big\|^2 \\
&= \big\|\big[\br(AB)-\br(A)\br(B)\big]\Ooi\|^2 = 0.
\end{split}\end{equation*}
The proof of the equations $\lim_\k \|[\rho_{\l_k}(A^*)-\rho_{\l_\k}(A)^*]\O\|=0 =\lim_\k \|[\rho_{\l_\k}(A + \a B)-\r_{\l_\k}(A)-\a\r_{\l_\k}(B)]\O\|=0$ is analogous, and this shows indeed that $(\r_\l)$ is an asymptotic morphism of $\cA$.
\end{proof}
%___________________________________________________________________________

%___________________________________________________________________________
\begin{Remarks}
\begin{remlist}
\item The assumption $\br(\cup_O\cA(O)) \subset \cup_O\Aoi(O)$, which is only needed to show the validity of property (iii) of Definition~\ref{def:tameasympmor}, is natural having in mind applications to superselection theory: e.g.\ it is satisfied by $\br := \r_0\bp$, where $\r_0$ is a localized morphims of $\Aoi$ and $\bp : \cA \to \Aoi$ is an isomorphism as in 
Theorem~\ref{thm:isolocal}.
\item In the setting of the Theorem, it is not granted that for each $A \in \cup_O\cA(O)$, there exists a double cone $O_1$ with $\r_\l(A) \in \cA(\l O_1)$ for all $\l>0$. The validity of such property can actually be obtained, by a suitable modification of the above argument, at the price of restricting the domain of the asymptotic morphism $(\rho_\l)_\l$ to the dense $*$-subalgebra $\cup_O \cA(O)$. 
\end{remlist}
\end{Remarks}
%___________________________________________________________________________

The previous theorem has the following converse.

%___________________________________________________________________________
\begin{Theorem}\label{thm:asympmortomor}
Let $(\rho_\l)$ be a tame asymptotic morphism of a local net $\cA$ relative to $\uooi$. Then defining $\br(A) := \poiac(\rac(A))$, $A \in \cA$, we get a properly supported morphism $\br : \cA \to \Aoi$.
\end{Theorem}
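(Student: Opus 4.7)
The statement asks me to upgrade the pointwise asymptotic algebraic properties of the family $(\rho_\l)$ into exact algebraic properties of the map $\br := \poiac \circ \rac$. My plan is to first verify that $\br$ is well-defined, properly supported and norm-continuous, and then to check the three C$^*$-algebra morphism properties (additivity, $*$-preservation, multiplicativity) on the dense $*$-subalgebra $\bigcup_O \cA(O)$, extending to all of $\cA$ by norm-continuity. Well-definedness is immediate: $\rac(A) \in \uAAac$ by tameness~(i), and $\poiac$ maps $\uAAac$ into $\Aoi$ by Theorem~\ref{thm:poiextend}. Proper support follows at once from tameness~(iii), while norm-continuity of $\br$ holds because $\rac$ is norm-continuous by tameness~(ii) and $\poiac$ is a bounded linear map.

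For the algebraic part I introduce the defect elements
\begin{equation*}
Y := \rac(A^*) - \rac(A)^*, \qquad Z := \rac(A+\a B) - \rac(A) - \a\rac(B), \qquad W := \rac(AB) - \rac(A)\rac(B),
\end{equation*}
which all lie in $\uAAac$ since it is a C$^*$-algebra. Because $\poiac$ is a $*$-morphism, the three required identities for $\br$ become $\poiac(Y) = \poiac(Z) = \poiac(W) = 0$. The basic computation reads, for instance for $W$,
\begin{equation*}
\|\poiac(W)\Ooi\|^2 = \ooi\big(\poiac(W^*W)\big) = \lim_\k \o(W_{\l_\k}^* W_{\l_\k}) = \lim_\k \|W_{\l_\k}\O\|^2 = 0,
\end{equation*}
where the second equality uses the formula~\eqref{eq:poiextend} with $\uB = \uC = \Id$ and the vanishing relies on the asymptotic multiplicativity~\eqref{eq:asympmult}. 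The completely analogous use of~\eqref{eq:asympadjoint} and~\eqref{eq:asymplinear} yields $\|\poiac(Y)\Ooi\| = \|\poiac(Z)\Ooi\| = 0$.

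The step that requires the most care, and which I expect to be the main obstacle, is upgrading $\poiac(W)\Ooi = 0$ (and its counterparts for $Y$ and $Z$) to the vanishing of the operators themselves, since $\Ooi$ is in general only separating for local algebras of the scaling limit net and not for the whole quasi-local $\Aoi$. This is precisely where tameness~(iii) becomes essential: for $A, B \in \cA(O_0)$ it forces $\br(A), \br(B), \br(A^*), \br(A+\a B), \br(AB)$ to lie in $\bigcup_O \Aoi(O)$, and choosing a sufficiently large double cone $O_1$ containing all the corresponding localization regions gives $\poiac(Y), \poiac(Z), \poiac(W) \in \Aoi(O_1)$. The Reeh-Schlieder theorem for wedges in the scaling limit (already invoked in the proof of Lemma~\ref{lem:poiextendlim}) makes $\Ooi$ cyclic for $\poi(\uAA(W_1))$ for any spacelike wedge $W_1 \subset O_1'$, and combined with locality this yields that $\Ooi$ is separating for $\Aoi(O_1)$. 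Thus $\poiac(Y) = \poiac(Z) = \poiac(W) = 0$ for local $A, B$, i.e., $\br$ is additive, $*$-preserving and multiplicative on $\bigcup_O \cA(O)$.

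Finally I extend these three identities from $\bigcup_O \cA(O)$ to all of $\cA$: given $A, B \in \cA$, approximate in norm by local elements $A_n, B_n$ and use norm-continuity of $\br$ together with joint norm-continuity of the involution and of the multiplication in $\Aoi$. This yields that $\br$ is a properly supported C$^*$-algebra morphism $\cA \to \Aoi$, completing the proof.
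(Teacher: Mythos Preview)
Your proof is correct and follows essentially the same approach as the paper: compute the defect applied to $\Ooi$ via the defining formula for $\poiac$, use the asymptotic morphism relations to see it vanishes, invoke tameness~(iii) together with the separating property of $\Ooi$ for local algebras to conclude the defect vanishes as an operator for strictly local $A,B$, and then extend by the norm-continuity coming from tameness~(ii). Your write-up is simply more explicit about the intermediate steps (the use of~\eqref{eq:poiextend} and the Reeh--Schlieder argument behind the separating property) than the paper's terse version.
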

%___________________________________________________________________________

%___________________________________________________________________________
\begin{proof}
Consider $A, B \in \bigcup_O \cA(O)$. Then we have
\begin{equation*}\begin{split}
\| [\br(AB)-\br(A)\br(B)]\Ooi\| &= \| \poi(\rac(AB)-\rac(A)\rac(B))\Ooi\| \\
&= \lim_\k \| [\r_{\l_\k}(AB)-\r_{\l_\k}(A)\r_{\l_\k}(B)]\O\| = 0.
\end{split}\end{equation*}
Thanks to the fact that $\Ooi$ is separating for local algebras and to property (iii) of Definition~\ref{def:tameasympmor}, this implies that $\br$ is multiplicative on strictly local elements. The multiplicativity of $\br$ on $\cA$ then follows from the fact that, according to property (ii) of Definition~\ref{def:tameasympmor}, it is norm-continuous. The proof of the linearity and self-adjointness of $\br$ follows the same lines.
\end{proof}
%___________________________________________________________________________

Obviously, given morphisms $\br, \bp : \cA \to \Aoi$ with $\bp$ an isomorphism, we get a morphism $\r_0$ of $\Aoi$ simply by $\r_0 := \br \bp^{-1}$, and all morphisms of $\Aoi$ can be obtained in this way. In view of the
correspondence between morphisms $\cA \to \Aoi$ and asymptotic morphisms of $\cA$ established
above, it is then possible to describe morphisms of $\Aoi$ by pairs of suitable asymptotic morphisms.

%___________________________________________________________________________
\begin{Definition}\label{def:asympiso}
An \emph{asymptotic isomorphism} of $\cA$ relative to the scaling limit state $\uooi$ is a tame asymptotic morphism $(\phi_\l)$ of $\cA$ relative to $\uooi$ such that:
\smallskip
\begin{proplist}{2}
\item $\pac:\cA \to \uAAac$ is injective;
\item there exists a continuous section $\bar s : \Aoi \to \uAAac$ of $\poiac$ such that 
$$\pac\big(\bigcup_O\cA(O)\big) = \bar s\big(\bigcup_O\Aoi(O)\big).$$ 
\end{proplist}
\end{Definition}
%___________________________________________________________________________

We will use the notation $\AIso_\iota(\cA)$  (or simply $\AIso(\cA)$) to denote the set of asymptotic isomorphisms of $\cA$ relative to $\uooi$.

%___________________________________________________________________________
\begin{Lemma}\label{lem:asympiso}
Assume that $\cA$ has a convergent scaling limit. If $(\phi_\l)$ is an asymptotic isomorphism of $\cA$, then the morphism $\bp : \cA \to \Aoi$ associated to it according to Theorem~\ref{thm:asympmortomor} is a properly supported isomorphism. Conversely given a properly supported isomorphism $\bp : \cA \to \Aoi$, the asymptotic morphism obtained in Theorem~\ref{thm:asympmor} can be chosen to be an asymptotic isomorphism.
\end{Lemma}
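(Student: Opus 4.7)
The plan is to prove the two implications separately, exploiting the defining relation $\bp = \poiac \circ \pac$ from Theorem~\ref{thm:asympmortomor} together with the identity in Definition~\ref{def:asympiso}(ii).

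For the forward implication, assume $(\phi_\l) \in \AIso(\cA)$. By Theorem~\ref{thm:asympmortomor}, $\bp := \poiac \circ \pac$ is already a properly supported morphism from $\cA$ to $\Aoi$, so I only need to establish bijectivity and the stronger proper support equality $\bp(\bigcup_O \cA(O)) = \bigcup_O \Aoi(O)$. Applying $\poiac$ to both sides of the identity in Definition~\ref{def:asympiso}(ii) and using $\poiac \circ \bar s = \Id_{\Aoi}$ immediately gives this equality, hence surjectivity on strictly local elements; moreover it shows that for each $A \in \bigcup_O\cA(O)$ one has $\pac(A) = \bar s(\bp(A))$. Since $\bar s$ is injective (as a section) and $\pac$ is injective by Definition~\ref{def:asympiso}(i), this forces $\bp$ to be injective on $\bigcup_O\cA(O)$. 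To promote this to full injectivity I invoke the classical C$^*$-algebraic fact that an injective $*$-morphism of C$^*$-algebras is isometric: applied to each restriction $\bp|_{\cA(O)}$, it gives $\|\bp(A)\| = \|A\|$ for $A \in \bigcup_O\cA(O)$ and, by norm-continuity of $\bp$ and density of the latter, for every $A \in \cA$. Surjectivity onto $\Aoi$ is then automatic, the image of a $*$-morphism of C$^*$-algebras being a closed $*$-subalgebra and here containing the norm-dense subalgebra $\bigcup_O \Aoi(O)$.

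For the converse, let $\bp: \cA \to \Aoi$ be a properly supported isomorphism. Theorem~\ref{thm:asympmor}, applied with any continuous section $s : \Aoi \to \uAAac$ of $\poiac$, produces a tame asymptotic morphism $(\phi_\l)$ with $\pac = s \circ \bp$; this composition is injective because $\bp$ is bijective and $s$ is injective (again since $\poiac \circ s = \Id$), which verifies Definition~\ref{def:asympiso}(i). Setting $\bar s := s$, the proper support of $\bp$ reads $\bp(\bigcup_O\cA(O)) = \bigcup_O \Aoi(O)$, and so $\pac(\bigcup_O\cA(O)) = s(\bigcup_O \Aoi(O)) = \bar s(\bigcup_O \Aoi(O))$, verifying Definition~\ref{def:asympiso}(ii). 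The only non-routine step of the whole argument is the injectivity upgrade in the forward direction, which is handled cleanly by the C$^*$-algebraic observation above.
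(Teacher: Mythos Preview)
Your proof is correct and follows essentially the same strategy as the paper's: both directions hinge on the identity $\pac(A) = \bar s(\bp(A))$ for strictly local $A$, obtained by applying $\poiac$ to Definition~\ref{def:asympiso}(ii), and the converse is handled identically by setting $\pac = s\circ\bp$. The only noticeable difference is in how injectivity of $\bp$ is extended from $\bigcup_O\cA(O)$ to all of $\cA$ in the forward direction. The paper extends the identity $\pac = \bar s\circ\poiac\circ\pac$ to every $A\in\cA$ by continuity of $\pac$, $\bar s$ and $\poiac$, and then observes that $\pac(\cA)\cap\ker\poiac \subset \bar s(\Aoi)\cap\ker\poiac = \{0\}$; you instead note that each $\bp|_{\cA(O)}$ is an injective $*$-homomorphism of C$^*$-algebras, hence isometric, and pass the isometry to $\cA$ by density. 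Both arguments are standard; the paper's route has the minor side benefit of producing the global relation $\pac(\cA)\subset\bar s(\Aoi)$ explicitly, which is invoked again in the proof of Theorem~\ref{thm:difference} when writing $\bp^{-1} = (\pac)^{-1}\bar s$.
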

%___________________________________________________________________________

%___________________________________________________________________________
\begin{proof}
Let $(\phi_\l)$ be an asymptotic isomorphism of $\cA$. We show that the associated morphism $\bp : \cA \to \Aoi$ is injective. Since $\pac$ is injective and $\bp = \poiac\pac$, it is sufficient to show that $\pac(\cA)\cap \ker \poiac = \{0\}$.
In order to do that, we first note that, if $A \in \bigcup_O \cA(O)$, by property (ii) of Def.~\ref{def:asympiso} there is $A_0 \in \bigcup_O \Aoi(O)$ such that $\pac(A) = \bar s (A_0)$, and applying $\poiac$ to both sides, one sees that $\pac(A) = \bar s (\poiac\pac(A))$. By continuity of $\pac, \bar s$ and $\poiac$, this last equation is actually valid for all $A \in \cA$, which entails $\pac(\cA) \subset \bar s (\Aoi)$. Then one has
$$\pac(\cA)\cap\ker\poiac \subset \bar s(\Aoi)\cap\ker\poiac = \{ 0 \},$$
since $\poiac \bar s = \Id$.
The surjectivity of $\bp$ follows from property (ii) of Def.~\ref{def:asympiso} and continuity since
$$\bp\big(\bigcup_O\cA(O)\big) = \poiac\big(\pac\big(\bigcup_O\cA(O)\big)\big) = \poiac\big(\bar s\big(\bigcup_O\Aoi(O)\big)\big) = \bigcup_O\Aoi(O).$$
The last equation also shows that $\bp$ is a properly supported isomorphism.

Assume now that $\bp : \cA \to \Aoi$ is a properly supported isomorphism. As seen in the proof of Theorem~\ref{thm:asympmor}, there exists a tame asymptotic morphism $(\phi_\l)$ such that $\pac(A) = \bar s (\bp(A))$, $A \in \cA$, for some continuous section $\bar s : \Aoi \to \uAAac$ of $\poiac$. This immediately entails property (ii) of Def.~\ref{def:asympiso} for $\pac$. Furthermore, if $\pac(A) = \pac(B)$, $A, B \in \cA$, then applying $\poiac$ to both members one gets $\bp(A) = \bp(B)$, and therefore $A = B$, $\bp$ being an isomorphism.
\end{proof}
%___________________________________________________________________________

It is well known that under fairly general assumptions  the quasi-local C$^*$-algebra $\cA$ turns out to be simple, and thus in this case a nonzero morphism from $\cA$ into $\Aoi$ is automatically injective.

We are now in the position of showing that, under suitable hypotheses, morphisms of $\Aoi$ can be described by means of asymptotic morphisms of $\cA$.

%___________________________________________________________________________
\begin{Theorem}\label{thm:difference}
Assume that $\cA$ has a convergent scaling limit and that there exists a properly supported isomorphism $\bp:\cA\to\Aoi$. Given a properly supported morphism $\r_0$ of $\Aoi$, there exists a tame asymptotic morphism $(\r_\l)$ and an asymptotic isomorphism $(\phi_\l)$ of $\cA$ such that
\begin{equation}\label{eq:difference}
\r_0 = \poiac \rac (\pac)^{-1}\bar s,
\end{equation}
where $\bar s$ is the section in Definition~\ref{def:asympiso}.(ii).

Conversely, given a tame asymptotic morphism $(\r_\l)$ and an asymptotic isomorphism $(\phi_\l)$ of $\cA$,
$\r_0$ defined by formula~\eqref{eq:difference} is a properly supported morphism of $\Aoi$.
\end{Theorem}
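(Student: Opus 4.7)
The plan is to read~\eqref{eq:difference} as the statement $\r_0 = \br\,\bp^{-1}$, where $\br := \poiac\rac$ is the properly supported morphism $\cA\to\Aoi$ attached to $(\r_\l)$ via Theorem~\ref{thm:asympmortomor}, and $\bp := \poiac\pac$ is the properly supported isomorphism attached to $(\phi_\l)$ via Lemma~\ref{lem:asympiso}. The bridge between the two formulations is the identity $(\pac)^{-1}\bar s = \bp^{-1}$, which I will establish first. Indeed, starting from the continuous section $\bar s$ provided by property~(ii) of Definition~\ref{def:asympiso}, the proof of Lemma~\ref{lem:asympiso} already shows that $\pac(A) = \bar s(\bp(A))$ holds for every strictly local $A$, and hence for every $A\in\cA$ by norm continuity of $\pac$, $\bar s$, and $\bp$. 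Consequently $\pac$ maps $\cA$ bijectively onto $\bar s(\Aoi)$, and on that image $(\pac)^{-1}\circ\bar s$ agrees with $\bp^{-1}$.

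For the forward direction, assume we are given a properly supported morphism $\r_0$ of $\Aoi$ and the properly supported isomorphism $\bp:\cA\to\Aoi$. I set $\br := \r_0\bp\in\Morps(\cA,\Aoi)$ and apply Theorem~\ref{thm:asympmor} to obtain a tame asymptotic morphism $(\r_\l)$ with $\poiac\rac = \br$. Independently, the second half of Lemma~\ref{lem:asympiso} produces an asymptotic isomorphism $(\phi_\l)$ with $\poiac\pac = \bp$ and with associated section $\bar s$ satisfying $\pac = \bar s\circ\bp$. Combining these with the identity in the previous paragraph yields, for every $A_0\in\Aoi$,
$$\poiac\,\rac\,(\pac)^{-1}\,\bar s(A_0) = \br\,\bp^{-1}(A_0) = \r_0(A_0),$$
which is~\eqref{eq:difference}.

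For the converse, given an arbitrary tame asymptotic morphism $(\r_\l)$ and an asymptotic isomorphism $(\phi_\l)$, Theorem~\ref{thm:asympmortomor} gives $\br := \poiac\rac\in\Morps(\cA,\Aoi)$, and Lemma~\ref{lem:asympiso} gives a properly supported isomorphism $\bp:= \poiac\pac$. Using $(\pac)^{-1}\bar s = \bp^{-1}$ as above, the right hand side of~\eqref{eq:difference} equals $\br\circ\bp^{-1}$, which is visibly a properly supported morphism of $\Aoi$ as a composition of a properly supported morphism with the inverse of a properly supported isomorphism.

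The only real point of care is that $(\pac)^{-1}\bar s$ be a well defined map on all of $\Aoi$; this is exactly the content of property~(ii) of Definition~\ref{def:asympiso} combined with the continuity argument already recorded in the proof of Lemma~\ref{lem:asympiso}. Once this identification is in place, the theorem is essentially bookkeeping combining Theorems~\ref{thm:asympmor} and~\ref{thm:asympmortomor} with Lemma~\ref{lem:asympiso}.
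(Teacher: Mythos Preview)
Your proof is correct and follows essentially the same route as the paper: both arguments hinge on the identity $(\pac)^{-1}\bar s = \bp^{-1}$, then set $\br := \r_0\bp$ and invoke Theorem~\ref{thm:asympmor} and Lemma~\ref{lem:asympiso} for the forward direction, and Theorem~\ref{thm:asympmortomor} together with Lemma~\ref{lem:asympiso} for the converse. Your write-up is slightly more explicit about why $(\pac)^{-1}\bar s$ is globally defined, but the underlying logic is identical.
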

%___________________________________________________________________________

%___________________________________________________________________________
\begin{proof}
We start by noticing that $\br := \r_0 \bp : \cA \to \Aoi$ is a properly supported morphism. Consider then a tame asymptotic morphism $(\r_\l)$ associated to $\br$ according to Theorem~\ref{thm:asympmor} and an asymptotic isomorphism $(\phi_\l)$ associated to $\bp$ according to Lemma~\ref{lem:asympiso}. Since $\pac$ is injective and $\bar s(\Aoi) = \pac(\cA)$, there holds $\bp^{-1} = (\pac)^{-1}\bar s$, and therefore
\begin{equation*}
\r_0 = \br \bp^{-1} = \poiac \rac (\pac)^{-1}\bar s.
\end{equation*}
The converse statement is clear since $\br = \poiac \rac : \cA \to \Aoi$ and $\bp^{-1} = (\pac)^{-1}\bar s : \Aoi \to \cA$ are both properly supported morphisms.
\end{proof}
%___________________________________________________________________________

%___________________________________________________________________________
\begin{Remarks}
\begin{remlist}
\item Loosely speaking, every morphism of $\Aoi$ is thus realized as a (multiplicative) difference of asymptotic morphisms. This is somewhat reminiscent of the representation of KK-elements in the Cuntz picture. 
\item If we had a canonical isomorphism between $\cA$ and $\Aoi$ (as for dilation covariant theories), we could identify morphisms of $\Aoi$ with morphisms between $\cA$ and $\Aoi$ and thus we could avoid talking about differences. In the general case, the usage of differences allows to bypass the arbitrariness of the choice of a noncanonical isomorphism.
\end{remlist}
\end{Remarks}
%___________________________________________________________________________

Theorem~\ref{thm:difference} allows us to define a surjective map 
\begin{equation}\label{eq:Psi}
\Psi : \AMor_\iota(\cA) \times \AIso_\iota(\cA) \to \Morps(\Aoi)
\end{equation}
through formula~\eqref{eq:difference}. Notice that the second part of Theorem~\ref{thm:difference} is actually independent of the assumption of convergent scaling limit for $\cA$. Therefore, also the definition of $\Psi$ does not rely on this hypothesis.

Next, we turn to consider equivalence of pairs of asymptotic morphisms. We will give below such a notion
so that equivalent pairs of asymptotic morphisms are mapped
to unitarily equivalent morphisms by $\Psi$.

As a start, it is useful to consider a few relevant notions of equivalence for (single) asymptotic morphisms.

We recall that two morphisms $\r,\s : \cB_1 \to \cB_2$, $\cB_1$, $\cB_2$ C$^*$-algebras,  are unitarily equivalent if there exists
a unitary operator $U \in \cB_2$ such that $\r = U \s(\cdot)U^*$.

We say that $U = (U_\l) \in \uAAac$ is an asymptotic unitary if $\poiac(U) \in \Aoi$ is unitary.
It is easy to see that if $U \in \uAAac(W)$, with $W$ a wedge, this is equivalent to
\begin{equation*}
\lim_\k \| [U_{\l_\k}U_{\l_\k}^*-1]\O \| = 0 = \lim_\k \|[U_{\l_\k}^*U_{\l_\k}-1]\O \|. \end{equation*}

%___________________________________________________________________________
\begin{Definition}\label{def:asympmorequiv}
Let $(\r_\l)$, $(\s_\l)$ be asymptotic morphisms of $\cA$ relative to $\uooi$. We say that:

\begin{proplist}{2}
\item $(\r_\l)$, $(\s_\l)$ are \emph{asymptotically equal} if, for all $A \in \cA$,
\begin{equation*}
\lim_\k \| [\r_{\l_\k}(A) -\s_{\l_\k}(A)]\O \| = 0; 
\end{equation*}
\item $(\r_\l)$, $(\s_\l)$ are \emph{asymptotically (inner) equivalent} if there exists
an asymptotic unitary $(U_\l) \in \uAAac$ such that $\r'_\l := U_\l \r_\l(\cdot) U_\l^*$ defines an asymptotic
morphism asymptotically equal to $(\s_\l)$.
\end{proplist}
\end{Definition}
%___________________________________________________________________________

If $(\r_\l)$, $(\s_\l)$ are tame asymptotic morphisms, it is easy to see that if they are asymptotically equal
then the associated morphisms $\br, \bs : \cA \to \Aoi$ coincide, and conversely that given a properly supported morphism $\br :\cA \to \Aoi$ all the tame asymptotic morphisms $(\r_\l)$ of $\cA$ such that $\br = \poiac\rac$ are asymptotically
equal to each other. Similarly, $(\r_\l)$, $(\s_\l)$ are asymptotically equivalent if and only if $\br, \bs$ are unitarily equivalent through a unitary $U_0  = \poiac(U) \in \Aoi$. This also shows that asymptotic equivalence is actually an equivalence relation. Note that in general $\r'_\l = U_\l \r_\l(\cdot) U_\l^*$ will
not be a tame asymptotic morphism, however it makes perfect sense to consider $\poiac(U \rac(\cdot) U^*) = \poiac(U) \br(\cdot) \poiac(U)^*$ which is a morphism from $\cA$ to $\Aoi$.

We also remark that there is a natural right action of $\aut(\cA)$ on the set of asymptotic morphisms of $\cA$
given by associating to $\a \in \aut(\cA)$ and $(\r_\l)$ the asymptotic morphism $(\r_\l\a)$. Moreover,
if $\a(\bigcup_O \cA(O)) \subset \bigcup_O\cA(O)$ we also have that $(\r_\l\a) \in \AMor(\cA)$ for all
$(\r_\l)\in\AMor(\cA)$, and the morphism associated to $(\r_\l \a)$ as in Theorem~\ref{thm:asympmortomor}
is $\br\a$. Furthermore if $\a(\bigcup_O \cA(O)) = \bigcup_O\cA(O)$ the action of $\a$ preserves $\AIso(\cA)$.

%___________________________________________________________________________
\begin{Definition}\label{def:pairasympmorequiv}
We say that $\big((\r_\l),(\phi_\l)\big), \big((\s_\l),(\psi_\l)\big) \in \AMor_\iota(\cA) \times \AIso_\iota(\cA)$ are \emph{asymptotically
equivalent} if $(\r_\l \bp^{-1}\bpsi)$  and $(\s_\l)$ are asymptotically equivalent as asymptotic morphisms (see Definition~\ref{def:asympmorequiv}).
\end{Definition}
%___________________________________________________________________________

Note that, thanks to the fact that $\bp^{-1}\bpsi$ is an automorphism of $\cA$, this is actually an
equivalence relation.

%___________________________________________________________________________
\begin{Proposition}\label{prop:pairasympequiv}
Assume that $\cA$ has a convergent scaling limit and that $\cA$ and $\Aoi$ are isomorphic as C$^*$-algebras, through a properly supported isomorphism.

If $\big((\r_\l),(\phi_\l)\big), \big((\s_\l),(\psi_\l)\big) \in \AMor_\iota(\cA) \times \AIso_\iota(\cA)$ are asymptotically equivalent, then $\r_0 = \Psi\big((\r_\l),(\phi_\l)\big)$ and
$\s_0=\Psi \big((\s_\l),(\psi_\l)\big)$ are unitarily equivalent.

Conversely, if $\r_0, \s_0 \in \Morps(\Aoi)$ are unitarily equivalent, there exist asymptotically
equivalent $\big((\r_\l),(\phi_\l)\big), \big((\s_\l),(\psi_\l)\big) \in \AMor_\iota(\cA) \times \AIso_\iota(\cA)$ such that
$$\r_0 = \Psi\big((\r_\l),(\phi_\l)\big), \quad\s_0 = \Psi \big((\s_\l),(\psi_\l)\big).$$
\end{Proposition}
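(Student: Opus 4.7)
The plan is to translate both directions into the equivalent statement for the associated morphisms $\cA\to\Aoi$, using the dictionary built in Theorems~\ref{thm:asympmor}, \ref{thm:asympmortomor}, Lemma~\ref{lem:asympiso}, and the remark following Definition~\ref{def:asympmorequiv} which says that two tame asymptotic morphisms of $\cA$ are asymptotically equivalent if and only if their corresponding morphisms $\cA\to\Aoi$ are unitarily equivalent through a unitary of the form $\poiac(U)\in\Aoi$, $U\in\uAAac$ an asymptotic unitary.

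For the forward direction, write $\br=\poiac\rac$, $\bs=\poiac\bs^\bullet$, $\bp=\poiac\pac$, $\bpsi=\poiac\bpsi^\bullet$, so that by Theorem~\ref{thm:difference} we have $\r_0=\br\bp^{-1}$ and $\s_0=\bs\bpsi^{-1}$. Note that $\bp^{-1}\bpsi$ is an automorphism of $\cA$ which preserves $\bigcup_O\cA(O)$, because both $\bp$ and $\bpsi$ are properly supported isomorphisms. Hence, by the observation on the right action of $\aut(\cA)$ on $\AMor(\cA)$ made right after Definition~\ref{def:asympmorequiv}, the asymptotic morphism $(\r_\l\bp^{-1}\bpsi)$ is tame and its associated morphism $\cA\to\Aoi$ is $\br\bp^{-1}\bpsi$. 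Asymptotic equivalence of the two pairs, according to Definition~\ref{def:pairasympmorequiv}, thus gives a unitary $U_0\in\Aoi$ with $\br\bp^{-1}\bpsi=U_0\bs(\cdot)U_0^*$. Composing both sides on the right with $\bpsi^{-1}$ yields $\r_0=\br\bp^{-1}=U_0\bs\bpsi^{-1}(\cdot)U_0^*=U_0\s_0(\cdot)U_0^*$, which is the desired unitary equivalence in $\Morps(\Aoi)$.

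For the converse, let $U_0\in\Aoi$ be unitary with $\r_0=U_0\s_0(\cdot)U_0^*$. Fix any properly supported isomorphism $\bp:\cA\to\Aoi$ (it exists by hypothesis) and, by Lemma~\ref{lem:asympiso}, a corresponding asymptotic isomorphism $(\phi_\l)\in\AIso_\iota(\cA)$. The maps $\br:=\r_0\bp$ and $\bs:=\s_0\bp$ are properly supported morphisms $\cA\to\Aoi$, so Theorem~\ref{thm:asympmor} lifts them to tame asymptotic morphisms $(\r_\l),(\s_\l)\in\AMor_\iota(\cA)$ with $\poiac\rac=\br$, $\poiac\bs^\bullet=\bs$. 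The formula of Theorem~\ref{thm:difference} immediately gives $\Psi((\r_\l),(\phi_\l))=\br\bp^{-1}=\r_0$ and $\Psi((\s_\l),(\phi_\l))=\bs\bp^{-1}=\s_0$. With the common choice $(\psi_\l):=(\phi_\l)$, Definition~\ref{def:pairasympmorequiv} reduces to asymptotic equivalence of $(\r_\l)$ and $(\s_\l)$ as single asymptotic morphisms of $\cA$, which by the quoted remark amounts to $\br$ and $\bs$ being unitarily equivalent through a unitary of the form $\poiac(U)$ with $U\in\uAAac$ asymptotically unitary. Since $\br=U_0\bs(\cdot)U_0^*$ by construction, the only remaining issue is to lift $U_0\in\Aoi$ to such a $U$. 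This follows because $\poiac:\uAAac\to\Aoi$ is a $*$-homomorphism of C$^*$-algebras, hence has norm-closed image, and that image contains $\bigcup_O\Aoi(O)$ by Theorem~\ref{thm:steplimit}, so it is all of $\Aoi$; any lift $U$ of $U_0$ is then automatically an asymptotic unitary by definition.

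The main obstacle I expect is purely bookkeeping, namely checking that the automorphism $\bp^{-1}\bpsi$ preserves the net of local algebras so that $(\r_\l\bp^{-1}\bpsi)$ is a bona fide element of $\AMor_\iota(\cA)$ with associated morphism $\br\bp^{-1}\bpsi$; once this is in place, both directions reduce to the already-established correspondence between asymptotic (inner) equivalence of tame asymptotic morphisms and unitary equivalence in $\Aoi$ of the corresponding morphisms $\cA\to\Aoi$, and the proof proceeds by straightforward composition with $\bp^{-1}$ (or $\bpsi^{-1}$) on the right.
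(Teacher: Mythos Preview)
Your proof is correct and follows the same approach as the paper's: both directions are reduced, via the identities $\r_0=\br\bp^{-1}$ and $\s_0=\bs\bpsi^{-1}$ from Theorem~\ref{thm:difference}, to the correspondence between asymptotic equivalence of tame asymptotic morphisms and unitary equivalence of the associated morphisms $\cA\to\Aoi$, and in the converse the paper likewise chooses the same asymptotic isomorphism $(\phi_\l)$ for both pairs. Your write-up is in fact slightly more explicit than the paper's at two points the paper leaves implicit: you spell out why $(\r_\l\bp^{-1}\bpsi)$ is tame (via the right $\aut(\cA)$-action) and why the unitary $U_0\in\Aoi$ lifts to an asymptotic unitary in $\uAAac$ (via surjectivity of $\poiac$ from Theorem~\ref{thm:steplimit}).
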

%___________________________________________________________________________

%___________________________________________________________________________
\begin{proof}
Let $\big((\r_\l),(\phi_\l)\big), \big((\s_\l),(\psi_\l)\big) \in \AMor_\iota(\cA) \times \AIso_\iota(\cA)$ be asymptotically equivalent. Then according to the remarks following Definition~\ref{def:asympmorequiv},
there exists a unitary operator $U_0 \in \Aoi$ such that
\begin{equation*}
\br \bp^{-1}\bpsi= U_0\bs(\cdot)U_0^*, 
\end{equation*}
where, as usual, $\br = \poiac\rac$, $\bs = \poiac\bs^\bullet$, $\bp = \poiac\pac$, $\bpsi = \poiac\bpsi^\bullet$.
Therefore, according to formula~\eqref{eq:difference}, one has, for all $A \in \Aoi$,
\begin{equation*}
\r_0(A) = U_0\bs \bpsi^{-1}(A)U_0^* = U_0 \s_0(A)U_0^*
\end{equation*}
which shows the first part of the statement.

Assume now that $\r_0, \s_0 \in \Morps(\Aoi)$ are unitarily equivalent, and consider a properly supported isomorphism
$\bp : \cA \to \Aoi$. Then, thanks to Theorem~\ref{thm:difference}, there exist $(\r_\l), (\s_\l) \in \AMor(\cA)$
and $(\phi_\l) \in \AIso(\cA)$ such that $\bp = \poiac\pac$ and
$\r_0 = \Psi\big((\r_\l),(\phi_\l)\big)$, $\s_0 = \Psi\big((\s_\l),(\phi_\l)\big)$.
It then follows from formula~\eqref{eq:difference} that $\br = \poiac\rac$, $\bs = \poiac\bs^\bullet$ are
unitarily equivalent, and therefore, again using the remark following Definition~\ref{def:asympmorequiv},
$(\r_\l) = (\r_\l\bp^{-1}\bp)$ and $(\s_\l)$ are asymptotically equivalent.
\end{proof}
%___________________________________________________________________________

Furthermore, given $\big((\r_\l),(\phi_\l)\big), \big((\s_\l),(\psi_\l)\big) \in \AMor_\iota(\cA) \times \AIso_\iota(\cA)$, one sees at once, again by the remarks following Definition~\ref{def:asympmorequiv}, that $\Psi\big((\r_\l),(\phi_\l)\big)= \Psi\big((\s_\l),(\psi_\l)\big)$ if and only if the asymptotic morphisms $(\r_\l\bp^{-1}\bpsi)$ and $(\s_\l)$ are asymptotically
equal.

We summarize the above discussion in the following result.

%___________________________________________________________________________
\begin{Corollary}
If $\cA$ has a convergent scaling limit, there is a bijective correspondence between the asymptotic equivalence classes of $\AMor_\iota(\cA) \times
\AIso_\iota(\cA)$ and the unitary equivalence classes of $\Morps(\Aoi)$.
\end{Corollary}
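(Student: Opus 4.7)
The plan is to observe that the Corollary is essentially a direct consequence of Theorem~\ref{thm:difference} and Proposition~\ref{prop:pairasympequiv}, so the argument reduces to showing that the map $\Psi$ of~\eqref{eq:Psi} descends to a bijection $\bar\Psi$ on equivalence classes. Throughout, I implicitly assume the existence of a properly supported isomorphism $\bp: \cA \to \Aoi$, since otherwise $\AIso_\iota(\cA)$ is empty by Lemma~\ref{lem:asympiso} and there is nothing to state or prove.

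First I would check that $\Psi$ is compatible with the equivalence relations on source and target, so that the passage to the quotients is well-defined: this is exactly the first half of Proposition~\ref{prop:pairasympequiv}, which says that asymptotically equivalent pairs are sent to unitarily equivalent morphisms of $\Aoi$. Surjectivity of the induced map $\bar\Psi$ is then automatic from the surjectivity of $\Psi$ itself, established in Theorem~\ref{thm:difference}.

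The main step, and indeed the content of the Corollary not already contained in the surjectivity statement of Theorem~\ref{thm:difference}, is the injectivity of $\bar\Psi$. Given two pairs $\big((\r_\l),(\phi_\l)\big), \big((\s_\l),(\psi_\l)\big)$ mapping to unitarily equivalent $\r_0, \s_0 \in \Morps(\Aoi)$ with $\r_0 = U_0 \s_0(\cdot) U_0^*$ for some unitary $U_0 \in \Aoi$, substituting formula~\eqref{eq:difference} yields $\br\bp^{-1}\bpsi = U_0\bs(\cdot)U_0^*$. Since $\bp^{-1}\bpsi \in \aut(\cA)$ preserves $\bigcup_O\cA(O)$, the remarks following Definition~\ref{def:asympmorequiv} identify $\br\circ(\bp^{-1}\bpsi)$ as the morphism associated via Theorem~\ref{thm:asympmortomor} to $(\r_\l\bp^{-1}\bpsi) \in \AMor_\iota(\cA)$. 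Those same remarks characterize asymptotic equivalence of tame asymptotic morphisms as unitary equivalence of their associated morphisms through a unitary of the form $\poiac(U) \in \Aoi$. Therefore $(\r_\l\bp^{-1}\bpsi)$ and $(\s_\l)$ are asymptotically equivalent, which by Definition~\ref{def:pairasympmorequiv} means that the original pairs are asymptotically equivalent, establishing injectivity.

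The only point requiring care, which I expect to be the mildest of obstacles, is the fact that the relevant unitary $U_0 \in \Aoi$ can indeed be realized as $\poiac(U)$ for some asymptotic unitary $U \in \uAAac$; this relies on Theorem~\ref{thm:steplimit} together with the hypothesis of convergent scaling limit, and it is the same point that already underlies the second half of Proposition~\ref{prop:pairasympequiv}. With this in hand, the proof of the Corollary is a purely formal bookkeeping of the content of Proposition~\ref{prop:pairasympequiv} and the characterizations of asymptotic equivalence given immediately before it.
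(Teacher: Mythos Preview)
Your proposal is correct and follows essentially the same approach as the paper: well-definedness from the first half of Proposition~\ref{prop:pairasympequiv}, surjectivity from Theorem~\ref{thm:difference}, and injectivity from the characterization of asymptotic equivalence in terms of the associated morphisms. The only minor difference is in the injectivity step: the paper invokes the second half of Proposition~\ref{prop:pairasympequiv} together with the remark following it (producing auxiliary pairs and using transitivity), whereas you argue directly that any two given pairs with unitarily equivalent $\Psi$-images are asymptotically equivalent by reversing the computation of the first half of Proposition~\ref{prop:pairasympequiv}; your route is slightly more economical but relies on the same ingredients, in particular the surjectivity $\poiac(\uAAac) = \Aoi$ from Theorem~\ref{thm:steplimit} that you correctly flag.
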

%___________________________________________________________________________

%___________________________________________________________________________
\begin{proof}
The map which associates to the class of $\big((\r_\l),(\s_\l)\big) \in \AMor_\iota(\cA) \times
\AIso_\iota(\cA)$ the class of $\Psi\big((\r_\l),(\s_\l)\big) \in \Morps(\Aoi)$ is well defined thanks
to the first part of Proposition~\ref{prop:pairasympequiv}, injective thanks to the second part
of the same proposition together with the remark following it, and surjective thanks to the
first part of Theorem~\ref{thm:difference}.
\end{proof}
%___________________________________________________________________________

%===========================================================================
\section{Localized asymptotic morphisms}\label{sec:localized}
Throughout this section, we assume that $\cA$ has a convergent scaling limit. We now wish to focus our attention on pairs of asymptotic morphisms $\big((\r_\l),(\phi_\l)\big)$ giving rise, via the map $\Psi$, to \emph{localized} morphisms of $\Aoi$. We recall that a morphism $\r$ of a local net $\cA$ is localized if there exists a double cone $O$ such that $\r(A) = A$ if $A \in \cA(O')$. A localized morphism is properly supported if $\cA$ satisfies Haag duality.
A natural guess would be that $\Psi\big((\r_\l),(\phi_\l)\big)$ being localized is equivalent to requiring
\begin{equation*}
\lim_\k \| [\r_{\l_\k}(A) - \phi_{\l_\k}(A)]\O \| = 0, \qquad A \in \cA,\, \bp(A) \in \Aoi(O').
\end{equation*} 
There are however a number of technical problems in proving this, including poor control on $\bp^{-1}(\Aoi(O'))$.

Instead of tackling this issue directly, here we appeal to the characterization of localized morphisms of $\Aoi$ in terms
of asymptotic charge transfer chains (ACTC) of $\cA$~\cite{M1, M2}. These are suitable sequences $(\uU_k)_{k \in \bN} \subset \uAA$  which behave, roughly speaking, as charge transfer chains (CTC)\footnote{The notion of charge transfer chain can be traced back to~\cite{DHR} (see also~\cite{R}). Here we follow the formal definitions and results of~\cite{M1,M2}.} for $\l \to 0$, i.e., $\uU_{k,\l}$ is asymptotically 
bilocalized in a couple of regions, one of which goes to spacelike infinity as $k \to +\infty$, while
$\uU_{k,\l}^*\uU_{h,\l}$ is asymptotically bilocalized 
in a couple of regions both going to spacelike infinity as $k,h \to \infty$. Moreover one has
\begin{equation*}
\r_0^{\uU}(A) = \slim_{k\to\infty}\poi(\uU_k)A\poi(\uU_k)^*, \qquad A \in \Aoi,
\end{equation*}
with $\r_0^\uU$ a localized morphism of $\Aoi$, and all such morphisms arise in this way. We will call $\r_0^\uU$ the scaling limit morphism associated to $(\uU_k)$, and we will denote by $\mathrm{ACTC}(\cA)$ the set of asymptotic charge transfer chains of $\cA$.

Since, as shown in the previous section, there is a one-to-one correspondence
between classes of morphisms of $\Aoi$, localized or not, and classes of pairs of asymptotic morphisms of $\cA$, this will provide us with a description of those pairs of asymptotic morphisms corresponding to localized morphisms in terms of asymptotic charge transfer chains.

%___________________________________________________________________________
\begin{Definition}
We say that a pair $\big((\r_\l),(\phi_\l)\big) \in \AMor_\iota(\cA) \times\AIso_\iota(\cA)$
 is \emph{localized} if $\Psi\big((\r_\l),(\phi_\l)\big) \in \Morps(\Aoi)$ is a localized morphism of $\Aoi$.
\end{Definition}
%___________________________________________________________________________

%___________________________________________________________________________
\begin{Proposition}
The pair $\big((\r_\l),(\phi_\l)\big) \in \AMor_\iota(\cA) \times\AIso_\iota(\cA)$ is localized if and only if
there exists an ACTC $(\uU_k)_{k \in \bN}$ such that
\begin{equation}\label{eq:localasympmor}
\slim_{k \to +\infty} \poiac(\uU_k \pac(A)\uU_k^*) = \poiac\rac(A), \qquad A \in \cA.
\end{equation}
\end{Proposition}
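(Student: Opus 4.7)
The plan is to prove this by direct translation through the bijection $\Psi$ established in Section~\ref{sec:asympmor}, combined with the known ACTC characterization of localized scaling-limit morphisms recalled above.

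First I would unpack the definition. By Theorem~\ref{thm:difference} and the observation that $\bp^{-1}=(\pac)^{-1}\bar s$, the morphism $\r_0 := \Psi\big((\r_\l),(\phi_\l)\big)$ satisfies $\r_0 = \br\bp^{-1}$, where $\br = \poiac\rac$ and $\bp = \poiac\pac$. Thus $((\r_\l),(\phi_\l))$ is localized, by definition, precisely when $\r_0 = \br\bp^{-1}$ is a localized (and properly supported) morphism of $\Aoi$.

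Next I would invoke the ACTC characterization mentioned in the text: a properly supported morphism $\r_0$ of $\Aoi$ is localized if and only if there exists an ACTC $(\uU_k)_{k\in\bN}\subset\uAA$ with
\begin{equation*}
\r_0(A_0) = \slim_{k\to\infty}\poi(\uU_k)A_0\poi(\uU_k)^{*}, \qquad A_0\in\Aoi.
\end{equation*}
Setting $A_0 := \bp(A) = \poiac\pac(A)$ for $A\in\cA$ and using that $\poiac$ extends $\poi$ and is a $*$-morphism (Theorem~\ref{thm:poiextend}), this is equivalent to
\begin{equation*}
\poiac\rac(A) = \br(A) = \r_0(\bp(A)) = \slim_{k\to\infty}\poiac(\uU_k\pac(A)\uU_k^{*}), \qquad A\in\cA,
\end{equation*}
which is exactly equation~\eqref{eq:localasympmor}. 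This yields the forward implication.

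For the converse, assume~\eqref{eq:localasympmor} holds for some ACTC $(\uU_k)$. Running the same computation backwards gives
\begin{equation*}
\br(A) = \slim_{k\to\infty}\poi(\uU_k)\bp(A)\poi(\uU_k)^{*}, \qquad A\in\cA,
\end{equation*}
and, since $\bp:\cA\to\Aoi$ is a properly supported isomorphism by Lemma~\ref{lem:asympiso}, every $A_0\in\Aoi$ is of the form $\bp(A)$ for a unique $A\in\cA$; moreover, both sides are morphisms in $A_0$ that are norm-continuous, so the identity extends to all of $\Aoi$ and yields $\r_0 = \br\bp^{-1}$ equal to the scaling-limit morphism $\r_0^{\uU}$ associated with the ACTC. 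In particular $\r_0$ is localized, so the pair is localized.

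The only subtle point is making sure that the strong operator limit passes through the isomorphism $\bp$ correctly, but this is immediate because $\bp$ acts on the very Hilbert space $\Hoi$ where the $\slim$ is taken, and the substitution $A_0 = \bp(A)$ is a bijection of $\Aoi$ onto itself. Everything else is a formal rewriting using the morphism property of $\poiac$ and the definition of $\Psi$.
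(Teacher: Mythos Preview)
Your proof is correct and follows essentially the same route as the paper's: both directions go through the identity $\r_0\bp = \poiac\rac$ together with the multiplicativity of $\poiac$ and the ACTC description of localized morphisms of $\Aoi$, so that \eqref{eq:localasympmor} is just the relation $\r_0^{\uU}(\bp(A)) = \br(A)$ rewritten. The only cosmetic slip is calling $A_0 = \bp(A)$ ``a bijection of $\Aoi$ onto itself'' (it is a bijection $\cA \to \Aoi$), and the remark about norm-continuity is superfluous since $\bp$ is already surjective by Lemma~\ref{lem:asympiso}.
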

%___________________________________________________________________________

%___________________________________________________________________________
\begin{proof}
Assume that $\r_0 := \Psi\big((\r_\l),(\phi_\l)\big) \in \Morps(\Aoi)$ is localized. Then there exists an ACTC
 $(\uU_k)_{k \in \bN}$ such that
 \begin{equation*}
\slim_{k \to +\infty} \poi(\uU_k) A_0 \poi(\uU_k)^* =  \r_0(A_0), \qquad A_0 \in \Aoi.
\end{equation*}
Thanks to Theorems~\ref{thm:asympmortomor}, \ref{thm:difference}, this implies immediately
\begin{equation*}\begin{split}
\slim_{k \to +\infty} \poiac(\uU_k \pac(A)\uU_k^*) &=  \slim_{k \to +\infty} \poi(\uU_k)\poiac(\pac(A)) \poi(\uU_k)^*\\
&= \r_0(\poiac(\pac(A)) ) = \r_0\bp(A) = \poiac\rac(A).
\end{split}\end{equation*}

Conversely, if formula~\eqref{eq:localasympmor} holds for some ACTC $(\uU_k)_{k \in \bN}$, then
the same computation shows that $\poiac\rac = \r_0^{\uU}\bp$, i.e.\ $\Psi\big((\r_\l),(\phi_\l)\big) = \r_0^{\uU}$ is a localized morphism of $\Aoi$.
\end{proof}
%___________________________________________________________________________

Obviously, there is a map from classes of asymptotic charge transfer chains of $\cA$ to
classes of pairs of asymptotic morphisms of $\cA$ which makes the following diagram commutative:
\begin{equation*}
\begin{array}{ccc}
\D(\Aoi)/\cong     & \hookrightarrow & \Morps(\Aoi)/\cong \\
\updownarrow   &               & \updownarrow \\
\mathrm{ACTC}(\cA)/\sim & \to &\AMor_\iota(\cA)\times\AIso_\iota(\cA)/\sim 
\end{array}
\end{equation*}
where we have denoted by $\D(\Aoi)$ the semigroup of localized morphisms of the net $\Aoi$. Thanks to the above proposition, we see that a class $[(\uU_k)_{k \in \bN}] \in \mathrm{ACTC}(\cA)/\sim$
gets mapped onto a class $\big[ \big((\s_\l),(\psi_\l)\big)\big] \in \AMor_\iota(\cA)\times\AIso_\iota(\cA)/\sim$,
if and only if there exists a unitary operator $W$ on $\Hoi$ such that
\begin{equation*}
\slim_{k \to +\infty} \poiac(\uU_k \bpsi^\bullet(A)\uU_k^*) = W\poiac\bs(A)W^*, \qquad A \in \cA.
\end{equation*}

By the above, there is also a natural notion of transportability of localized pairs of asymptotic morphisms. We skip the easy details.

Of course, in view of applications to superselection theory, an important question that we plan to address elsewhere is to find some characterization of further properties of ``physical morphisms'' (like being irreducible, covariant, with finite statistics, having a conjugate...) of $\Aoi$ in terms of suitable pairs of asymptotic morphisms.

%===========================================================================
\section{Asymptotic morphisms and preserved sectors}\label{sec:preserved}
We now want to discuss some relation between the asymptotic morphisms of $\cA$ introduced in Theorem~\ref{thm:asympmor} and the morphism $\rho_0$ of $\Aoi$ associated to a preserved sector as defined in~\cite{DMV}. These are the sectors of $\cA$ which ``survive'' to the scaling limit, and thus give rise to sectors of $\Aoi$. Throughout this section, we assume that $d=3,4$.

To this end, we need to introduce the canonical DR field net $\cF$ determined by the superselection sectors of $\cA$~\cite{DR}. The corresponding scaling algebra $\uFF$ and scaling limit net $\Foi$ have been defined in~\cite{DMV}. There, also the concept of (bounded) functions $\l \mapsto F_\l \in \cF$, $F_\l \in \cF(\l O)$, asymptotically contained in $\Foi(\hat O)$ is defined, in a way analogous to Definition~\ref{def:asymcont}.
Using then arguments similar to those of Section~\ref{sec:liftrep}, it is not difficult to show that
the set $\uFFac(O)$ of elements asymptotically contained in $\Foi(\hat O)$ for all $\hat O \supset \bar O$ is a C$^*$-algebra, and that the scaling limit representation $\poi : \uFF \to B(\cH_{\Foi})$ lifts to a representation $\poiac : \uFFac \to B(\cH_{\Foi})$ such that, for all $F \in \uFFac(O)$, and for all $\d$-sequences $(h_\n)_{\n \in \bN}$,
\begin{equation*}
\sslim_{\n \to +\infty}\poi(\ua_{h_\n}F) = \poiac(F).
\end{equation*}
Furthermore the restriction of $\poiac$ to $\uAAac$ and $\Hoi$ coincides with the extension to $\uAAac$ of the restriction of $\poi$ to $\uAA$ and $\Hoi$, cf.~\cite[p.\ 491]{CM}. We will still use for these restrictions the notations $\poiac$, $\poi$ respectively.
Denoting by $g \in G \mapsto \b_g \in \aut(\cF)$ the action of the canonical gauge group $G$ on $\cF$, it is also easy to verify that
\begin{equation*}
\uFFac(O)^G := \{ F \in \uFFac(O)\,:\,\b_g(F_\l) = F_\l\,\,\forall g \in G,\l>0\} = \uAAac(O).
\end{equation*}

We first mention a result similar in spirit to~\cite[prop.\ 5.6]{DMV} but based on
the notion of ACTC as discussed in~\cite{M1, M2}. This is an easy consequence
of the above definitions and known facts. We put it on record here since its formulation,
in which only observables appear, can be subject to interesting generalizations.

%___________________________________________________________________________
\begin{Proposition}\label{prop:ACTCpreserved}
Let $\xi$ be a preserved sector of the local net $\cA$. For any double cone $O$, there exists an ACTC $(\uU_k)_{k \in \bN}\subset \uAA$
such that:
\begin{proplist}{3}
\item \label{it:limukl}there holds, for $A \in \cA$ and $\l > 0$,
\begin{equation}\label{eq:limukl}
\slim_{k \to +\infty} \uU_{k,\l}A\uU_{k,\l}^* = \rho(\l) (A), 
\end{equation}
where $\rho(\l)$ are morphisms of $\cA$ of class $\xi$ localized in $\l O$;
\item \label{it:limpouk}there holds, for $A_0 \in \Aoi$,
\begin{equation*}
\slim_{k\to +\infty} \poi(\uU_k) A_0 \poi(\uU_k)^ *= \rho_0(A_0)
\end{equation*}
with $\rho_0 = \rho_0^\uU$ a morphism of $\Aoi$ localized in any $\hat O \supset \bar O$;
\item \label{it:limahur} for each $A \in \uAAac$ the function $\r^\bullet(A) : \l \mapsto \r(\l)(A_\l)$ belongs to $\uAAac$ and
\begin{equation}\label{eq:poiacrhoAbul}
\poiac(\r^\bullet(A))=\r_0(\poiac(A)).
\end{equation}

\end{proplist}
\end{Proposition}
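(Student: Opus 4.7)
The plan is to combine three ingredients: the construction of representative morphisms for preserved sectors from~\cite{DMV}, standard DHR charge transfer chains at each fixed scale, and the ACTC framework of~\cite{M1,M2}.

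For part~\ref{it:limukl}, I start from a representative $\rho$ of $\xi$ localized in $O$ and invoke the definition of preserved sector to obtain morphisms $\rho(\l)$ in class $\xi$ localized in $\l O$. For each fixed $\l$, standard DHR theory yields a CTC $(U_{k,\l})_{k \in \bN}$ of unitaries in $\cA$ with $\slim_k U_{k,\l}(\cdot)U_{k,\l}^* = \rho(\l)$. To assemble these into functions $\uU_k : \l \mapsto U_{k,\l}$ belonging to $\uAA$, I would apply the standard scaling algebra smoothing (convolution against $\ua$-orbits) in order to enforce norm-continuity of translations. The asymptotic bilocalization as $\l \to 0$, with one of the two regions escaping spacelike to infinity as $k \to +\infty$, that makes $(\uU_k)$ an ACTC, is inherited from the analogous property of DHR charge transporters at each scale combined with the localization of $\rho(\l)$ in $\l O$. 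Equation~\eqref{eq:limukl} is then just a restatement of the CTC property at fixed $\l$.

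Part~\ref{it:limpouk} is an immediate application of the ACTC theory of~\cite{M1,M2}: once $(\uU_k)$ is an ACTC, the strong limit $\rho_0^\uU := \slim_k \poi(\uU_k)(\cdot)\poi(\uU_k)^*$ defines a morphism of $\Aoi$ localized in every $\hat O \supset \bar O$, and we set $\rho_0 := \rho_0^\uU$.

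For part~\ref{it:limahur}, let $A \in \uAAac(O_A)$ and pick $\hat O \supset \bar O \cup \bar O_A$. Localization of $\rho(\l)$ in $\l O$ and of $A_\l$ in $\l O_A$ gives $\rho(\l)(A_\l) \in \cA(\l \hat O)$, with $\sup_\l \|\rho(\l)(A_\l)\| \leq \sup_\l \|A_\l\|$. To prove $\r^\bullet(A) \in \uAAac(\hat O)$, given $\varepsilon > 0$ and $\hat O_1 \supset \bar{\hat O}$, I would first approximate $A$ asymptotically by $\uA \in \uAA(\tilde O)$ with $\bar{\tilde O} \subset \hat O_1$, and then approximate $\r^\bullet(\uA)$ by $\uU_k \uA \uU_k^*$ for $k$ sufficiently large. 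That $\uU_k \uA \uU_k^*$ belongs to $\uAA(\hat O_1)$ follows, up to asymptotic corrections absorbable in the $\limsup$, from the bilocalization of $\uU_k$: for $k$ large the far component of $\uU_k$ is asymptotically spacelike to $\tilde O$, so conjugation reduces asymptotically to the near component, whose localization is contained in $\hat O_1$. The approximation in the GNS norm at $\O$ is obtained in the spirit of Lemma~\ref{lem:asymprod}. Once $\r^\bullet(A) \in \uAAac$ is established, the identity~\eqref{eq:poiacrhoAbul} follows from~\eqref{eq:poiextend} together with the multiplicativity of $\poiac$: one has $\poiac(\uU_k A \uU_k^*) = \poi(\uU_k)\poiac(A)\poi(\uU_k)^*$, which tends strongly to $\r_0(\poiac(A))$ by part~\ref{it:limpouk}, while the same approximation arguments show that $\poiac(\uU_k A \uU_k^*) \to \poiac(\r^\bullet(A))$ as $k \to \infty$.

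The main obstacle I anticipate is the first statement of~\ref{it:limahur}: for each fixed $\l$, strong convergence $\uU_{k,\l}A_\l \uU_{k,\l}^* \to \rho(\l)(A_\l)$ is immediate from the CTC property, but showing asymptotic containment in $\Aoi(\hat O_1)$ requires controlling both the $k$- and the $\l$-limit jointly in the GNS norm, exploiting the asymptotic bilocalization built into the ACTC definition. This delicate interchange of limits is the main technical step, analogous in flavour to the one carried out in the proof of Lemma~\ref{lem:asymprod}.
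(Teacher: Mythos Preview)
Your strategy diverges from the paper's in a way that leaves a real gap in part~\ref{it:limahur}. The paper does \emph{not} try to build the ACTC out of scale-wise DHR charge transporters inside $\cA$. Instead it uses that the very definition of ``preserved sector'' in~\cite[Def.~5.4]{DMV} provides a multiplet $\psi_j : \l \mapsto \psi_j(\l) \in \cF(\l O)$ with $\psi_j \in \uFFac(O)$, and then sets
\[
\uU_{k,\l} = \sum_{j=1}^d (\ua_{h_k}\psi_j)_\l\, \ua_{x_k}(\ua_{h_k}\psi_j)^*_\l,\qquad
\r(\l)(A) = \sum_j \psi_j(\l)A\psi_j(\l)^*,\qquad
\r_0(A_0) = \sum_j \poiac(\psi_j)A_0\poiac(\psi_j)^*.
\]
With this choice, part~\ref{it:limahur} becomes purely algebraic: $\r^\bullet(A) = \sum_j \psi_j A \psi_j^*$ lies in $\uFFac$ simply because $\uFFac$ is a C$^*$-algebra, and it lies in $\uAAac$ by gauge invariance. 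Equation~\eqref{eq:poiacrhoAbul} is then immediate from the fact that $\poiac$ is a $*$-homomorphism on $\uFFac$. No interchange of limits is needed.

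Your route, by contrast, stays inside $\cA$ and tries to reach $\r^\bullet(A)$ as a $k\to\infty$ limit of $\uU_k A \uU_k^*$, then pass through $\poiac$. You correctly identify the obstacle --- controlling the $k$-limit and the $\l$-limit jointly in the GNS norm --- but you do not resolve it, and I do not see how to without precisely the uniformity in $\l$ that the condition $\psi_j \in \uFFac(O)$ supplies. The same issue already bites in your construction of $(\uU_k)$: assembling scale-wise CTCs and then smoothing does not obviously yield functions in $\uAA$ that are still CTCs at each fixed $\l$, nor does it give the asymptotic bilocalization needed for the ACTC property uniformly enough to invoke~\cite{M2}. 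The field-net input is what makes all of this automatic; working only with observables, you are missing the key structural ingredient that the hypothesis ``$\xi$ is preserved'' actually hands you.
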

%___________________________________________________________________________

%___________________________________________________________________________
\begin{proof}
According to~\cite[def.\ 5.4]{DMV}, for each $O$ we can find a multiplet $\psi_j(\l) \in \cF(\l O)$, $j=1,\dots,d$, of class $\xi$ such that the functions $\psi_j:\l \mapsto \psi_j(\l)$ belong to $\uFFac(O)$. Let now $x_k \in W$ be a sequence converging to spacelike infinity in a given wedge $W$ and $(h_k)_{k \in \bN}$ be a $\delta$-sequence of non-negative continuous, compactly supported functions on $\bR^d$. Define
\begin{equation}\label{eq:ACTCpreserved}
\uU_{k,\l} = \sum_{j=1}^d (\ua_{h_k}\psi_j)_\l \ua_{x_k}(\ua_{h_k}\psi_j)^*_\l.
\end{equation}
Then properties \ref{it:limukl} and \ref{it:limpouk} follow from~\cite[thm.\ 4.4]{M2}. This also implies that
indeed
\begin{equation*}
\r(\l)(A) := \sum_{j=1}^d\psi_j(\l)A\psi_j(\l)^*, \qquad A \in \cA,
\end{equation*}
and, using also~\cite[prop.\ 5.5]{DMV} and Lemma~\ref{lem:poiextendlim},
\begin{equation}\label{eq:preservedmor}
\r_0(A_0) = \sum_{j=1}^d\poiac(\psi_j)A_0\poiac(\psi_j)^*, \qquad A_0 \in \Aoi.
\end{equation}
The localization property of $\r_0$ is then a consequence of the fact that $\poiac(\uFFac(O)) \subset \cF_{0,\iota,r}(O)$, cf.\ the proof of Thm.~\ref{thm:poiextend}.
For what concerns property~\ref{it:limahur}, note that since $A \in \uAAac \subset \uFFac$, $\psi_j \in \uFFac$, $j=1,\dots, d$, and $\uFFac$ is a C$^*$-algebra, it follows at once that $\r^\bullet(A) \in \uFFac$. It is then also clear that $\b_g(\r(\l)(A_\l)) = \r(\l)(A_\l)$ for all $g\in G$, $\l > 0$, and therefore by the above remark $\r^\bullet(A) \in \uAAac$. Then equation~\eqref{eq:poiacrhoAbul} follows immediately from the fact that $\poiac$ is a representation of $\uFFac$.
\end{proof}
%___________________________________________________________________________

We now show that, for preserved sectors, there is a preferred choice of the associated asymptotic
morphism.

%___________________________________________________________________________
\begin{Theorem}\label{thm:preserved}
Let $\xi$ be a preserved sector of $\cA$, and $\r_0 \in \Morps(\Aoi)$, $\r(\l) \in \Morps(\cA)$ the associated morphisms as in Prop.~\ref{prop:ACTCpreserved}. If $(\phi_\l) \in \AIso_\iota(\cA)$, defining
\begin{equation}\label{eq:preservedasympmor}
\r_\l := \r(\l)\phi_\l, \qquad \l > 0,
\end{equation}
there holds
that $(\r_\l) \in \AMor_\iota(\cA)$ and $\Psi\big((\r_\l),(\phi_\l)\big) = \r_0$.
\end{Theorem}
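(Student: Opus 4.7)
The plan is to take $\rac(A) := \r^\bullet(\pac(A))$ as candidate, with $\r^\bullet$ as in Proposition~\ref{prop:ACTCpreserved}(iii), so that $\r_\l(A) = \r^\bullet(\pac(A))_\l$, and then to verify separately tameness, the asymptotic morphism axioms, and the identity $\Psi\big((\r_\l),(\phi_\l)\big) = \r_0$.

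First I would dispatch tameness together with the $\Psi$-identification. Property (i) of Definition~\ref{def:tameasympmor} is immediate from Proposition~\ref{prop:ACTCpreserved}(iii) since $\pac(A)\in\uAAac$; property (ii) follows from the norm-contractivity of each $\r(\l)$ (being a $*$-homomorphism of $\cA$) combined with the norm-continuity of $\pac$ granted by the tameness of $(\phi_\l)$. The decisive computation is the intertwining $\poiac\circ\r^\bullet=\r_0\circ\poiac$ on $\uAAac$ provided by Proposition~\ref{prop:ACTCpreserved}(iii), which yields
\[
\br := \poiac\circ\rac = \poiac\circ\r^\bullet\circ\pac = \r_0\circ\poiac\circ\pac = \r_0\,\bp.
\]
Since $\bp$ is properly supported and $\r_0$ is localized, hence carries local algebras to local algebras, $\br$ is properly supported, giving property (iii); at the same time $\Psi\big((\r_\l),(\phi_\l)\big) = \br\,\bp^{-1} = \r_0$.

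The hard part will be verifying the three asymptotic axioms~\eqref{eq:asympadjoint}--\eqref{eq:asympmult}, since it is not a priori clear that applying the morphisms $\r(\l)$ preserves the $\O$-norm asymptotic vanishing already enjoyed by $(\phi_\l)$. The trick is that $\r^\bullet$ is pointwise multiplicative and $*$-preserving (each $\r(\l)$ being a morphism), so, setting $W := \pac(AB) - \pac(A)\pac(B) \in \uAAac$, one has $\rac(AB) - \rac(A)\rac(B) = \r^\bullet(W)$, whence
\[
\|[\r_{\l_\k}(AB) - \r_{\l_\k}(A)\r_{\l_\k}(B)]\O\|^2 = \o\!\left(\r^\bullet(W^*W)_{\l_\k}\right).
\]
Specializing~\eqref{eq:poiextend} to $\uB = \uC = \Id$ yields $\lim_\k \o(Y_{\l_\k}) = \ooi(\poiac(Y))$ for every $Y \in \uAAac$; applied to $Y = \r^\bullet(W^*W)$, and combined with Proposition~\ref{prop:ACTCpreserved}(iii) and the fact that $\poiac$ is a $*$-homomorphism (Theorem~\ref{thm:poiextend}), the right-hand side converges to
\[
\ooi\!\left(\r_0\big(\poiac(W)^*\poiac(W)\big)\right) = 0,
\]
because $\poiac(W) = \bp(AB) - \bp(A)\bp(B) = 0$ since $\bp$ is a morphism. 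Axioms~\eqref{eq:asympadjoint} and~\eqref{eq:asymplinear} go through identically, taking $W := \pac(A)^* - \pac(A^*)$ or $W := \pac(A+\a B) - \pac(A) - \a\pac(B)$ respectively; each such $W$ lies in $\uAAac$ and satisfies $\poiac(W) = 0$ because $\bp$ is a $*$-homomorphism. In summary, the preserved-sector hypothesis is used essentially only through the intertwining of Proposition~\ref{prop:ACTCpreserved}(iii), which together with the $\ooi$--$\poiac$ boundary formula reduces every defect to the morphism property of $\bp$.
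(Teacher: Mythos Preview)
Your proof is correct and follows essentially the same approach as the paper's: both arguments hinge on the intertwining relation $\poiac\circ\r^\bullet = \r_0\circ\poiac$ from Proposition~\ref{prop:ACTCpreserved}(iii), combined with the identity $\lim_\k\|X_{\l_\k}\O\|^2 = \|\poiac(X)\Ooi\|^2$ for $X\in\uAAac$, to reduce each defect to the morphism property of $\bp$. The only cosmetic difference is that the paper unpacks $\r_0$ via the explicit multiplet expression~\eqref{eq:preservedmor} when verifying~\eqref{eq:asympmult} and property~(iii), whereas you work directly with the abstract intertwining; your route is slightly cleaner but not substantively different.
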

%___________________________________________________________________________

%___________________________________________________________________________
\begin{proof}
We start by showing that $(\r_\l)$ is an asymptotic morphism of $\cA$. Since $\r(\l)$ is a morphism
of $\cA$ for all $\l > 0$ one has, for all $A, B \in \cA$,
\begin{equation*}\begin{split}
\lim_\k \big\| \big[\r_{\l_\k}(A)&\r_{\l_\k}(B) - \r_{\l_\k}(AB)\big]\O\| \\
&= \lim_\k\big\| \big[\r(\l_\k)\big(\phi_{\l_\k}(A)\phi_{\l_\k}(B) - \phi_{\l_\k}(AB)\big)\big]\O\| \\
&= \big\| \poiac\big(\r^\bullet(\pac(A)\pac(B)-\pac(AB))\big)\Ooi\big\| \\
&= \big\| \sum_j \poiac(\psi_j)\big(\bp(A)\bp(B)-\bp(AB)\big)\poiac(\psi_j)^* \Ooi \| = 0
\end{split}\end{equation*}
where in the third equality we used Equations~\eqref{eq:poiacrhoAbul}, \eqref{eq:preservedmor} and where the last equality holds because $\bp = \poiac\pac : \cA \to \Aoi$ is an isomorphism. Thus Equation~\eqref{eq:asympmult} holds for
$(\r_\l)$. Equations~\eqref{eq:asympadjoint}, \eqref{eq:asymplinear} are verified similarly.

Next, we prove that $(\r_\l)$ is tame. Property~(i) of Definition~\ref{def:tameasympmor} follows at once from Proposition~\ref{prop:ACTCpreserved}\ref{it:limahur} and from the fact that, since $(\phi_\l)$ is tame, $\pac(A) \in \uAAac$ for all $A \in \cA$. For property (ii), $\r(\l)$ being a morphism of C$^*$-algebras, and therefore contractive, there holds
\begin{equation*}
\sup_\l \| \r(\l)(\phi_\l(A) - \phi_\l(B))\| \leq  \sup_\l \| \phi_\l(A) - \phi_\l(B) \|,
\end{equation*}
and we can then use again that (ii) holds for $(\phi_\l)$. Finally property~(iii) follows from Equations~\eqref{eq:poiacrhoAbul}, \eqref{eq:preservedmor}, from the  localization of $\poiac(\psi_j)$ and from property (iii) for $(\phi_\l)$.

It remains to prove that $\Psi\big((\r_\l),(\phi_\l)\big) = \r_0$. Since, according to Proposition~\ref{prop:ACTCpreserved}\ref{it:limahur}, $\poiac(\r^\bullet(\pac(A)))= \r_0(\bp(A))$, $A \in \cA$, in order to conclude it is sufficient, thanks to formula~\eqref{eq:difference}, to observe that
\begin{equation*}
\poiac(\rac(A)) = \poiac(\r^\bullet(\pac(A))), \qquad A \in \cA,
\end{equation*}
which is clear by definition.
\end{proof}
%___________________________________________________________________________

%___________________________________________________________________________
\begin{Remark}\label{rem:preserved}
\begin{remlist}
 \item According to Definition~\ref{def:pairasympmorequiv} and the remark following Proposition~\ref{prop:pairasympequiv}, if $(\s_\l)$ is
a tame asymptotic morphism such that $\Psi\big((\s_\l),(\phi_\l)\big) = \r_0$ then $(\s_\l)$ is asymptotically equal to $(\r_\l)$ defined by~\eqref{eq:preservedasympmor}.
\item Let $\xi, \r_0$ and $(\phi_\l)$ be as above, and let $(\uU_k)$ be the ACTC defined by equation~\eqref{eq:ACTCpreserved}. Consider, for each $A \in\cA$ and $k \in \bN$, the function $\r_k(A) : \l \mapsto \uU_{k,\l}\phi_\l(A)\uU_{k,\l}^*$, which is an element of $\uAAac$. Then, thanks to Proposition
~\ref{prop:ACTCpreserved}, there holds
\begin{equation}\label{eq:interchange}
\poiac\Big(\lim_{k \to +\infty}\r_k(A)\Big) = \slim_{k \to +\infty}\poiac(\r_k(A)),
\end{equation}
where the limit in the left hand side is to be understood as the pointwise convergence, in the strong operator topology, of
the sequence of functions $(\r_k)_{k \in \bN}$ with values in $\cB(\cH)$. 
As the definition of $\poiac$ involves a $\l \to 0$ limit, Equation~\eqref{eq:interchange} can be interpreted as an exchange of the limits in $k$ and $\l$.
\item \label{it:dilation}If $\cA$ is a dilation covariant theory satisfying the Haag-Swieca compactness condition, it is not difficult to see that the isomorphism $\bp : \cA \to \Aoi$ of~\cite[thm. 7.1]{BDM2} satisfies
\begin{equation*}
\bp^{-1}(\poiac(A)) = \wlim_\k \d^{-1}_{\l_\k}(A_{\l_\k}), \qquad A \in \uAAac,
\end{equation*}
(see also the proof of Proposition~\ref{prop:ACTCdilation} below) and that a continuous section $s : \Aoi \to \uAAac$ is obtained by
\begin{equation*}
s(\poiac(A))_\l := \d_\l\bp^{-1}(A_{\l_\k}), \qquad A \in \uAAac.
\end{equation*}
Then, the asymptotic isomorphism corresponding to $\bp^{-1}$ and to this choice of the section is given by $\phi_\l = \d_\l$, and therefore equation~\eqref{eq:preservedasympmor} gives back equation~\eqref{eq:rholdilation}.
\end{remlist}
\end{Remark}
%___________________________________________________________________________

%===========================================================================
\section{Some remarks on asymptotic charge transfer chains}\label{sec:ACTCremarks}
It seems interesting to understand what becomes of Proposition~\ref{prop:ACTCpreserved} and Theorem~\ref{thm:preserved} of
the previous section if one drops the assumption that $\xi$ is a preserved sector of $\Aoi$. In particular one may ask under which conditions on an ACTC $(\uU_k)_{k \in \bN} \subset \uAA$  the limit in the left hand side of~\eqref{eq:limukl} exists, what kind of  maps are obtained in this way, and what is their relation with the morphism $\r_0^\uU$ associated to $(\uU_k)$. We point out here some facts that illustrate what kind of
difficulties may arise in trying to answer such questions.

%___________________________________________________________________________
\begin{Definition}
The map $\Phi_\mu : \uAA \to \uAA$, $\mu > 0$, defined by
\begin{equation*}
\Phi_\mu(\uA)_\l := \begin{cases}\uA_\l \quad&\l < \mu,\\
                                                           1 &\l \geq \mu,\end{cases}
\end{equation*}
is called a filter at scale $\mu$.
\end{Definition} %___________________________________________________________________________

If $\uooi$ is a scaling limit state, with associated scaling limit representation $\poi$,
there holds $\poi(\Phi_\mu(\uA)) = \poi(\uA)$ for all $\uA \in \uAA$, $\mu > 0$.

%___________________________________________________________________________
\begin{Proposition}
Let $(\uU_k)_{k \in \bN}$ be an ACTC of $\cA$.
Then there exists an ACTC $(\uV_k)_{k \in \bN}$ with $\rho_0^\uV=\rho_0^\uU$ and such that, for all $\l > 0,$
\begin{equation*}
\lim_{k \to +\infty} \uV_{k,\l} A \uV_{k,\l}^* = A, \qquad A \in \cA.
\end{equation*}
\end{Proposition}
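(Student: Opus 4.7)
The natural strategy is to apply the filter $\Phi_\mu$ just defined. Concretely, I would choose a sequence $(\mu_k) \subset \bR_+$ with $\mu_k \searrow 0$ and set
\[
\uV_k := \Phi_{\mu_k}(\uU_k), \qquad k \in \bN.
\]
By the identity $\poi(\Phi_\mu(\uA)) = \poi(\uA)$ recorded just before the proposition, we have $\poi(\uV_k) = \poi(\uU_k)$ for every $k$. Hence the strong-operator limits defining $\rho_0^\uV$ and $\rho_0^\uU$ coincide termwise, which immediately yields $\rho_0^\uV = \rho_0^\uU$. Moreover, for every fixed $\l > 0$ and every $k$ large enough so that $\mu_k \leq \l$, one has $\uV_{k,\l} = 1$, whence $\uV_{k,\l}A\uV_{k,\l}^* = A$ for all $A \in \cA$; in fact the sequence eventually stabilizes at $A$, and the required pointwise limit is trivial.

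The substantive step is to verify that $(\uV_k)$ is again an ACTC of $\cA$. First, for membership in $\uAA$ with the same wedge localization as $\uU_k$: since $\ua_x \circ \Phi_{\mu_k} = \Phi_{\mu_k} \circ \ua_x$, one estimates
\[
\|\ua_x(\uV_k) - \uV_k\| \leq \|\ua_x(\uU_k) - \uU_k\| \xrightarrow[x\to 0]{} 0,
\]
and pointwise $\uV_{k,\l}$ lies in the same local algebra as $\uU_{k,\l}$ or equals $1$, in both cases compatible with the prescribed localization. Second, the defining properties of an ACTC in the sense of~\cite{M1,M2} concern either (i) the strong limit of $\poi(\uU_k)(\cdot)\poi(\uU_k)^*$ as $k \to \infty$, which is unaltered since $\poi(\uV_k)=\poi(\uU_k)$, or (ii) the asymptotic bilocalization of $\uU_{k,\l}$ and of $\uU_{k,\l}^*\uU_{h,\l}$ as $\l \to 0$; for each fixed pair of indices $k,h$ and all $\l < \min(\mu_k,\mu_h)$ the filtered quantities $\uV_{k,\l}$, $\uV_{k,\l}^*\uV_{h,\l}$ agree literally with the unfiltered ones, so the $\l\to 0$ asymptotics are inherited without change.

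The main obstacle I anticipate is a purely bookkeeping one: one must confirm that no clause of the ACTC definition imposes a uniformity in $k$ at nonvanishing scales $\l$ which the truncation at $\mu_k$ could violate. In the formulation of~\cite{M1,M2} all the $\l$-dependent clauses are $\limsup$'s as $\l \to 0$, so the modification at scales $\l \geq \mu_k$ -- where each $\uV_{k,\l}$ collapses to the identity -- is manifestly harmless; in particular the sequence $(\mu_k)$ may simply be chosen to be $\mu_k = 1/k$, with no need to tune it to the specific bilocalization data of $(\uU_k)$. With this check in place, $(\uV_k)$ is an ACTC implementing the same scaling-limit morphism as $(\uU_k)$ while acting trivially on $\cA$ at each finite scale in the limit $k \to \infty$.
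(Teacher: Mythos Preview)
Your proposal is correct and follows essentially the same approach as the paper: define $\uV_k := \Phi_{\mu_k}(\uU_k)$ with $\mu_k = 1/k$, use $\poi(\uV_k)=\poi(\uU_k)$ to conclude that $(\uV_k)$ is an ACTC with the same scaling limit morphism, and observe that $\uV_{k,\l}=1$ once $k>\l^{-1}$. The paper is terser on the ACTC verification---it simply invokes \cite[def.~4.1]{M2}, since in that formulation being an ACTC depends only on the images $\poi(\uU_k)$---whereas you spell out the localization, translation-continuity, and asymptotic bilocalization checks explicitly; but this is the same argument.
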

%___________________________________________________________________________

%___________________________________________________________________________
\begin{proof}
Define $\uV_k := \Phi_{1/k}(\uU_k) \in \uAA$. Then by the above remark $\poi(\uV_k) = \poi(\uU_k)$
for all $k$. According to~\cite[def.\ 4.1]{M2} this implies that $(\uV_k)_{k \in \bN}$ is an ACTC, and it is also clear that it has
$\rho_0^\uU$ as associated scaling limit morphism. In addition, for $k > \l^{-1}$ one
has
\begin{equation*}
\uV_{k,\l} A \uV_{k,\l}^* = A,
\end{equation*}
which concludes the proof.
\end{proof}
%___________________________________________________________________________

An ACTC $(\uV_k)_{k \in \bN}$ as above will be called trivial. Therefore every localized transportable
morphism of the scaling limit theory is associated to a trivial ACTC, irrespective of whether it belongs to a confined sector or not.
Furthermore, since $(\uV_k)_k$ and $(\uU_k)_k$ give rise to the same scaling limit
localized morphism, they are equivalent ACTCs in the sense of~\cite{M1, M2}.

%___________________________________________________________________________
\begin{Remark}
\begin{remlist}
\item Going back to the case of preserved sectors considered in Proposition~\ref{prop:ACTCpreserved}, we see that it is possible to find pairs of ACTCs $(\uU_k)_k$,
$(\uV_k)_k$ which have the same associated morphism in the scaling limit and which define morphisms $\rho(\l)$, $\sigma(\l)$ at finite scales which are non-equivalent for every $\l > 0$.
\item In general, a trivial ACTC does not satisfy condition~\eqref{eq:poiacrhoAbul} in the previous section since, in this case, $\r^\bullet(A) = A$ for all $A \in \uAAac$ 
\end{remlist}
\end{Remark}
%___________________________________________________________________________

Elaborating on the above argument a bit more we can show that, at least for dilation covariant theories, one can change an ACTC, within its equivalence class, in such a way as to obtain an arbitrary sector in the $k \to \infty$ limit (at fixed scale). However, this unpleasant feature is ruled out by requiring that the morphisms at fixed scales and in the scaling limit are connected by Equation~\eqref{eq:poiacrhoAbul}.

%___________________________________________________________________________
\begin{Proposition}\label{prop:ACTCdilation}
Let $\cA$ be a dilation covariant net satisfying Haag-Swieca compactness. Given  an ACTC $(\uV_k)_k$ and a localized transportable morphism $\r$ belonging to a dilation covariant sector $\xi$ of $\cA$,
there exists an ACTC $(\uW_k)_{k \in \bN}$ with $\rho_0^\uW = \rho_0^\uV$  and such that, for all $\l > 0,$
\begin{equation}\label{eq:limWkl}
\slim_{k \to +\infty} \uW_{k,\l} A \uW_{k,\l}^* = \d_\l \r \d_\l^{-1}(A), \qquad A \in \cA.
\end{equation}
Furthermore if $\bp^{-1} : \Aoi \to \cA$ is the isomorphism defined in~\cite{BDM2}, one has $\rho^\uV_0 = \bp\r \bp^{-1}$ if and only if condition~\eqref{eq:poiacrhoAbul} holds with $\rho_0 = \rho_0^\uV$, $\r(\l) = \d_\l\r\d_\l^{-1}$, $\l > 0$ .
\end{Proposition}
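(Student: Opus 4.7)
For the first assertion, my plan is to patch $(\uV_k)$ at small scales (so the scaling-limit morphism is unchanged) with an explicit dilation-covariant charge transfer chain at larger scales (so that $\d_\l\r\d_\l^{-1}$ appears at fixed $\l$). Since $\xi$ is dilation covariant, fix an orthonormal multiplet $\psi_1,\dots,\psi_d \in \cF(O_0)$ of isometries implementing $\r$ and set $\psi_{j,\l} := \d_\l(\psi_j)$; dilation covariance of $\cF$ gives $\sup_\l \|\a_{\l x}(\psi_{j,\l}) - \psi_{j,\l}\| = \|\a_x(\psi_j) - \psi_j\| \to 0$ as $x \to 0$, so $\psi_j \in \uFF(O_0)$. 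For a sequence $(x_k)$ going to spacelike infinity I would define
\begin{equation*}
\uW_{k,\l} := \begin{cases} \uV_{k,\l} & \l < 1/k,\\ \sum_j \psi_{j,\l}\,\a_{\l x_k}(\psi_{j,\l})^* & \l \geq 1/k. \end{cases}
\end{equation*}
Gauge invariance of the sum (the $\psi_j$ span an irreducible $G$-multiplet) forces values in $\cA$, and uniform-in-$\l$ translation continuity of both branches gives $\uW_k \in \uAA(\hat O_k)$ for a suitable $\hat O_k$ absorbing the localizations of both pieces.

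By construction $\uW_{k,\l}=\uV_{k,\l}$ for all sufficiently small $\l$, so the argument of the filter proposition delivers $\poi(\uW_k)=\poi(\uV_k)$, whence $\r_0^\uW = \r_0^\uV$, and $(\uW_k)$ inherits all the ACTC conditions of~\cite{M1,M2} from $(\uV_k)$ since those conditions are phrased purely in terms of the scaling-limit operators $\poi(\uU_k)$. For~\eqref{eq:limWkl}, fix $\l>0$ and $A\in\cA$; once $k > 1/\l$ is large enough that $\l(x_k + O_0)$ is spacelike to the localization of $A$, I can pull $A$ past the $\a_{\l x_k}$ factors and use $\psi_{j,\l}^*\psi_{j',\l} = \d_{jj'}\Id$ to get
\begin{equation*}
\uW_{k,\l}A\uW_{k,\l}^* = \sum_{j,j'}\psi_{j,\l}A\,\a_{\l x_k}(\psi_{j,\l}^*\psi_{j',\l})\,\psi_{j',\l}^* = \sum_j \psi_{j,\l}A\psi_{j,\l}^* = \d_\l\r\d_\l^{-1}(A),
\end{equation*}
which actually gives eventual equality, stronger than the claimed strong limit.

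For the second assertion the key ingredients are the formula $\bp^{-1}(\poiac(B)) = \wlim_\k \d^{-1}_{\l_\k}(B_{\l_\k})$ of Remark~\ref{rem:preserved}\ref{it:dilation} and the normality of $\r$ on the local von Neumann algebras (granted for finite-statistics DHR sectors). Given $A \in \uAAac(O)$ one has $\d^{-1}_{\l_\k}(A_{\l_\k}) \in \cA(O)$, on which $\r$ extends normally, so the chain
\begin{equation*}
\bp\r\bp^{-1}(\poiac(A)) = \bp\,\wlim_\k \r(\d^{-1}_{\l_\k}(A_{\l_\k})) = \bp\,\wlim_\k \d^{-1}_{\l_\k}(\r^\bullet(A)_{\l_\k}) = \poiac(\r^\bullet(A))
\end{equation*}
shows that $\r_0^\uV = \bp\r\bp^{-1}$ is equivalent to $\r_0^\uV(\poiac(A)) = \poiac(\r^\bullet(A))$ for all $A \in \uAAac$, which is precisely~\eqref{eq:poiacrhoAbul} for $\r_0 = \r_0^\uV$ and $\r(\l) = \d_\l\r\d_\l^{-1}$. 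I expect the main technical point to be not this second assertion, which is essentially the above computation once the normality of $\r$ is secured, but the first one: verifying that the patched $\uW_k$ really lies in $\uAA$ across the discontinuity at $\l = 1/k$, which reduces to uniformity in $\l$ of the translation continuity of each branch, and making sure that the ACTC conditions of~\cite{M1,M2} are indeed insensitive to the modification on $\l\geq 1/k$.
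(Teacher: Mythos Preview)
Your overall strategy matches the paper's exactly: patch $(\uV_k)$ at small scales with a chain that implements $\d_\l\r\d_\l^{-1}$ at larger scales, and for the second assertion use the weak-limit formula for $\bp^{-1}$ together with local normality of $\r$. The paper does precisely this, except that instead of writing down the second branch explicitly it simply invokes Proposition~\ref{prop:ACTCpreserved} (dilation covariant sectors are preserved) to obtain an ACTC $(\uU_k)$ with $\slim_k \uU_{k,\l}A\uU_{k,\l}^* = \d_\l\r\d_\l^{-1}(A)$, and then sets $\uW_{k,\l} = \uV_{k,\l}$ for $\l < 1/k$ and $\uW_{k,\l} = \uU_{k,\l}$ for $\l \geq 1/k$.

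There is, however, a genuine gap in your explicit construction. Your claim that $\l \mapsto \d_\l(\psi_j)$ lies in $\uFF(O_0)$ reduces, as you compute, to $\|\a_x(\psi_j)-\psi_j\| \to 0$ as $x \to 0$; but translations are only assumed to be unitarily implemented, hence strongly continuous, and there is no reason for this \emph{norm} limit to hold for a given element of the field von Neumann algebra. This is exactly why Proposition~\ref{prop:ACTCpreserved} starts from multiplets $\psi_j \in \uFFac(O)$ and then \emph{smooths} them, defining $\uU_{k,\l} = \sum_j(\ua_{h_k}\psi_j)_\l\,\ua_{x_k}(\ua_{h_k}\psi_j)_\l^*$; the convolution $\ua_{h_k}$ is what produces elements of $\uFF$ (and hence of $\uAA$ after the gauge-invariant sum). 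With that modification your argument goes through, but~\eqref{eq:limWkl} then really is only a strong limit, not eventual equality. Incidentally, the issue you flag at the end---the ``discontinuity at $\l=1/k$''---is harmless, since membership in $\uAA$ imposes no continuity in $\l$; the real obstruction is the translation continuity of the second branch, which is precisely the point above.

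Your treatment of the second assertion is the same as the paper's.
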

%___________________________________________________________________________

%___________________________________________________________________________
\begin{proof}
Since the sector $\xi$ is preserved \cite[Sec. 5]{DMV}, we can apply Prop.~\ref{prop:ACTCpreserved}. In particular, we get an ACTC $(\uU_k)_{k \in \bN}$ such that \eqref{eq:limukl} holds with $\r(\l) = \d_\l \r \d_\l^{-1}$.
Define then
\begin{equation*}
\uW_{k,\l} := \begin{cases} \uV_{k,\l} \quad &\l < 1/k,\\
                                               \uU_{k,\l} &\l \geq 1/k.
                      \end{cases} 
\end{equation*} 
It is clear that $\uW_k \in \uAA$ and that $\poi(\uW_k) = \poi(\uV_k)$, so that $(\uW_k)_k$ is an ACTC and $\rho_0^\uU$ is associated also to $(\uW_k)_k$. Eq.~\eqref{eq:limWkl} is also obvious.

Moreover, we recall from~\cite{BV} that $\bp^{-1}$ is defined first by $\bp^{-1}(\poi(\uA)) = \wlim_\k \d_{\l_\k}^{-1}(\uA_{\l_\k})$, $\uA \in \uAA$. Then by an $\varepsilon/3$ argument, together with Lemma~\ref{lem:poiextendlim} and the fact that $\bp^{-1}$ is unitarily implemented, we see that there also holds
\begin{equation*}
\bp^{-1}(\poiac(A)) = \wlim_\k \d_{\l_\k}^{-1}(A_{\l_\k})
\end{equation*}
for all $A \in \uAAac$. Then since $\r^\bullet(A)_\l = \d_\l\r\d_\l^{-1}(A_\l)$, one has, for $A \in \uAAac(O)$,
\begin{equation*}\begin{split}
\bp^{-1}(\poiac(\r^\bullet(A))) &= \wlim_\k \d_{\l_\k}^{-1}(\r^\bullet(A)_{\l_\k}) = \wlim_\k \r\d_{\l_\k}^{-1}(A_{\l_\k})\\
&= \r\bp^{-1}(\poiac(A)),
\end{split}\end{equation*}
where in the last equality we used the local normality of the localized transportable morphism $\r$ and the fact that $\d_\l^{-1}(A_\l) \in \cA(O)$ for all $\l > 0$. Therefore equation~\eqref{eq:poiacrhoAbul} holds if and only if $\rho^\uV_0 = \bp\r \bp^{-1}$.
\end{proof}
%___________________________________________________________________________

More generally, consider a family $(\sigma(\l))_{\l>0}$ of morphisms of a local net $\cA$ and a sequence $(\uV_k)_k \subset \uAA$ such that, in some of the usual topologies,
\begin{equation*}
\lim_{k \to +\infty} \uV_{k,\l} A \uV_{k,\l}^* = \sigma(\l)(A), \qquad A \in \cA.
\end{equation*}
It is then clear, again by the same argument, that for any scaling limit morphism $\rho_0$ one can find an ACTC $(\uW_k)_k$ which has $\rho_0$ as associated scaling limit morphism and such that
\begin{equation}\label{eq:limW}
\lim_{k \to +\infty} \uW_{k,\l} A \uW_{k,\l}^* = \sigma(\l)(A), \qquad A \in \cA,
\end{equation}
in the same topology as the above limit.
\medskip

Due to the large arbitrariness in the values
of $\uU_k$ at finite scales it seems difficult, if not impossible, to show, for a general ACTC $(\uU_k)_k$, that there exist the limit
\begin{equation*}
\lim_{k\to +\infty}\uU_{k,\l} A \uU_{k,\l}^*
\end{equation*}
for each fixed $\l > 0$. This arbitrariness is also at the root of the results presented in this section. They indicate that, even if the above limit exists, there is in general no tight connection between the maps it defines at each scale $\l > 0$ and the scaling limit morphism $\r_0^\uU$ associated to $(\uU_k)$. As in the case of preserved sectors, in order to have such a connection one has to select a subclass of ACTCs possessing suitably strengthened properties.

%===========================================================================
\section{A C$^*$-category of asymptotic morphisms}\label{sec:category}
We now wish to
define a C$^*$-category whose objects are pairs of asymptotic morphisms
of $\cA$ in $\AMor_\iota(\cA)\times\AIso_\iota(\cA)$. In order to do this, we proceed
in parallel with appendix~\ref{app:category}, where we perform the same construction using Connes-Higson
asymptotic morphisms as objects. As many details are common to both situations, we
limit ourselves here to indicating the main steps.

%___________________________________________________________________________
\begin{Definition}
An intertwiner between two pairs of asymptotic morphisms $\big((\r_\l),(\phi_\l)\big)$ and $\big((\s_\l),(\psi_\l)\big)$ in
$\AMor_\iota(\cA)\times \AIso_\iota(\cA)$ is a function $T  \in \uAAac$ such that, for all $A \in \Aoi$,
\begin{equation*}
\lim_\k \| [T_{\l_\k}\r_{\l_\k}\bp^{-1}(A) - \s_{\l_\k}\bpsi^{-1}(A)T_{\l_\k}]\O \| = 0.
\end{equation*}
We write $T \in \hom\big[\big((\r_\l),(\phi_\l)\big), \big((\s_\l),(\psi_\l)\big)\big]$.
\end{Definition}
%___________________________________________________________________________

From the above definition, it follows that if $\r_0 = \Psi\big((\r_\l),(\phi_\l)\big)$, $\s_0 = \Psi\big((\s_\l),(\psi_\l)\big)$, then $T \in \hom\big[\big((\r_\l),(\phi_\l)\big), \big((\s_\l),(\psi_\l)\big)\big]$ if and only if $\poiac(T)$ is an intertwiner between $\r_0$ and $\s_0$ in $\Morps(\Aoi)$.

Given intertwiners $T \in \hom\big[\big((\r_\l),(\phi_\l)\big), \big((\s_\l),(\psi_\l)\big)\big]$, $S \in \hom\big[\big((\s_\l),(\psi_\l)\big), \big((\t_\l),(\chi_\l)\big)\big]$, their composition is naturally defined
as $(S\circ T)_\l := S_\l T_\l$ and, thanks to the observation above, it is clear that
$S \circ T \in \hom\big[\big((\r_\l),(\phi_\l)\big), \big((\t_\l),(\chi_\l)\big)\big]$. The linear space of intertwiners $\hom\big[\big((\r_\l),(\phi_\l)\big), \big((\s_\l),(\psi_\l)\big)\big]$, naturally equipped with the norm inherited from $\uAAac$
\begin{equation*}
\| T \| = \sup_{\l>0}\|T_\l \|,
\end{equation*}
becomes a Banach space.
Furthermore $\| S\circ T\| \leq \| S\| \|T\|$. Finally, an involution on
intertwiners is defined by $T^* := (\l \mapsto T_\l^*) \in \hom\big[ \big((\s_\l),(\psi_\l)\big),\big((\r_\l),(\phi_\l)\big)\big]$,
and one straightforwardly verifies that the axioms of a C$^*$-category~\cite{GLR} are satisfied. In particular each self-intertwiner space $\hom\big[\big((\r_\l),(\phi_\l)\big),\big((\r_\l),(\phi_\l)\big)\big]$ becomes a unital C$^*$-algebra.

We denote the C$^*$-category thus obtained by $\cE_\iota(\cA)$.

%___________________________________________________________________________
\begin{Proposition}There is a canonical $*$-functor $\Psi : \cE_\iota(\cA) \to \Morps(\Aoi)$ defined
by the map $\Psi : \AMor_\iota(\cA) \times \AIso_\iota(\cA) \to \Morps(\Aoi)$ of Equation~\eqref{eq:Psi} on objects and by
$\Psi(T) := \poiac(T)$ on intertwiners. 

If $\cA$ has a convergent scaling limit, and if there exists a properly supported isomorphism $\bp : \cA \to \Aoi$, then $\Psi$ is full and surjective.
\end{Proposition}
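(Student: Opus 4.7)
The plan is to reduce everything to properties of the $*$-morphism $\poiac : \uAAac \to \Aoi$ established in Theorem~\ref{thm:poiextend}. I would first verify that $\Psi$ is a well-defined $*$-functor, which requires no additional hypothesis. Well-definedness on objects is the second part of Theorem~\ref{thm:difference}; well-definedness on intertwiners is the forward direction of the equivalence stated immediately after the definition of $\hom$ in $\cE_\iota(\cA)$. Functoriality then follows at once, since $\Psi(S\circ T) = \poiac(ST) = \poiac(S)\poiac(T) = \Psi(S)\Psi(T)$ by multiplicativity of $\poiac$, $\Psi(\Id) = \poiac(\Id) = \Id_{\Hoi}$, and $\Psi(T^*) = \poiac(T^*) = \poiac(T)^* = \Psi(T)^*$.

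Under the extra hypotheses, surjectivity of $\Psi$ on objects is exactly the first part of Theorem~\ref{thm:difference}. For fullness, let $T_0 \in \hom_{\Morps(\Aoi)}(\r_0, \s_0)$ with $\r_0 = \Psi((\r_\l),(\phi_\l))$ and $\s_0 = \Psi((\s_\l),(\psi_\l))$. I would first observe that $\poiac : \uAAac \to \Aoi$ is surjective: Theorem~\ref{thm:steplimit} gives $\Aoi(O) \subseteq \poiac(\uAAac(O))$ for each double cone $O$, and since the image of a $*$-morphism of C$^*$-algebras is a norm-closed C$^*$-subalgebra, $\poiac(\uAAac) \supseteq \overline{\bigcup_O \Aoi(O)} = \Aoi$. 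Choose any lift $T \in \uAAac$ with $\poiac(T) = T_0$; by the backward direction of the equivalence mentioned above, $T$ belongs to $\hom_{\cE_\iota(\cA)}\big[((\r_\l),(\phi_\l)),((\s_\l),(\psi_\l))\big]$, and $\Psi(T) = T_0$ by construction.

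The main work is concentrated in the equivalence \emph{``$T$ is a categorical intertwiner iff $\poiac(T)$ intertwines $\r_0$ and $\s_0$''}. Setting $X_\l := T_\l \r_\l\bp^{-1}(A) - \s_\l\bpsi^{-1}(A) T_\l$, all factors lie in $\uAAac$, whence $X \in \uAAac$, and $\poiac(X) = T_0\r_0(A) - \s_0(A)T_0$ by multiplicativity of $\poiac$ together with $\poiac\rac = \br = \r_0\bp$ and $\poiac\bs^\bullet = \bs = \s_0\bpsi$. Formula~\eqref{eq:poiextend} with $\uB = \uC = \Id$ then yields
\begin{equation*}
\lim_\k \|X_{\l_\k}\O\|^2 = \lim_\k \o(X_{\l_\k}^* X_{\l_\k}) = \langle\Ooi, \poiac(X^*X)\Ooi\rangle = \|\poiac(X)\Ooi\|^2,
\end{equation*}
so the categorical intertwining condition translates to $(T_0\r_0(A)-\s_0(A)T_0)\Ooi = 0$ for every $A \in \Aoi$. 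The easy direction is then immediate, while passing from this vanishing on $\Ooi$ to the operator identity is the one delicate step; it is handled by a standard Reeh--Schlieder/locality argument in the scaling limit theory, exploiting that $\Ooi$ is separating for the local algebras of $\Aoi$. Apart from this point, the entire argument is bookkeeping on top of Theorems~\ref{thm:poiextend}, \ref{thm:steplimit} and~\ref{thm:difference}.
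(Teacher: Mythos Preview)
Your proof is correct and follows exactly the route the paper takes: the paper's own argument is the one-line remark that surjectivity comes from Theorem~\ref{thm:difference} and fullness from the surjectivity of $\poiac$ (the paper cites Theorem~\ref{thm:poiextend}, but as you correctly observe it is Theorem~\ref{thm:steplimit} that supplies $\Aoi \subset \poiac(\uAAac)$ under the convergent-scaling-limit hypothesis). Your write-up is considerably more explicit than the paper's, in particular you spell out the equivalence ``$T$ is a categorical intertwiner iff $\poiac(T)$ intertwines $\r_0,\s_0$'' which the paper merely asserts.

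One caveat on that last point. Your justification for the forward direction---``$\Ooi$ is separating for the local algebras''---does not apply directly, because $T_0\r_0(A)-\s_0(A)T_0$ is in general only in the quasi-local algebra $\Aoi$ (the factor $T_0$ need not be local). The clean way to close this is the standard product trick: from $(T_0\r_0(AB)-\s_0(AB)T_0)\Ooi=0$ and $(T_0\r_0(B)-\s_0(B)T_0)\Ooi=0$ one gets $(T_0\r_0(A)-\s_0(A)T_0)\,\r_0(B)\Ooi=0$ for all $B\in\Aoi$, so it suffices that $\r_0(\Aoi)\Ooi$ be dense in $\Hoi$. For localized $\r_0$ this is immediate from Reeh--Schlieder since $\r_0$ restricts to the identity on $\Aoi(O')$; for a general properly supported $\r_0$ one should say a word about why cyclicity of $\Ooi$ for $\r_0(\Aoi)$ holds (or restrict to the localized subcategory, which is the case of interest and is explicitly mentioned right after the Proposition). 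The paper glosses over this as well, so your level of detail already exceeds it.
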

%___________________________________________________________________________

 Surjectivity of the functor $\Psi$ follows from Theorem~\ref{thm:difference}, and its
 fullness from Theorem~\ref{thm:poiextend}.

 The above statement is of course valid also if we restrict to the full subcategories of
 $\cE_\iota(\cA)$ and of $\Morps(\Aoi)$ defined by the localized objects.
 \medskip

 It is also not difficult to see that it is possible to generalize the above result to asymptotic
 morphisms between two nets $\cA$, $\cB$ and their intertwiners, defined in the obvious way. In this way
 one gets a C$^*$-category $\cE_\iota(\cA,\cB)$ and a full surjective $*$-functor between $\cE_\iota(\cA,\cB)$ and $\Morps(\Aoi,\cB_{0,\iota})$.

%===========================================================================
\section{Summary and outlook}\label{sec:summary}
In the present investigation, we introduced a variant of the concept of asymptotic morphism of Connes and Higson, and showed that such objects can be used to describe the superselection (i.e., charge) structure
of the short distance scaling limit of a QFT defined by a given local net $\cA$. 

The main differences between our notion of asymptotic morphism and the Connes-Higson one are due to the facts that: (1) the algebra-valued functions which naturally arise in the description of the scaling limit enjoy rather poor continuity properties, and in particular they cannot be norm continuous for $\l \to 0$, and (2) when studying superselection theory, one has to assume that the local algebras are weakly closed, and therefore they cannot be just quotients of the local scaling algebras with respect to the kernel of the scaling limit representation.  

In order to tackle issue (2) above, we had to construct, following ideas already present in~\cite{DMV}, a new net $O \mapsto \uAAac(O)$ which extends the scaling algebras net, and which is still represented on the original scaling limit Hilbert space in such a way as to contain the local von Neumann algebras $\Aoi(O)$ (if the underlying theory has a convergent scaling limit). This allowed us to set up a bijective correspondence between classes of asymptotically equivalent asymptotic morphisms $(\r_\l)$ and unitary equivalence classe of morphisms $\br$ from $\cA$ to $\Aoi$, so that $\br(A)$, $A \in \cA$, is the ``scaling limit'' of the function $\l \mapsto \r_\l(A)$. In order to relate the morphisms $\br$ with the superselection structure of $\Aoi$, we showed that the quasi-local C*-algebras of two nets $\cA$ and $\cB$ are isomorphic under very general conditions. 
Therefore by mapping the pair of asymptotic morphisms $\big((\r_\l),(\phi_\l)\big)$, the second of which corresponds to such an isomorphism $\bp : \cA \to \Aoi$, to the morphism $\r_0 = \br\bp^{-1} : \Aoi \to \Aoi$, we obtain a bijection between classes of such pairs and unitary equivalence classes of morphisms of $\Aoi$.

We note that the existence of an isomorphism $\bp : \cA \to \cB$ at the level of quasi-local C*-algebras by no means implies the physical equivalence of the corresponding theories. Even if $\bp$ respects the net structure, which generally speaking seems not to be the case, there is no reason why it should also intertwine the actions of the symmetry group and the vacuum states. Therefore, it seems an interesting problem whether one can establish,  in the case $\cB = \Aoi$, the existence of such an isomorphism which is canonical in some sense.

We have also shown that one can view (localized) pairs of asymptotic morphisms as above as objects of a C*-category $\cE_\iota(\cA)$. A natural question is then how much of the information carried by the superselection category of $\Aoi$ can be encoded into $\cE_\iota(\cA)$. In particular one may ask if it is possible to define  additional structures/operations on the family of asymptotic morphisms such as, e.g., a tensor product. In this respect, we observe that the obvious composition of asymptotic morphisms is not in general an asymptotic morphism. This is a problem also in the context of Connes-Higson E-theory, where it is solved by appealing to homotopy classes. It seems possible that similar methods can be employed in the present case too.

It could also be worthwhile to study these concepts in models. Apart from the obvious case of free field theories, it should also be remarked that for the methods developed in this work the dimensionality of spacetime is irrelevant, thus opening the way to the study of two-dimensional models, for which a detailed analysis of nontrivial examples is available. In particular, one may ask questions about the nature of the  asymptotic morphisms corresponding to the confined sectors of the Schwinger model~\cite{Buc1, BV2}. (A possible source of problems, here, could be the arbitrariness introduced by the use of a section of $\poi$ in defining them.) 

The study of such examples could also provide hints about a possible characterization of confined sectors in terms of asymptotic morphisms.
 
More in prospect, as already suggested in~\cite{CDM}, bearing in mind the relationships between E-theory and KK-theory one should investigate the possibility of using the asymptotic morphisms introduce here in order to define noncommutative geometric invariants with a clearcut physical interpretation for the local net $\cA$.

\vspace{\baselineskip}
\noindent\emph{Acknowledgements}. We wish to thank S.~Doplicher, who first pointed out the possible relevance of asymptotic morphisms in the description of superselection structure in the scaling limit of QFT. We are also grateful to M.~Weiner for a useful discussion about Sec.~3 and to R.~Hillier for his comments on a preliminary version of this work.

%===========================================================================
\appendix

%===========================================================================
\section{Appendix: The $C^*$-category of asymptotic morphisms}
\label{app:category}

Given a unital $C^*$-algebra $\cA$, it is well-known that
there is a $C^*$-category ${\rm End}(\cA)$
whose objects are the (unital $*$-)endomorphisms from $\cA$ into itself
and whose arrows are their intertwiner operators. Recall that
given two endomorphisms $\rho$ and $\rho'$
one says that an element $T \in \cA$ is an intertwiner between $\rho$ and $\rho'$ if it holds 
$$T \rho(A) = \rho'(A) T, \qquad A \in \cA \, , $$
and denotes by $(\rho,\rho')$ the Banach space of all such intertwiners.
The $C^*$-category ${\rm End}(\cA)$ is equipped with relevant additional structure, for instance it is a monoidal $C^*$-category, where the monoidal (tensor) product of objects is given just by composition, $\rho \otimes \rho' := \rho\rho'$, while the monoidal operation of arrows is defined by
$$T \times S := T \rho(S) = \rho'(S) T, \qquad T \in (\rho,\rho'), S \in (\sigma,\sigma')\, . $$
It is then plain to verify that indeed $T \times S \in (\rho\rho',\sigma\sigma')$.

\medskip

Similarly, if $\cA$ and $\cB$ are two unital $C^*$-algebras one may define the
$C^*$-category ${\rm End}(\cA,\cB)$, whose objects are the endomorphisms from $\cA$ into $\cB$ and whose arrows are the intertwining operators, i.e. elements of $\cB$ satisfying the same intertwining relation as above.
However the tensor structure is lost when $\cB \neq \cA$ as the composition of endomorphisms is no longer possible.

\medskip
Our aim in this Appendix is to extend the previous construction to the case of Connes-Higson asymptotic morphisms. According to E-theory, as developed in~\cite{CH},
an asymptotic morphism between two unital $C^*$-algebras $\cA$ and $\cB$
is a family $(\rho_\lambda)_{\lambda > 0}$ such that each $\rho_\lambda$, $\lambda>0$, is a map from $\cA$ into $\cB$, satisfying the following properties:

\begin{proplist}{2}
\item for each $A \in \cA$,
the function $\lambda \mapsto \rho_\lambda(A)$ belongs to $C_b({\mathbb R}_+,\cB)$, the bounded continuous functions from ${\mathbb R}_+$ to $\cB$;
\item it holds, for all $A, B \in \cA$ and $\a \in {\mathbb C}$,
\begin{align*}
\lim_{\lambda \to 0} & \|\rho_\lambda(A^*) - \rho_\lambda(A)^*\| = 0 \ , \\ 
\lim_{\lambda \to 0} & \|\rho_\lambda(A+\a B) - \rho_\lambda(A) - \a \rho_\lambda(B)\| = 0, \\
\lim_{\lambda \to 0} & \|\rho_\lambda(A B) - \rho_\lambda(A)\rho_\lambda(B)\| = 0 \ .
\end{align*}
\end{proplist}

It is then easy to see that every asymptotic morphism $(\rho_\lambda)$ from $\cA$
into $\cB$ defines a {\it bona fide} morphism $\br$ from $\cA$ into $\cB_0 := C_b({\mathbb R_+},\cB)/C_0({\mathbb R}_+,\cB)$ given by
$$\br (A) = (\lambda \mapsto \rho_\lambda(A)) + C_0({\mathbb R_+},\cB) \, , $$
where $C_0({\mathbb R}_+,\cB)$ is the closed ideal of $C_b({\mathbb R_+},\cB)$
of functions vanishing at the origin.

We now wish to define a $C^*$-category ${\mathcal E}(\cA,\cB)$  of asymptotic morphisms, namely a $C^*$-category whose objects are asymptotic morphisms from $\cA$ into $\cB$ and whose arrows are suitably defined asymptotic intertwiners.

\begin{Definition}
Given two asymptotic morphisms $(\rho_\lambda)$ and $(\rho'_\lambda)$ from $\cA$ into $\cB$, an asymptotic intertwiner is a family $(T_\lambda)$ of elements of $\cB$, indexed by ${\mathbb R_+}$, such that:
\smallskip
\begin{proplist}{2}
\item the function $\lambda \mapsto T_\lambda$ belongs to $C_b({\mathbb R}_+,\cB)$;
\item $\lim_{\lambda\to 0} \|T_\lambda \rho_\lambda(A) - \rho'_\lambda(A)T_\lambda\| = 0$, for all $A \in \cA$. 
\end{proplist}
\end{Definition}
We then write $(T_\lambda) \in \hom\big[(\rho_\lambda),(\rho'_\lambda)\big]$. Notice that two asymptotic
morphisms $(\r_\l),(\r'_\l)$ are asymptotically equivalent in the sense of~\cite{CH} if and
only if $\Id \in \hom\big[(\r_\l),(\r'_\l)\big]$, where $\Id : \bR_+ \to \cB$ denotes the constant function $\Id_\l = 1$.

We now claim that using the above definition one obtains a $C^*$-category whose
objects and arrows are asymptotic morphisms and asymptotic intertwiners, respectively.

For this purpose let us define the norm of the asymptotic intertwiner $(T_\lambda)$ by $\|(T_\lambda)\| = \sup_{\lambda > 0} \|T_\lambda\|$.
Now it is straightforward to verify the following facts:

\begin{proplist}{3}

\item
composition of asymptotic intertwiners is well-defined and associative:
namely, $(T_\lambda)(S_\lambda) \in \hom\big[(\rho_\lambda),(\rho''_\lambda)\big]$
whenever $(S_\lambda) \in \hom\big[(\rho_\lambda),(\rho'_\lambda)\big]$ and $(T_\lambda) \in \hom\big[(\rho'_\lambda),(\rho''_\lambda)\big]$,
where the composition is defined as $(T_\lambda)(S_\lambda) := (T_\lambda S_\lambda)$
and $(U_\lambda) (T_\lambda) (S_\lambda)$ is independent of the bracketing; moreover, for each $(\r_\l)$ there is a unit $\Id \in \hom\big[(\r_\l),(\r_\l)\big]$;

\item each asymptotic intertwiner space $\hom\big[(\rho_\lambda),(\rho'_\lambda)\big]$ is a complex linear space which, when equipped with the above norm, becomes a Banach space, and $\| (T_\l)(S_\l) \| \leq \| (T_\l) \| \| (S_\l)\|$;

\item
on the (Banach) category obtained in this way there is a contravariant antilinear involutive endofunctor $*$,
given by $(T_\lambda)^* = (T_\lambda^*)$,
such that $\hom\big[(\rho_\lambda),(\rho'_\lambda)\big]^* = \hom\big[(\rho'_\lambda),(\rho_\lambda)\big]$;

\item
the $C^*$-identity for the norm holds, namely $\|(T_\lambda)^* (T_\lambda)\| = \|(T_\lambda)\|^2$
for any asymptotic intertwiner $(T_\lambda)$;

\item if $(S_\lambda) \in \hom\big[(\rho_\lambda),(\rho'_\lambda)\big]$ then $(S_\lambda^* S_\lambda)$
is a positive element of the $C^*$-algebra $C_b({\mathbb R}_+,\cB)$, and hence of its (unital) C$^*$-subalgebra $\hom\big[(\rho_\lambda),(\rho_\lambda)\big]$.

\end{proplist}

One can also show that if $(T_\lambda)$ is an asymptotic intertwiner between the
asymptotic morhisms $(\rho_\lambda)$ and $(\sigma_\lambda)$ from $\cA$ into $\cB$ then it defines naturally a {\it bona fide} intertwiner $T_0$ between the corresponding morphisms $\br$ and $\bs$ from $\cA$ into $\cB_0$. In this way one obtains a $*$-functor $F$ from the $C^*$-category ${\mathcal E}(\cA,\cB)$ into the $C^*$-category ${\rm Mor}(\cA,\cB_0)$, which is easily seen to be surjective on the objects and full.

We collect together all the facts discussed above.
\begin{Theorem}
Let $\cA$ and $\cB$ be unital $C^*$-algebras.
\begin{proplist}{2}
\item
With the above definitions, ${\mathcal E}(\cA,\cB)$ is a $C^*$-category.
\item
There is a canonical surjective full $*$-functor $F:{\mathcal E}(\cA,\cB) \to {\rm Mor}(\cA,\cB_0)$.
\end{proplist}
\end{Theorem}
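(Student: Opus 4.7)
The argument naturally splits according to the two parts of the theorem. For part (i), my plan is to verify each of the five properties listed immediately before the theorem statement. The only nontrivial checks are closure of intertwiner spaces under composition and involution. For composition, adding and subtracting $T_\l \rho'_\l(A) S_\l$ yields
\[
T_\l S_\l \rho_\l(A) - \rho''_\l(A) T_\l S_\l = T_\l\bigl(S_\l\rho_\l(A)-\rho'_\l(A)S_\l\bigr) + \bigl(T_\l\rho'_\l(A)-\rho''_\l(A)T_\l\bigr)S_\l,
\]
and both summands go to zero in norm thanks to the uniform boundedness of $(T_\l),(S_\l) \in C_b(\bR_+,\cB)$. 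For the involution, I estimate $T_\l^*\rho'_\l(A) - \rho_\l(A)T_\l^*$ by taking adjoints and inserting $\rho_\l(A^*)$ and $\rho'_\l(A^*)$; the asymptotic $*$-preservation of $(\rho_\l)$ and $(\rho'_\l)$ together with boundedness of $(T_\l)$ then gives $(T_\l^*) \in \hom\bigl[(\rho'_\l),(\rho_\l)\bigr]$. The C$^*$-identity and positivity of $(T_\l)^*(T_\l)$ follow pointwise from the corresponding facts in $\cB$ together with the sup-norm $\|(T_\l)\| = \sup_\l \|T_\l\|$, which realizes each self-intertwiner space $\hom\bigl[(\rho_\l),(\rho_\l)\bigr]$ as a unital C$^*$-subalgebra of $C_b(\bR_+,\cB)$.

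For part (ii), I define $F$ on objects by $F((\rho_\l)) := \br$, as in the displayed formula preceding the theorem, and on arrows by $F((T_\l)) := T_0$, the class of $(T_\l)$ in $\cB_0$. Well-definedness on arrows is precisely the intertwining condition: the assumption $\|T_\l\rho_\l(A)-\rho'_\l(A)T_\l\| \to 0$ means this family lies in $C_0(\bR_+,\cB)$, so $T_0\br(A) = \br'(A)T_0$ in $\cB_0$. Linearity, functoriality $F(S\circ T) = F(S)F(T)$, and $*$-preservation $F(T^*) = F(T)^*$ are immediate from $\pi : C_b(\bR_+,\cB)\to\cB_0$ being a $*$-homomorphism.

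Surjectivity on objects and fullness both rely on picking any set-theoretic section $s : \cB_0 \to C_b(\bR_+,\cB)$ of the quotient map $\pi$. Given $\br \in \mathrm{Mor}(\cA,\cB_0)$, the assignment $\rho_\l(A) := s(\br(A))(\l)$ defines maps $\rho_\l : \cA \to \cB$; the function $\l\mapsto\rho_\l(A)$ lies in $C_b(\bR_+,\cB)$ by construction, and each of the three asymptotic algebraic relations follows because the corresponding error (e.g.\ $\rho_\l(AB)-\rho_\l(A)\rho_\l(B)$) projects under $\pi$ to $\br(AB)-\br(A)\br(B) = 0$, hence lies in $C_0(\bR_+,\cB)$ and tends to $0$ in norm. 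For fullness, given asymptotic morphisms $(\rho_\l),(\rho'_\l)$ with images $\br,\br'$ and any intertwiner $T_0 \in (\br,\br')$, the lift $(T_\l) := s(T_0)$ automatically satisfies $T_\l\rho_\l(A)-\rho'_\l(A)T_\l \in C_0(\bR_+,\cB)$ by the same projection argument, so $(T_\l) \in \hom\bigl[(\rho_\l),(\rho'_\l)\bigr]$ and $F((T_\l)) = T_0$. No real obstacle is expected; the most delicate points are the uniform-bound juggling in the composition and involution checks in part~(i).
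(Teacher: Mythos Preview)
Your proposal is correct and follows essentially the same approach as the paper, which simply asserts that the listed properties (i)--(v) are ``straightforward to verify'' and that surjectivity and fullness of $F$ are ``easily seen''; you have supplied exactly the standard details (the add-and-subtract trick for composition, the adjoint manipulation for the involution, and the section argument for surjectivity and fullness) that the paper leaves implicit.
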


In particular, there are natural surjective $*$-functors ${\mathcal E}(\cA,\cA) \to {\rm Mor}(\cA,\cA_0)$
and ${\mathcal E}(\cA_0,\cA) \to {\rm End}(\cA_0)$.

It seems interesting to study the existence of further remarkable structures of the C$^*$-categories $\cE(\cA,\cB)$, but this falls outside the scope of the present work.

\end{document}